\newcommand{\ketbra}[2]{\ket{#1}\bra{#2}}
\newcommand{\FF}{\mathbb{F}}
\DeclareMathOperator{\SUnitary}{\mathrm{SU}}
\providecommand{\implies}{\Rightarrow}
\newtheorem{theorem}{Theorem}
\newtheorem{definition}[theorem]{Definition}
\newtheorem{proposition}[theorem]{Proposition}
\newtheorem{observation}[theorem]{Observation}
\newtheorem{lemma}[theorem]{Lemma}
\newtheorem{corollary}[theorem]{Corollary}
\newtheorem{theorem*}{Theorem}
\newcommand{\inv}{\Im}
\newenvironment{notation}{\vspace{2mm}\noindent {\bf
    Notation. }}{\vspace{2mm}}
\newenvironment{proof}{{\bf
    Proof.}}{$\blacksquare$ \vspace{2mm}}
\newcommand{\bra}[1]{\left\langle{#1}\right\vert}
\newcommand{\ket}[1]{\left\vert{#1}\right\rangle}
\newcommand{\qw}[1][-1]{\ar @{-} [0,#1]}
\newcommand{\qwx}[1][-1]{\ar @{-} [#1,0]}
\newcommand{\gate}[1]{*+[F]{#1} \qw}
\newcommand{\control}{*-=-{\bullet}}
\newcommand{\ctrl}[1]{\control \qwx[#1] \qw}
\newcommand{\targ}{*{\xy{<0em,0em>*{} \ar @{ - } +<.4em,0em> \ar @{ - } -<.4em,0em> \ar @{ - } +<0em,.4em> \ar @{ - } -<0em,.4em>},*+<.8em>\frm{o}\endxy} \qw}
\newcommand{\multigate}[2]{*+{\hphantom{#2}} \qw \POS[0,0].[#1,0] !C *{#2} \POS[0,0].[#1,0] \drop\frm{-}}
\newcommand{\ghost}[1]{*+{\hphantom{#1}} \qw}
\newcommand{\gategroup}[6]{\POS"#1,#2"."#3,#2"."#1,#4"."#3,#4"!C*+<#5>\frm{#6}}
\newcommand{\lstick}[1]{*!R!<.5em,0em>=<0em>{#1}}
\newcommand{\Qcircuit}{\xymatrix @*=<0em>}
\newcommand{\controlu}{*-=[][F]{\phantom{\bullet}}}
\newcommand{\multistate}[2]{*+{\hphantom{#2}} \POS[0,0].[#1,0] !C
  *{#2} \POS[0,0].[#1,0] \drop\frm{}}
\newcommand{\ghoststate}[1]{*+{\hphantom{#1}} }
\newcommand{\ccteq}[1]{\multistate{#1}{=}}
\newcommand{\ccteqg}{\ghoststate{=}}
\newcommand{\csl}{{\ \;
    \backslash }}
\newcommand{\qc}{\Qcircuit @C=1em @R=.7em}
\newcommand{\Z}{{\ensuremath{\tt Z}}}
\newcommand{\X}{{\ensuremath{\tt X}}}
\newcommand{\Y}{{\ensuremath{\tt Y}}}
\newcommand{\C}{{\ensuremath{\tt C}}}
\newcommand{\HAD}{{\ensuremath{\tt H}}}
\newcommand{\NOT}{{\ensuremath{\tt NOT}}}
\newcommand{\CNOT}{{\ensuremath{\tt CNOT}}}
\newcommand{\CX}{{\ensuremath{\tt CX}}}
\newcommand{\CCX}{{\ensuremath{\tt CCX}}}
\newcommand{\CZ}{{\ensuremath{\tt CZ}}}
\newcommand{\CCZ}{{\ensuremath{\tt CCZ}}}
\newcommand{\PERES}{{\ensuremath{\tt PERES}}}
\newcommand{\TOFFOLI}{{\ensuremath{\tt TOFFOLI}}}
\newcommand{\MARGOLUS}{{\ensuremath{\tt MARGOLUS}}}
\begin{document}

\title{On the \CNOT -cost of \TOFFOLI\ gates}

\author{
  Vivek V. Shende\thanks{Department of Mathematics,
    Princeton University, Princeton, NJ 08544.} \\
  {\small \tt vshende@princeton.edu}
  \and
  Igor L. Markov\thanks{
    Department of EECS, University of Michigan, Ann Arbor, MI 48109.
  }\\
  {\small \tt imarkov@eecs.umich.edu}
}

\date{}

\maketitle

\abstract{
  The three-input \TOFFOLI\ gate is the workhorse of circuit
  synthesis for classical logic operations on quantum data,
  e.g., reversible arithmetic circuits. In physical implementations,
  however, \TOFFOLI\ gates are decomposed into six \CNOT\ gates and
  several one-qubit gates.  Though this decomposition
  has been known for at least 10 years, we provide here
  the first demonstration of its \CNOT-optimality.

  We study three-qubit circuits which contain less than six
  \CNOT\ gates and implement a block-diagonal operator,
  then show that they implicitly describe the cosine-sine decomposition
  of a related operator.
  Leveraging the canonicity of such decompositions to limit
  one-qubit gates appearing in respective circuits,
  we prove that
   the $n$-qubit analogue of the \TOFFOLI\ requires at least $2n$ \CNOT\ gates.
  Additionally, our results offer a complete classification of three-qubit diagonal
  operators by their \CNOT -cost, which holds even
  if ancilla qubits are available.
}
\newpage
\tableofcontents

\newpage
\section{Introduction}

  The three-qubit
  \TOFFOLI\ gate appears in key quantum logic circuits, such as
  those for modular exponentiation.
  However, in physical implementations it must be decomposed into one-
  and two-qubit gates. Figure \ref{fig:6CNOTs} reproduces the textbook
  circuit from \cite{Nielsen:00} with six \CNOT\ gates,
  as well as Hadamard ($H$), $T = \exp{(i \pi \sigma_z /8)}$
  and $T^\dagger$ gates.

  \begin{figure}[h]
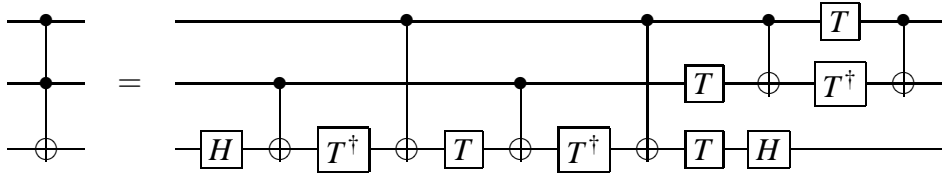

    \begin{center}
      \hspace{0.4cm}
      \qc @C=0.8em {
        & \control \qw & \qw & \ccteq{2} & & \qw & \qw & \qw
        & \control \qw & \qw &
        \qw & \qw & \control \qw & \qw & \control \qw & \gate{T} &
        \control \qw & \qw \\
        & \control \qw \qwx & \qw & \ccteqg & & \qw & \control \qw & \qw &
        \qw \qwx & \qw & \control \qw & \qw & \qw \qwx &
        \gate{T} & \targ \qwx & \gate{T^\dagger} & \targ \qwx  & \qw \\
        & \targ \qw \qwx & \qw & \ccteqg & & \gate{H} & \targ \qwx &
        \gate{T^\dagger} & \targ \qwx & \gate{T} & \targ \qwx &
        \gate{T^\dagger} & \targ \qwx & \gate{T} & \gate{H} & \qw & \qw &
        \qw }
      \caption{\label{fig:6CNOTs}
        Decomposing the \TOFFOLI\ gate into one-qubit and six \CNOT\
        gates.}
    \end{center}
  \end{figure}

  The pursuit of efficient circuits for standard
  gates has a long and rich history.
  DiVincenzo and Smolin found numerical evidence \cite{DiVincenzo:94}
  that five two-qubit gates are
  necessary and sufficient to implement the \TOFFOLI.
  Margolus showed that a phase-modified
  \TOFFOLI\ gate admits a three-\CNOT\ implementation
  \cite{Margolus:94,DiVincenzo:98},
  whose optimality was eventually demonstrated by Song and
  Klappenecker \cite{Song:margolus:03}.
  Unfortunately, this \MARGOLUS\ gate can
  replace \TOFFOLI\ only in rare cases.
  The detailed case analysis used in the optimality proof
  from \cite{Song:margolus:03} does not extend easily
  to circuits with four or five \CNOT s.
  The omnibus Barenco et al. paper offers circuits for many standard
  gates, including an eight-\CNOT\ circuit
  for the \TOFFOLI\
  \cite[Corollary 6.2]{Barenco:elementary:95}, as well as
  a six-\CNOT\ circuit for the controlled-controlled-$\sigma_z$,
  which differs from the \TOFFOLI\ only by one-qubit
  operators \cite[Section 7]{Barenco:elementary:95}.
  Problem 4.4b of the textbook by Nielsen and Chuang asks whether
  the circuit of Figure \ref{fig:6CNOTs}
  could be improved. The problem was marked as unsolved, and we report
  the following progress.

  \begin{theorem}
    A circuit consisting of \CNOT\ gates and one-qubit gates which implements
    the $n$-qubit \TOFFOLI\ gate without ancillae
    requires at least $2n$ \CNOT\ gates.
    For $n=3$, this bound holds even when ancillae are permitted, and
    is achieved by the circuit of Figure \ref{fig:6CNOTs}.
  \end{theorem}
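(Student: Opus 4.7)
The plan is to view the $n$-qubit \TOFFOLI\ as block-diagonal with respect to the top control qubit: one block is the identity on $2^{n-1}$ dimensions, and the other is the $(n-1)$-qubit \TOFFOLI. I would then argue by induction on $n \geq 3$ that this recursive structure forces two additional CNOTs per added qubit, with the base case $n=3$ supplied directly by the three-qubit analysis announced in the abstract.

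The key enabling tool is that any circuit with fewer than $2n$ CNOTs implementing a block-diagonal unitary implicitly realizes the cosine-sine decomposition (CSD) of a related operator, and this decomposition is canonical. I would leverage this to establish that in any circuit $C$ purportedly implementing the $n$-qubit \TOFFOLI, at least two CNOTs must interact with the top wire. Since the two blocks differ sharply --- identity versus a nontrivial $(n-1)$-qubit \TOFFOLI\ --- CSD canonicity forbids fewer than two such CNOTs; no combination of one-qubit gates alone can reproduce this block asymmetry. I expect this to be the main technical obstacle, as one must carefully rule out every commutation and cancellation move that might otherwise hide CNOTs touching the top wire.

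With two CNOTs on the top wire identified, I would peel them off by commuting and grouping one-qubit gates so that the top wire, together with those two CNOTs, is separated from the rest of the circuit. What remains is a residual $(n-1)$-qubit circuit implementing the $(n-1)$-qubit \TOFFOLI\ with at least two fewer CNOTs than $C$. The inductive hypothesis that the $(n-1)$-qubit \TOFFOLI\ requires at least $2(n-1)$ CNOTs then yields $|C| \geq 2 + 2(n-1) = 2n$, closing the induction.

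For the ancilla bound at $n=3$, the plan is to appeal to the classification of three-qubit \emph{diagonal} operators by CNOT-cost, which the abstract states holds even with ancillae. Since \TOFFOLI\ differs from the controlled-controlled-$Z$ only by Hadamards on the target qubit, any circuit for \TOFFOLI\ converts into a circuit for $\CCZ$ with the same CNOT count, and the diagonal classification then forces at least six CNOTs regardless of how many auxiliary wires are introduced.
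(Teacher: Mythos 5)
There are two genuine gaps, and together they leave the main content of the theorem unproven. First, your base case is not a proof: you ``supply'' the $n=3$ case by citing the three-qubit analysis announced in the abstract, but that six-\CNOT\ lower bound \emph{is} the heart of this theorem, and the paper spends its central sections establishing it --- the mux-spectrum invariant and Theorem \ref{thm:ccount} give $|\CCZ|_{\CZ;\ell}\ge 3$ for every qubit (hence at least five \CZ s), and ruling out a five-\CZ\ circuit further requires the structural Theorem \ref{thm:threec} together with the partial-determinant obstruction of Proposition \ref{prop:noninvariance}. Citing the announced result is circular. The same applies to your ancilla claim for $n=3$: the classification of three-qubit diagonals with ancillae is itself a result the paper must prove (Lemma \ref{lem:czanc} and the case analysis of Section \ref{sec:ancilla}), not a black box you may invoke; in particular Theorem \ref{thm:ccount} and Proposition \ref{prop:noninvariance} are sensitive to the total number of qubits, which is exactly why the ancilla case needs separate care.

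Second, the ``peeling'' step in your induction does not work as stated. The \CNOT s incident on the top wire also act on lower wires, so deleting the top wire together with those \CNOT s does not leave a circuit computing the $(n-1)$-qubit block. The correct move is to restrict the top qubit to $\ket{1}$, which turns a \CNOT\ controlled there into an \X\ below; but this restriction is only well defined gate-by-gate when \emph{every} gate touching the top wire commutes with \Z\ on it, and generic one-qubit gates (e.g., the Hadamards in Figure \ref{fig:6CNOTs}) do not --- no amount of commuting and regrouping removes them. Supplying exactly this diagonality is the role of the paper's Theorem \ref{thm:threec}: in a $\CZ^{(\ell)}$-minimal circuit where a qubit meets exactly three \CZ s, all one-qubit gates on it may be taken diagonal. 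Accordingly, the paper's induction (Corollary \ref{cor:multicz}) works in the \CZ\ picture, shows every qubit meets at least three \CZ s (not two), selects a qubit meeting exactly three, diagonalizes its one-qubit gates, restricts it to $\ket{1}$, and removes three \CZ s per step. Your CSD-canonicity intuition is the right engine --- it underlies Section \ref{sec:avsc} --- but as proposed it yields neither the diagonality needed to peel a wire nor the $n=3$ base case, so the argument does not close.
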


  Our main tool is the Cartan decomposition in its ``KAK'' form, which
  provides a Lie-theoretic generalization of the
  singular-value decomposition \cite{Knapp:98}.
  Several special cases have previously proven useful for
  the synthesis and analysis of quantum circuits, notably
  the two-qubit {\em magic decomposition}
  \cite{Lewenstein:00,Makhlin:00,Zhang:2qgeom:02,Vidal:2q:04,Vatan:2q:04,Shende:2q:04,Shende:cnotcount:04},
  {\em the cosine-sine decomposition}
  \cite{Khaneja:01, Bullock:kgd, Mottonen:csd:04,
    Shende:qsd:05}, and {\em the demultiplexing decomposition}
  \cite{Shende:qsd:05}.
  The canonicity of the two-qubit canonical decomposition was used
  previously to perform \CNOT-counting for two-qubit operators
  \cite{Shende:2q:04}.
  The magic decomposition is a two-qubit phenomenon,\footnote{
    While the Cartan decomposition
    $\mathrm{SU}(n) = \mathrm{SO}(n)\cdot \mathrm{[diagonals]}\cdot \mathrm{SO}(n)$
    is general, the utility of the magic decomposition
    arises from the isomorphism
    $\mathrm{SU}(2) \times \mathrm{SU}(2) \simeq \mathrm{SO}(4)$
    being represented
    as an inner automorphism of $\mathrm{SU}(4)$.
    Such coincidental isomorphisms are few and confined
    to low dimensions.
  }
  but the cosine-sine and demultiplexing decompositions hold
  for $n$-qubit operators
  and enjoy similar canonicity.
  Moreover, the components of these decompositions are {\em multiplexors}
  \cite{Shende:qsd:05} --- block-diagonal operators
  that commute with many common circuit elements.
  Commutation properties facilitate circuit restructuring that can dramatically
  reduce the number of circuit topologies to be considered in proofs.
  These results and observations allow us to perform \CNOT-counting
  using the Cartan decomposition in a divide-and-conquer manner.

  In the remaining part of this paper, we first review basic
  properties of quantum gates in Section \ref{sec:basic} and make
  several elementary simplifications to reduce the complexity
  of the subsequent case analysis.
  In particular, we pass from the \CNOT\ and \TOFFOLI\
  gates to the symmetric, diagonal \CZ\ and \CCZ\ gates, and
  recall circuit decompositions which yield operators
  commuting with \Z\ and \CZ\ gates.  We also define
  qubit-local \CZ-costs, and observe that the total \CZ-cost
  can be lower-bounded by half the sum of
  the local \CZ\ counts for each qubit.  Though weak,
  this bound suffices for our purposes and we can
  compute it in simple cases. Further technique is developped in
    Section \ref{sec:snk}, where we compute matrix
    entries to derive
  constraints on gates from circuit equations.  This
    approach was employed by Song and
    Klappenecker in the two-qubit case,
    and we generalize several of their
    results to $n$-qubit circuits.

  Section \ref{sec:toffoli} is the heart of the present work,
  in which we prove our result on the \CNOT-cost of the \TOFFOLI\ gate.
  It starts by motivating and outlining the methods involved,
  previews key intermediate results, and proves that the \CNOT-cost
  of the \TOFFOLI\ is 6, based on these results. In Section
  \ref{sec:avsc}, we use the canonicity of the cosine-sine
  decomposition derive circuit constraints.
  Section \ref{sec:invariants}, motivated by \cite{Shende:cnotcount:04},
  employs the canonicity of the demultiplexing decomposition, captured
  by a spectral invariant, to lower-bound \CZ\ gates required in circuit
  implementations of operators.
  The results apply, {\em mutatis mutandis}, to \CNOT-based
  implementations as well.  Finally, in Section \ref{sec:corollaries},
  we deduce as corollaries that
  the three-qubit \PERES\ gate requires exactly 5 \CNOT s and the
  $n$-qubit \TOFFOLI\ gate requires at least $2n$.
  In Section \ref{sec:3diag}, we extend our techniques to all
  three-qubit diagonal operators,
  completely classifying them according to \CZ -cost.
  Generalizations to circuits with ancillae are obtained in
  Section \ref{sec:ancilla}.
  Concluding discussion can be found
  in Section \ref{sec:conclusion}.

  \section{Preliminaries}
  \label{sec:basic}

  We review notation and properties of useful quantum gates,
  then characterize operators that commute with Pauli-\Z\ gates
  on multiple qubits.  We then review circuit decompositions from
  \cite{Bullock:diag:04, Mottonen:csd:04, Shende:qsd:05}.
  Finally, we introduce terminology appropriate for quantifying
  gate costs of unitary operators
  in terms of the \CNOT\ and \CZ\, and state
  elementary but useful observations about these costs.

  \subsection{Notation and properties of standard quantum gates}

  We write $\X, \Y, \Z$ for the Pauli operators, and
  $\CX, \CCX$ for $\CNOT, \TOFFOLI$. Rotation gates  $\exp(i \Z \theta)$
  are denoted by $R_z(\theta)$, and we analogously use $R_x, R_y$.\footnote{
  We omit the factor of $\pm 1/2$ used by other authors.}
  We work throughout on some fixed number of qubits $N$.
  For a one-qubit gate $g$ and a qubit $q$,
  we denote by $g^{(q)}$ the $N$-qubit operator implemented
  by applying the gate $g$ on qubit $q$.  Similarly,
  $\C^{(i)}\X^{(j)}$ is the operator implemented by a
  controlled-$\X$ with the control on qubit $i$ and target on qubit $j$.
  The controlled-$\Z$ being symmetric with respect to exchanging qubits,
  we do not distinguish control from target in the notation $\CZ^{(i,j)}$.
  We similarly denote the operator of
  a controlled-controlled-\Z\ on qubits $i,j,k$
  by $\CCZ^{(i,j,k)}$.
  {\em In choosing qubit labels, we follow throughout the convention that
  the high-to-low significance order of qubits is the same as the
  lexicographic order of their labels.}

  We follow the standard but sometimes confusing
  convention that {\em typeset operators act on vectors from the left},
  but {\em circuit diagrams process inputs from the right.}
  Consistently with the established notation for the \CNOT\ gate,
  we denote the $\X$ gate by ``$\oplus$'' in circuit diagrams.
  We denote the $\Z$ gate by a ``$\bullet$'' symbol, which does not
  lead to ambiguity in the matching notation for \CZ\ because
  \CZ\ is symmetric.
  Thus the following diagram expresses the identity
  $\CZ^{(\ell,m)} \X^{(\ell)} =  \Z^{(m)} \X^{(\ell)}
  \CZ^{(\ell,m)}$ and rearranges gates in quantum circuits,
  like de Morgan's law does in digital logic.

  \begin{equation}
    \label{circ:demorgan}
    \qc{
      \lstick{\ell} & \targ \qw & \control \qw & \qw
      & \ccteq{1} & & \control \qw & \targ & \qw \\
      \lstick{m}    &       \qw & \control \qw \qwx & \qw
      & \ccteqg   & &  \control \qw \qwx & \control \qw & \qw
    }
  \end{equation}

  Another standard identity relates the \X , \Z , and
  one-qubit
  ${\tt HADAMARD}$ (\HAD) gates:   $\HAD\X\HAD = \Z$.
  By case analysis on control qubits, one obtains the further identities
  $\HAD^{(i)} \C^{(j)}X^{(i)} \HAD^{(i)} = \CZ^{(i,j)}$
  and $\HAD^{(i)} \C \C^{(j,k)} \X^{(i)} \HAD^{(i)} = \CCZ^{(i,j,k)}$.
  Despite this equivalence, we prefer the \X\ family of gates
  for some applications
  and the \Z\ family for others, as summarized in Table
  \ref{tab:proscons}.

  Circuits consisting entirely of one-qubit gates and \CZ\ (respectively
  \CNOT ) gates will be called \CZ-circuits (respectively \CNOT-circuits).
  Using the above identities, \CZ-circuits and \CNOT-circuits
  can be interchanged at the cost of adding one-qubit \HAD\ gates.
  It will also be convenient to consider $\CZ^{(\ell)}$-circuits, which
  by definition are arbitrary circuits where all multi-qubit gates touching
  qubit $\ell$ are $\CZ$. While these are not a subclass of $\CZ$-circuits,
  a $\CZ^{(\ell)}$-circuit can be converted into a $\CZ$-circuit
  without any changes affecting qubit $\ell$.

  \begin{table}
    \begin{center}
{     \small
      \begin{tabular}{l|l|l}
        & \CNOT\ and \TOFFOLI\ & \CZ\ and \CCZ\ \\
        \hline
        Advantages & \multicolumn{2}{c}{With one-qubit gates added, either
          \CNOT\ or \CZ\
          would be universal} \\
        \cline{2-3}
        & Implement addition and multiplication & Symmetric \\
        & Universal for reversible computation & Fewer circuit topologies\\
        & Block-diagonal & Diagonal \\
        & With 1-qubit diagonals, implement any diagonal & ---    \\
        & Commute with \X\ on target & Commute with \Z\ on target\\
        \hline
        Other      & Change direction after two \HAD-conjugations & \\
        \cline{2-3}
        \ \ properties & \multicolumn{2}{c}{One can map back and forth by
          \HAD-conjugation on target} \\
        \hline
        Applications & \bf Circuit synthesis & \bf Circuit analysis \\
        \hline
      \end{tabular}
}
      \caption{
        \label{tab:proscons}
        Relative advantages of standard controlled gates.
      }
    \end{center}
  \end{table}

 \subsection{Operators commuting with \Z}
  We now recall terminology for operators commuting with \Z\
  on some qubits, but possibly not all qubits.
  Further background on the circuit theory of these
  {\em quantum multiplexors} can be found in \cite{Shende:qsd:05}.

  The control-on-box notation of the following diagram indicates
  that the operator $U$ commutes with $\Z^{(\ell)}$.
  The backslash on the bottom
  line indicates an arbitrary number of qubits (a multi-qubit bus).

  \[
  \qc{
    \lstick{\ell} & \controlu \qw & \qw \\
    \csl & \gate{U} \qwx & \qw
  }
  \]

  \noindent
  These operators include the commonly-used positively
  and negatively controlled-$U$ gates, although in our
  notation $U$ also acts on the control qubits (and is thus
  ``larger than the box in which it is contained'').
  In general, operators which commute with \Z\ are block-diagonal:

  \begin{observation}
    For a unitary operator $Q$ and qubit $\ell$, consider
    the one-qubit values $\ket{0}^{(\ell)}$ and $\ket{1}^{(\ell)}$
    on $\ell$-th input and output qubits of the operator.
   The following are equivalent.
    \begin{itemize}
      \item $Q$ commutes with $\Z^{(\ell)}$
      \item $\bra{0}^{(\ell)} Q \ket{1}^{(\ell)} = 0$
      \item $\bra{1}^{(\ell)} Q \ket{0}^{(\ell)} = 0$
      \item $Q$ admits a decomposition
        $Q  = \ket{0} \bra{0} \otimes Q_0 + \ket{1} \bra{1} \otimes Q_1$, where the
        projectors $\ket{i} \bra{i}$ operate on qubit $\ell$ and the unitary $Q_i$
        operate on the qubits other than $\ell$.
    \end{itemize}
  \end{observation}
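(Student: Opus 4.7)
The plan is to introduce the block-matrix representation of $Q$ with respect to the computational basis on qubit $\ell$, whereby all four conditions reduce to transparent statements about the four blocks. Explicitly, write
\[
Q = \begin{pmatrix} A & B \\ C & D \end{pmatrix}, \qquad A = \bra{0}^{(\ell)} Q \ket{0}^{(\ell)}, \; B = \bra{0}^{(\ell)} Q \ket{1}^{(\ell)}, \; \text{etc.},
\]
where $A,B,C,D$ act on the remaining $N-1$ qubits. In this form conditions (2), (3), (4) become $B = 0$, $C = 0$, and $B = C = 0$ respectively, while $\Z^{(\ell)}$ is represented by $\mathrm{diag}(I,-I)$.

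First I would establish (1) $\iff$ (4) by direct multiplication: $\Z^{(\ell)} Q$ negates the bottom row of blocks whereas $Q\Z^{(\ell)}$ negates the right column, so their equality is equivalent to $B = -B$ and $C = -C$, i.e.\ to $B = C = 0$. The implications (4) $\implies$ (2) and (4) $\implies$ (3) are then immediate, leaving only (2) $\implies$ (4) and the symmetric (3) $\implies$ (4).

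To prove (2) $\implies$ (4), assume $B = 0$. Unitarity of $Q$ then forces $C = 0$: the top-left block of $QQ^\dag = I$ gives $AA^\dag = I$, making $A$ unitary (a square matrix with a right inverse has a two-sided one), and the top-left block of $Q^\dag Q = I$ reads $A^\dag A + C^\dag C = I$, whence $C^\dag C = 0$ and $C = 0$. Setting $Q_0 = A$ and $Q_1 = D$, the remaining diagonal blocks of $QQ^\dag = I$ ensure that both $Q_0$ and $Q_1$ are unitary, yielding the decomposition in (4). The argument for (3) $\implies$ (4) is identical after exchanging rows with columns.

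No step here poses a real obstacle; the only subtlety is to invoke both $QQ^\dag = I$ and $Q^\dag Q = I$ in order to transfer the vanishing of one off-diagonal block to the other.
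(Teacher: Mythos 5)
Your proof is correct, and the unitarity step (using both $QQ^\dagger = I$ and $Q^\dagger Q = I$ to pass from the vanishing of one off-diagonal block to the other) is exactly the point that needs care. The paper states this as an Observation without proof, merely remarking that ``in an appropriate basis, the matrix of $Q$ is block-diagonal,'' and your block-matrix argument is precisely the standard verification implicit in that remark.
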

  In an appropriate basis, the matrix of $Q$ is block-diagonal.
  Its blocks represent the ``then'' and
  ``else'' branches of the quantum multiplexor $Q$ with select
  qubit $\ell$.

  \begin{notation}
    If $Q$ commutes with $\Z^{(\ell)}$ and $\ell$ is clear from
    context, we denote $Q$'s diagonal blocks $\bra{j}^{(\ell)} Q
    \ket{j}^{(\ell)}$ by $Q_j$.
    Similarly, if $Q$ commutes with with $\Z^{(\ell_i)}$ on multiple qubits
    $\ell_1 \ldots \ell_k$, then for any bitstring $j_1 \ldots j_k$
    we write $Q_{j_1 \ldots j_k}$ for
    $\bra{j_1 \ldots j_k}^{(\ell_i \ldots \ell_k)} Q
    \ket{j_1 \ldots j_k}^{(\ell_i \ldots \ell_k)}$.
    \end{notation}

      When the $\ell_k$ include all the qubits, $Q$ is diagonal
    and the $Q_{j_1 \ldots j_k}$ are its diagonal entries.
    In general, $Q_{j_1 \ldots j_k}$ capture diagonal blocks
    of $Q$ with respect to an ordering of computational-basis vectors
    in which qubits $\ell_1 \ldots \ell_k$ are the
    most significant qubits.

  We now point out the following commutability.

  \begin{observation}
    Let $Q, R$ be two gates such that
    for every qubit $\ell$, either one of them does not affect $\ell$,
    or both of them commute with $\Z^{(\ell)}$.  Then $QR = RQ$. In picture:
  \end{observation}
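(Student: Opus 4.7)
The plan is to partition the $N$ qubits into four classes according to which of $Q$ and $R$ affects them: let $A$ be the qubits affected by $Q$ alone, $B$ those affected by $R$ alone, $C$ those affected by both (on which, by hypothesis, both $Q$ and $R$ commute with $\Z^{(\ell)}$), and $D$ the remainder. Since $R$ acts as the identity on the factor $A$ and $Q$ acts as the identity on $B$, one may write $Q = Q' \otimes I_B$ and $R = I_A \otimes R'$, where $Q'$ is supported on $A \cup C$ and $R'$ on $B \cup C$. It therefore suffices to prove that $Q'$ and $R'$ commute, their shared support being exactly $C$.

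Next, I would iterate the fourth bullet of the preceding observation over each qubit $\ell \in C$ to obtain simultaneous block-diagonal decompositions
\[
  Q' = \sum_{j \in \{0,1\}^{|C|}} \ket{j}\bra{j} \otimes Q_j, \qquad R' = \sum_{j \in \{0,1\}^{|C|}} \ket{j}\bra{j} \otimes R_j,
\]
in which the projector $\ket{j}\bra{j}$ acts on the qubits of $C$, the operator $Q_j$ acts only on the qubits of $A$, and $R_j$ acts only on those of $B$. Since $A$ and $B$ are disjoint sets of qubits, $Q_j$ and $R_j$ commute for each $j$ as operators on the full space. Multiplying the two sums and collapsing $\ket{j}\bra{j}\cdot\ket{k}\bra{k} = \delta_{jk}\,\ket{j}\bra{j}$ then yields $Q'R' = \sum_j \ket{j}\bra{j} \otimes Q_j R_j = R'Q'$, from which $QR = RQ$ follows.

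The only subtlety is the passage from componentwise commutation with each individual $\Z^{(\ell)}$ to the joint block decomposition displayed above. This is routine because the operators $\Z^{(\ell)}$ for distinct $\ell$ themselves commute, so one can split off one qubit of $C$ at a time via the previous observation while preserving the block structure established at earlier steps. Accordingly, I do not expect any substantive obstacle: the present statement is essentially a convenient repackaging of the block-diagonal structure already established for multiplexors, together with the trivial fact that operators acting on disjoint tensor factors commute.
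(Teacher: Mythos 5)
Your proof is correct and is essentially the argument the paper intends: the paper leaves this observation unproved, relying on the block-diagonal (multiplexor) decomposition from the preceding observation, which is exactly what you iterate over the shared qubits before invoking commutation of operators on disjoint tensor factors. No gaps; the one subtlety you flag (simultaneous block-diagonalization over several select qubits) is handled correctly since the $\Z^{(\ell)}$ commute and each block inherits the remaining commutation relations.
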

    \[
    \qc{ \csl & \controlu \qw & \controlu \qw & \qw & \ccteq{2} &
      \csl & \controlu \qw & \controlu \qw & \qw \\
      \csl & \gate{Q} \qwx & \qw \qwx & \qw & \ccteqg &
      \csl & \qw \qwx      & \gate{Q} \qwx & \qw \\
      \csl & \qw & \gate{R} \qwx & \qw & \ccteqg & \csl & \gate{R}
      \qwx & \qw & \qw }
    \]

  We now recall the {\em multiplexed rotation}
  gates \cite{Mottonen:csd:04, Shende:qsd:05}, which generalize
  the $R_x, R_y, R_z$ gates.  Let $\Delta$ be a diagonal Hermitian matrix
  acting on the qubits $\ell_1, \ldots ,\ell_k$, and fix another
  qubit $m \ne \ell_i$.  We define the operator $R_z^{(m)}(\Delta)$
  on the qubits $\ell_1, \ldots, \ell_k,m$ by the conditions
  (1) that it commute with $\Z^{(\ell_i)}$ for all $i$, and
  (2) for any bitstring $j_1 \ldots j_k$, we have
  $R_z^{(m)}(\Delta)_{j_1 \ldots j_k} = R_z(\Delta_{\ell_1 \ldots \ell_k})$.
  Explicitly,
  $R_z^{(m)}(\Delta) = \exp(i \Z^{(m)} \Delta^{(\ell_1 \ldots \ell_k)})$.
  Multiplexed $R_x, R_y$ gates are
  defined similarly.  Since such operators commute with
  $\Z^{(\ell_i)}$, we depict them in circuit diagrams
  with the appropriate control-on-boxes.

  It is natural to ask when an operator commuting with various $\Z$
  gates can be implemented in a \CZ-circuit containing only
  gates commuting with the same $\Z$ gates.  The answer
  is given in terms of the {\em partial determinant}.

  \begin{definition}
    Fix qubits $\ell_1 \ldots \ell_k$.
    We define the partial determinant map
    $\det_{\ell_1 \ldots \ell_k}$ from the operators
    commuting with $\Z^{(\ell_1)}, \ldots, \Z^{(\ell_k)}$
    to the diagonal operators acting only on the qubits $\ell_i$.
    It is given by $(\det_{\ell_1 \ldots \ell_k}(U))_{j_1 \ldots j_k} =
    \det (U_{j_1 \ldots j_k})$.
  \end{definition}

  When computing partial determinants of a single gate or subcircuit acting on $m$ qubits,
  we first tensor respective operators with $I_{2^{N-m}}$ to form operators
  acting on all $N$ qubits (which may affect the determinants).
  When applied to such ``full'' operators, the partial determinant mapping
  is a group homomorphism.

  \begin{proposition} \label{prop:noninvariance}
    Fix qubits $\ell_1 \ldots \ell_k$ among $N > k$ qubits.
    A unitary $U$ commuting with $\Z^{(\ell_1)}, \ldots,
    \Z^{(\ell_k)}$
    can be implemented by a \CZ-circuit in which only diagonal
    gates operate on qubits $\ell_i$ if and only if
    $\det_{\ell_1 \ldots \ell_k}(U)$
    is separable (can be implemented by one-qubit gates).
  \end{proposition}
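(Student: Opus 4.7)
The proof has two implications to establish.

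For the forward direction, I would use that $\det_{\ell_1 \ldots \ell_k}$ is a group homomorphism on operators commuting with every $\Z^{(\ell_i)}$, together with the fact that separable diagonals form a subgroup of the diagonals on $\ell_1, \ldots, \ell_k$. It then suffices to check that each gate allowed in such a $\CZ$-circuit has separable partial determinant. The three gate types to check are a one-qubit diagonal $\mathrm{diag}(a,b)$ on some $\ell_i$, whose partial determinant is $\mathrm{diag}(a^{2^{N-k}}, b^{2^{N-k}})$ on $\ell_i$ tensored with identities; a one-qubit gate $g$ on a non-$\ell$ qubit, whose partial determinant is the scalar $(\det g)^{2^{N-k-1}}$ times the identity; and a $\CZ$ between any pair, whose partial determinant --- using the hypothesis $N > k$ so that the relevant powers of $2^{N-k-j}$ are even and the unwanted signs cancel --- is either the identity, a $\Z^{(\ell_i)}$ tensored with identities (when $N = k+1$), or a scalar. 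Each is separable, and products of separables are separable.

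For the reverse direction, I would first reduce to the case $\det_{\ell_1 \ldots \ell_k}(U) = I$. Writing $\det_{\ell_1 \ldots \ell_k}(U) = \bigotimes_i d_i$, pick one-qubit diagonals $d_i'$ on $\ell_i$ with $(d_i')^{2^{N-k}} = d_i^{-1}$ (take $2^{N-k}$-th roots), and set $U' = U \cdot \bigotimes_i (d_i')^{(\ell_i)}$. By multiplicativity of the partial determinant, $\det_{\ell_1 \ldots \ell_k}(U') = I$; since the $d_i'$ are themselves allowed gates, it suffices to implement $U'$.

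I would then implement $U'$ by induction on $k$. Assuming the result for $k-1$, demultiplex $U'$ with respect to $\ell_k$ via \cite{Shende:qsd:05}, yielding $U' = (I^{(\ell_k)} \otimes V)\, M\, (I^{(\ell_k)} \otimes W)$, where $V$ and $W$ act on the non-$\ell_k$ qubits and commute with $\Z^{(\ell_i)}$ for $i < k$, and $M$ is a multiplexed $R_z^{(\ell_k)}$ diagonal gate. The phase freedom in the eigendecomposition underlying this factorization lets me arrange $\det_{\ell_1 \ldots \ell_{k-1}}(V)$ and $\det_{\ell_1 \ldots \ell_{k-1}}(W)$ to be separable, so the inductive hypothesis applies to $V$ and $W$. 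For the diagonal $M$, expand in Pauli-$\Z$: multi-$\Z$ rotations $\exp(i \alpha_T \Z^{(\ell_k)} \prod_{t \in T} \Z^{(t)})$ whose support $T$ includes at least one non-$\ell$ qubit are realized by a CNOT-ladder targeting that qubit (available because $N > k$) plus a single $R_z$ rotation, all consistent with the diagonal-on-$\ell_i$ constraint. The remaining pure-$\ell$ rotations are the potential obstruction: the condition $\det_{\ell_1 \ldots \ell_k}(U') = I$, combined with the separability-preserving choices for $V$ and $W$, forces their coefficients (modulo $2\pi \ZZ$-equivalence) to lie in the lattice realizable by products of $\CZ$ gates between the $\ell_i$.

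The main obstacle is this last step --- tracking the partial determinant through the demultiplexing and verifying that the triviality of $\det_{\ell_1 \ldots \ell_k}(U')$ does discretize (or trivialize modulo $2\pi$) the pure-$\ell$ cross-term coefficients in the Pauli-$\Z$ expansion of $M$. This careful bookkeeping, matching the $2^k$ degrees of freedom in the partial determinant against the pure-$\ell$ Pauli coefficients of $\log M$ and their realizability lattice, is where the separability hypothesis does its essential work; the other steps are either formal or standard synthesis.
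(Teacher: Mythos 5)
Your forward direction is the same generator-by-generator computation the paper gives and is correct. The reverse direction --- which the paper defers to its appendix --- is where your plan fails, and the step you yourself flag as the ``main obstacle'' is not a bookkeeping issue: the claim you need there is false. After your normalization to $\det_{\ell_1\ldots\ell_k}(U')=I$, the residual pure-$\ell$ diagonal content does \emph{not} have to lie in the lattice generated by $\CZ$ gates and one-qubit diagonals on the $\ell_i$. Concretely, take $N=4$, $k=2$, and $D = \mathrm{diag}(1,1,-i,1)$ acting on $\ell_1,\ell_2$, tensored with $I_4$ on the two remaining qubits. Then $\det_{\ell_1\ell_2}(D) = \mathrm{diag}(1,1,(-i)^4,1) = I$, so $D$ survives your reduction and, by the proposition, is implementable with only diagonal gates touching $\ell_1,\ell_2$. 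But $D_{00}D_{11}/(D_{01}D_{10}) = i$, whereas this multiplicative invariant equals $\pm 1$ on every product of $\CZ^{(\ell_1,\ell_2)}$ gates and one-qubit diagonals on $\ell_1,\ell_2$; equivalently, the $\Z^{(\ell_1)}\Z^{(\ell_2)}$ coefficient of any branch of $\log D$ lies in $\pi/8 + (\pi/2)\ZZ$, disjoint from the $(\pi/4)\ZZ$ realized by $\CZ$. Demultiplexing this $D$ with respect to $\ell_2$ produces exactly a pure-$\ell$ multiplexed $R_z$ with cross coefficient $\pi/8$ and no non-$\ell$ qubit in its support on which to hang a CNOT ladder, so your induction halts on an operator that the proposition asserts is implementable.

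The resolution, and the paper's actual argument, is that the pure-$\ell$ phases must be routed through the non-$\ell$ qubits rather than synthesized among the $\ell_i$: divide $U$ by a separable $2^{N-k}$-th root of $\det_{\ell_1\ldots\ell_k}(U)$ so that every block $\tilde U_{j_1\ldots j_k}$ lies in $\SUnitary(2^{N-k})$, fix one universal parametrized circuit for $\SUnitary(2^{N-k})$ on the non-$\ell$ qubits, and multiplex every one of its rotations over $\ell_1,\ldots,\ell_k$; each multiplexed rotation is then implemented with $\CZ$s incident on the $\ell_i$ and one-qubit gates only on its (non-$\ell$) target. In the example above, $-iI_4 = e^{i(\pi/2)\Z^{(3)}}e^{i(\pi/2)\Z^{(4)}}e^{i(\pi/2)\Z^{(3)}\Z^{(4)}}$, and multiplexing these three rotations over $\ell_1,\ell_2$ does the job. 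This global construction is where the hypothesis $N>k$ and the separability of the partial determinant do their real work, and it bypasses your demultiplexing induction entirely; as written, your proof of the reverse implication cannot be completed along the lines you propose.
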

  \begin{proof}
    ($\Rightarrow$).
    It suffices to show the separability of
    $\det_{\ell_1 \ldots \ell_k}(U)$ for a generating set
    of operators.  By definition, such a generating set is provided by \CZ s,
    one-qubit diagonals on the $\ell_i$, and gates not affecting any of the $\ell_i$.

    Note first that any diagonal gate $D$ acting on qubits
    $\ell_1, \ldots, \ell_k$ has partial determinant given by
    $\det_{\ell_1 \ldots \ell_k}(D) = D^{2^{N-k}}$, understood as an operator
    on qubits $\ell_1 \ldots \ell_k$.  In particular, if $D$ were
    separable, then so is $\det_{\ell_1 \ldots \ell_k}(D)$.  If
    $D = \CZ^{(\ell_i, \ell_j)}$, then from $\CZ^2 = I$ and
    $N > k$ we deduce $\det_{\ell_1 \ldots \ell_k}(\CZ^{(\ell_i, \ell_j)}) = I$.
    The remaining gates we need to consider are:

    \noindent {\bf (i)} any gate not affecting qubits $\ell_i$
    implements $U= Q^{(1..N)\setminus(\ell_1 \ldots \ell_k)}$
    for some $Q$. \\
    In this case $U_{j_1 \ldots j_k}=Q$, and furthermore
    $\det_{\ell_1 \ldots \ell_k}(U)=\det(Q)I$.

    \noindent {\bf (ii)} \CZ\ gates connecting qubits $\ell_i, m \notin
    \{\ell_1, \ldots, \ell_k\}$.  We compute
    $\det_{\ell_1 \ldots \ell_k}(\CZ^{(\ell_i, m)}) =
     (\Z^{(\ell_i)})^{2^{N-k - 1}}$.

    ($\Leftarrow$). This part of the result is not used in the rest
    of the paper, and we therefore defer the proof to the
    Appendix.
  \end{proof}

  \subsection{Cartan decompositions in quantum logic}
  \label{sec:cartan}

  This section recalls two important operator decompositions
  ({\em cosine-sine} and {\em demultiplexing}) and casts them
  as circuit decompositions. Readers willing to accept their use
  in our proofs may skip to Section \ref{sec:CZcounting}.

  Observe that
  an operator can be implemented with a single one-qubit
  gate if and only if it commutes with the Pauli operators
  \Z\ and \X\ on all other
  qubits.  Thus to produce a \CNOT- circuit for a given
  operator $U$, one may use the following algorithmic framework.
  \begin{enumerate}
  \item Decompose $U$ into a circuit in which each
    non-\CNOT\ gate, $V, W, \ldots$, commutes with \X\ and \Z\ on more
    qubits that $U$ does.
  \item Apply the algorithm recursively to $V, W, \ldots$
        until one-qubit gates are reached.
  \end{enumerate}

  As \Z\ is self-adjoint, the
  requirement that $U$ commutes with $\Z^{(i)}$ can be rephrased
  as the condition that $U$ is fixed under the involution
  $U \mapsto \Z^{(i)} U \Z^{(i)}$.  Given such an involution,
  a fundamental Lie-theoretic result produces an operator decomposition
  \cite{Knapp:98}.
  Here we recite the result for completeness, but do not require
  the reader to understand all terminology.

  \vspace{4mm} \noindent {\bf The Cartan Decomposition.} Let $G$ be a
  reductive Lie group, and $\iota:G \to G$ an
  involution. Let $K = \{g:\iota(g)=g\}$ and $A$ be maximal over
  subgroups contained in $\{g:\iota(g)=g^{-1}\}$.
  Then $K$ is reductive, $A$ is
  abelian, and $G = KAK$. \vspace{4mm}

  In order to restate decompositions of unitary operators
  as circuit decompositions, we employ the notation of
  {\em set-valued} quantum gates \cite{Shende:qsd:05}.  Completely
  unlabelled gates (as in Equation \ref{eq:demux:diag}) denote the set
  of all gates satisfying all control-on-box commutativity conditions
  imposed by the diagram, and gates labelled $R_x, R_y, R_z$ denote
  the appropriate set of (possibly multiplexed) rotations.
  An equivalence of circuits with
  set-valued gates means that if we pick an element from each set on
  one side, there is a way to choose elements on the other so that the
  two circuits compute the same operator.  The backslashed wires
  which usually indicate multiple qubits may also carry {\em zero} qubits.

  The involution $\phi_Z: U \mapsto \Z^{(\ell)} U \Z^{(\ell)}$ corresponds
  to the {\em cosine-sine decomposition}.\footnote{The terminology
    comes from the numerical linear algebra literature; see
    \cite{Paige:94} and references therein.}

  \begin{equation} \label{eq:csd}
    \qc {
      \csl & \controlu \qw & \qw & \ccteq{2} & & \controlu \qw & \controlu \qw
      & \controlu \qw & \qw \\
      \csl & \multigate{1}{\phantom{U}} \qwx & \qw & \ccteqg & &
      \gate{\phantom{U}} \qwx & \controlu \qw \qwx & \gate{\phantom{U}} \qwx
      & \qw \\
      \lstick{\ell} & \ghost{U} & \qw & \ccteqg & & \controlu \qw \qwx
      & \gate{R_y} \qwx & \controlu \qw \qwx & \qw \\
    }
  \end{equation}

  The involution $\phi_Y: U \mapsto \Y^{(\ell)} U \Y^{(\ell)}$ yields
  the {\em demultiplexing decomposition} \cite{Shende:qsd:05}.

  \begin{equation}    \label{eq:demux}
    \qc{
      \lstick{\ell} & \controlu \qw & \qw & \ccteq{2} &
      & \qw & \gate{R_z} & \qw & \qw \\
      \csl & \controlu \qw \qwx & \qw & \ccteqg &
      & \controlu \qw  & \controlu \qw \qwx  & \controlu \qw
      & \qw \\
      \csl & \gate{\phantom{U}} \qwx
      & \qw & \ccteqg   & & \gate{\phantom{U}} \qwx &
      \controlu \qw \qwx & \gate{\phantom{U}} \qwx & \qw \\
    }
  \end{equation}

  The map $\phi_Y$ restricts to the subgroup of
  diagonal operators.  This group being
  abelian, the $K$ and $A$ factors commute, leaving
  the following decomposition of diagonal operators.

  \begin{equation} \label{eq:demux:diag}
    \qc {
      \lstick{\ell} & \controlu \qw & \qw & \ccteq{1} &
      & \qw & \gate{R_z} & \qw \\
      \csl & \controlu \qw \qwx & \qw & \ccteqg &
      & \controlu \qw & \controlu \qwx \qw & \qw \\
    }
  \end{equation}

  The involution $\phi_Y$ further restricts to the
  subgroup of multiplexed $\Z$ rotations, which
  we can demultiplex again.
  The $K$ and $A$ factors again commute; the $A$ factor
  is computed by the last 3 gates in the circuit below.

  \begin{equation} \label{eq:demux:rz}
    \qc{
      \lstick{\ell} & \controlu \qw & \qw & \ccteq{2} &
      & \qw & \control \qw & \qw & \control \qw
      & \qw \\
      \csl & \controlu \qw \qwx & \qw & \ccteqg & &
      \controlu \qw & \qw \qwx & \controlu \qw & \qw \qwx &
      \qw \\
      & \gate{R_z} \qwx{2} \qw & \qw & \ccteqg & &
      \gate{R_z} \qwx & \targ \qwx{2} &
      \gate{R_z} \qwx & \targ \qwx{2} & \qw
    }
  \end{equation}

  To establish the existence of these decompositions, it remains
  to verify in each case that the purported $K$ and $A$ satisfy
  the appropriate properties with respect to the relevant
  involution.  This can be checked after passing to the
  Lie algebra where it is easy.
  Alternatively, explicit constructions of the cosine-sine and demultiplexing
  decompositions are given in \cite{Paige:94} and \cite{Shende:qsd:05},
  respectively.

  To decompose general $n$-qubit operators, Equation \ref{eq:csd}
  can be applied iteratively until all remaining gates are either
  multiplexed $R_y$ gates or diagonal.  The $R_y$ gates
  can be replaced by $R_z$ gates at the cost of introducing
  some one-qubit operators; the $R_z$ and other diagonal gates
  can be decomposed as described above; for details
  and optimizations see \cite{Mottonen:csd:04}.
  Smaller circuits are obtained by another algorithm,
  which alternates cosine-sine decompositions with
  demultiplexing decompositions; for details
  and optimizations, see \cite{Shende:qsd:05}.

  When circuit decompositions are applied recursively,
  some gates can be reduced by local circuit transformations.
  For example, when iteratively demultiplexing multiplexed $R_z$ gates,
  some \CNOT s may be cancelled as shown below.

  \[
    \qc {
       & \qw & \ctrl{3} \qw & \qw &
       \ctrl{3} \qw & \qw & \ccteq{3} & & \qw & \qw & \qw & \qw & \ctrl{3}
       \qw & \qw & \qw & \qw & \qw & \ctrl{3} \qw & \qw \\
       & \controlu \qw & \qw &
       \controlu \qw & \qw & \qw & \ccteqg  & & \qw & \ctrl{2} \qw & \qw
       & \ctrl{2} \qw & \qw & \ctrl{2} \qw & \qw & \ctrl{2} \qw & \qw
       & \qw & \qw \\
       \csl & \controlu \qw \qwx & \qw &
       \controlu \qw \qwx & \qw & \qw & \ccteqg & & \controlu \qw & \qw &
       \controlu \qw & \qw & \qw & \qw & \controlu \qw & \qw &
       \controlu \qw & \qw & \qw \\
       & \gate{R_z} \qw \qwx & \targ &
       \gate{R_z} \qw \qwx & \targ & \qw & \ccteqg & & \gate{R_z} \qw \qwx
       & \targ &
       \gate{R_z} \qw \qwx & \targ & \targ & \targ & \gate{R_z} \qw \qwx
       & \targ &
       \gate{R_z} \qw \qwx & \targ & \qw \gategroup{2}{12}{4}{12}{1em}{--}
       \gategroup{2}{14}{4}{14}{1em}{--}
    }
  \]

  This technique produces a circuit with $2^n$ \CNOT\ gates for
  an $n$-ply multiplexed $R_z$ gate.  Using Equation \ref{eq:demux:diag},
  we obtain a circuit with $2^n - 2$ \CNOT\ gates for an arbitrary
  $n$-qubit diagonal operator \cite{Bullock:diag:04}.
  Applying this result to \CCZ\ gate leads to the circuit in Figure
  \ref{fig:6CNOTs}.

  \subsection{Basic facts about \CZ-counting}
  \label{sec:CZcounting}

  The \CZ-cost $|U|_{\CZ}$ of an $N$-qubit operator $U$
  is the minimum number of \CZ s which appear
  in any $N$-qubit \CZ-circuit for $U$; we define
  the \CNOT-cost analogously.  The identity
  $\HAD^{(i)} \C^{(j)}X^{(i)} \HAD^{(i)} = \CZ^{(i,j)}$
  ensures that $|U|_{\CZ} = |U|_{\CNOT}$.
  The further identity
  $\HAD^{(i)} \C \C^{(j,k)} \X^{(i)} \HAD^{(i)} = \CCZ^{(i,j,k)}$
  yields:

  \begin{observation} $|\CCZ|_{\CZ} = |\CCX|_{\CNOT} \le 6$.
  \end{observation}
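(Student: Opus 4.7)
The plan is to assemble the observation from three ingredients, all of which have been set up in the preceding text: the CZ--CNOT interconversion identity, the CCZ--CCX interconversion identity, and the explicit six-CNOT circuit in Figure \ref{fig:6CNOTs}.

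First I would establish the equality $|\CCZ|_{\CZ} = |\CCX|_{\CNOT}$ in two stages. For $|\CCZ|_{\CZ} = |\CCZ|_{\CNOT}$, I would argue that the identity $\HAD^{(i)} \C^{(j)}\X^{(i)} \HAD^{(i)} = \CZ^{(i,j)}$ lets us replace every \CZ\ in a \CZ-circuit by a \CNOT\ conjugated by two Hadamards (absorbed into one-qubit gates), and symmetrically every \CNOT\ by a \CZ\ conjugated by Hadamards. Since both substitutions preserve the count of multi-qubit gates, the two costs coincide for any operator, in particular for \CCZ. For $|\CCZ|_{\CNOT} = |\CCX|_{\CNOT}$, I would invoke the identity $\HAD^{(i)} \CCX^{(j,k \to i)} \HAD^{(i)} = \CCZ^{(i,j,k)}$: any \CNOT-circuit for one can be turned into a \CNOT-circuit for the other by prepending and appending a single Hadamard on qubit $i$, again without altering the \CNOT\ count.

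Second, the upper bound $|\CCX|_{\CNOT} \le 6$ is witnessed directly by Figure \ref{fig:6CNOTs}, which contains exactly six \CNOT\ gates along with one-qubit gates (Hadamards, $T$, $T^\dag$) implementing the TOFFOLI. Combining the two steps gives the statement.

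There is no real obstacle here: the observation is essentially a bookkeeping consequence of identities already recorded in the text, and the circuit of Figure \ref{fig:6CNOTs} supplies the quantitative bound. The nontrivial content of the paper lies entirely on the matching lower bound, which is deferred to later sections; here the task is only to note that the equality $|\CCZ|_{\CZ} = |\CCX|_{\CNOT}$ means any lower bound proved for \CZ-circuits implementing \CCZ\ automatically transfers to \CNOT-circuits implementing the TOFFOLI.
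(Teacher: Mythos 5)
Your argument is correct and is essentially the paper's own: the paper derives the observation from the same two Hadamard-conjugation identities ($\HAD^{(i)} \C^{(j)}\X^{(i)} \HAD^{(i)} = \CZ^{(i,j)}$ giving $|U|_{\CZ}=|U|_{\CNOT}$, and $\HAD^{(i)} \C\C^{(j,k)}\X^{(i)}\HAD^{(i)} = \CCZ^{(i,j,k)}$), together with the six-\CNOT\ circuit of Figure \ref{fig:6CNOTs} as the witness for the upper bound. Nothing is missing; the paper merely adds Equation \ref{circ:ccz} as an illustrative \CZ-form of that circuit.
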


  By way of illustration, the following modification of the
  circuit in Figure \ref{fig:6CNOTs} implements the \CCZ\ in terms of \CZ s.

  \begin{equation}
    \label{circ:ccz}
    \qc @C=0.8em{
      &
      \control \qw & \qw & \ccteq{2} & & \qw & \qw & \qw & \control \qw & \qw &
      \qw & \qw & \control \qw & \qw & \control \qw & \gate{T} &
      \control \qw & \qw & \qw \\
      &
      \control \qw \qwx & \qw & \ccteqg & & \qw & \control \qw & \qw &
      \qw \qwx & \qw & \control \qw & \qw & \qw \qwx &
      \gate{TH} & \control \qw \qwx & \gate{HT^\dagger H} & \control \qw \qwx
      & \gate{H} & \qw \\
      &
      \control \qw \qwx & \qw & \ccteqg & & \gate{H} & \control \qw \qwx &
      \gate{HT^\dagger H} & \control \qw \qwx & \gate{HTH} & \control \qw \qwx &
      \gate{HT^\dagger H} & \control \qw \qwx & \gate{HTH} & \gate{H}
      & \qw & \qw &
      \qw & \qw }
  \end{equation}

  It shall prove more convenient to compute $|\CCZ|_{\CZ}$ rather than
  $|\CCZ|_{\CNOT}$.
  To do so, we are going to study the number of \CZ s which must touch
  a given qubit in any \CZ-circuit for a given operator.
  More precisely,  the
  $\CZ^{(\ell)}$-cost $|U|_{\CZ;\ell}$ is the minimum
  number of \CZ\ gates incident on $\ell$ in any $\CZ^{(\ell)}$-circuit
  for $U$. These cost functions are related through the following
  estimate.\footnote{
    This bound is very weak in general.  Dimension-counting
  shows that a generic $N$-qubit operator $U$ requires
  on the order of $4^N$ \CZ\ gates \cite{Knill:95}, whereas
  the results of \cite{Shende:qsd:05} imply that $|U|_{\CZ;\ell} < 6N$.
  At best we can establish that $|U|_{\CZ} \ge N(6N-1)$.}
  \begin{observation}
    For any operator $P$,
    \[|P|_{\CZ} \ge \frac{1}{2} \sum_{j} |P|_{\CZ;j}\]
  \end{observation}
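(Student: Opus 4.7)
The plan is to use a double counting argument on an optimal \CZ-circuit for $P$. The key observation is that every \CZ\ gate is incident on exactly two qubits, so summing the per-qubit incidences across all qubits gives twice the total number of \CZ\ gates.

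First, I would fix an optimal \CZ-circuit $\mathcal{C}$ implementing $P$, so $\mathcal{C}$ contains exactly $|P|_{\CZ}$ \CZ\ gates. For each qubit $j$, let $c_j(\mathcal{C})$ denote the number of \CZ\ gates in $\mathcal{C}$ incident on qubit $j$. Since each \CZ\ gate is a two-qubit gate, the incidence count satisfies
\[
\sum_{j} c_j(\mathcal{C}) \;=\; 2 \cdot (\text{number of \CZ\ gates in }\mathcal{C}) \;=\; 2\, |P|_{\CZ}.
\]

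Next, I would observe that $\mathcal{C}$ is in particular a $\CZ^{(j)}$-circuit for every qubit $j$: indeed, by definition a \CZ-circuit uses only one-qubit gates and \CZ\ gates as its multi-qubit elements, so \emph{all} multi-qubit gates touching qubit $j$ are \CZ\ gates, which is exactly the defining property of a $\CZ^{(j)}$-circuit. Therefore the number of \CZ\ gates incident on $j$ in $\mathcal{C}$ is at least the minimum such number over all $\CZ^{(j)}$-circuits for $P$, i.e., $c_j(\mathcal{C}) \ge |P|_{\CZ;j}$. Summing over $j$ and combining with the previous identity yields
\[
2\, |P|_{\CZ} \;=\; \sum_{j} c_j(\mathcal{C}) \;\ge\; \sum_{j} |P|_{\CZ;j},
\]
which rearranges to the claimed bound.

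There is no real obstacle here; the statement is essentially a handshake lemma applied to the bipartite incidence structure between \CZ\ gates and qubits, combined with the trivial containment of \CZ-circuits inside the class of $\CZ^{(j)}$-circuits. The only thing worth being slightly careful about is the definitional point that a \CZ-circuit automatically qualifies as a $\CZ^{(j)}$-circuit for every $j$, which is immediate from the definitions given earlier in the paper.
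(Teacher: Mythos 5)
Your argument is correct and is exactly the paper's proof, which the paper compresses into the single sentence ``Each $\CZ$ gate touches two qubits''; you have merely spelled out the double counting and the (immediate) fact that a \CZ-circuit is a $\CZ^{(j)}$-circuit for every $j$. No differences worth noting.
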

  \begin{proof}
    Each $\CZ$ gate touches two qubits.
  \end{proof}

  As the costs $|\CCZ|_{\CZ;j}$
  are the same for $j=1,2,3$ (by symmetry),
  \begin{equation} \label{eq:threehalfs}
    |\CCZ|_{\CZ} \ge \frac{3}{2}|\CCZ|_{\CZ;j}
  \end{equation}

  We emphasize that the number of qubits, $N$,
  is an unspecified parameter in both $|\cdot|_{\CZ}$ and
  $|\cdot|_{\CZ;\ell}$.  In the presence of ancillae,
  we define
  $|U|^a_{\CZ} : = \min_t |U \otimes I_2^{\otimes t}|_{\CZ}$.
  Obviously $|U|^a_{\CZ} \le |U|_{\CZ}$.  While
  $|U|^a_{\CZ} = |U|_{\CZ}$ seems unlikely to always hold, we
  are not aware of any counterexamples.  Indeed, we
  will show in Section  \ref{sec:ancilla} that this equality
  holds for all two-qubit operators and all three-qubit
  diagonal operators.

  \section{Deriving gate constraints from circuit equations}
  \label{sec:snk}

   The circuit decompositions of Section \ref{sec:cartan}
   are essentially unique, and from this canonicity
   one can derive various constraints on which gates
   may appear in certain circuit equations.  We will
   pursue this route in Section \ref{sec:avsc}.
   However, the simplest cases are easier to treat
   from the more elementary point of view adopted
   by Song and Klappenecker in their classification
   of two-qubit controlled-$U$ operators by \CNOT-cost
   \cite{Song:2q:02}. Considering the operator
   computed by a candidate circuit, they first focus
   on matrix elements which vanish if the operator
   is a controlled-$U$. In order to produce such zero elements,
   the gates in the candidate circuit must satisfy certain constraints.
   Below we derive a series of more general results for $n$-qubit circuits.
   One-qubit gates which become diagonal when multiplied by $\X$ occur
   frequently; we refer to them as anti-diagonal.

  \begin{lemma} \label{lem:sp1}
    The following equation imposes at least one of the following constraints.
    \[
    \qc{
      \lstick{1} & \gate{b}    & \controlu \qw & \gate{a} & \qw
      & \ccteq{1} & & \controlu \qw & \qw \\
      \csl & \qw & \gate{P} \qwx & \qw & \qw & \ccteqg & & \gate{Q} \qwx & \qw
    }
    \]
    \begin{enumerate}
    \item  $a,b$ are both diagonal or both anti-diagonal.
    \item  $P$ takes the form $d \otimes P_0$ for some
      one-qubit diagonal $d$.
    \end{enumerate}
  \end{lemma}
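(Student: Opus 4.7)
The plan is to take matrix elements in the qubit-$1$ computational basis and exploit the block-diagonality of the right-hand side. Using the decompositions $P = \ketbra{0}{0}^{(1)} \otimes P_0 + \ketbra{1}{1}^{(1)} \otimes P_1$ and $Q = \ketbra{0}{0}^{(1)} \otimes Q_0 + \ketbra{1}{1}^{(1)} \otimes Q_1$ furnished by the observation above, the circuit equation reads $(a \otimes I)\,P\,(b \otimes I) = Q$. Extracting the two off-diagonal $\ketbra{0}{1}^{(1)}$ and $\ketbra{1}{0}^{(1)}$ blocks of the left-hand side and equating them to zero yields
\[
a_{00} b_{01} P_0 + a_{01} b_{11} P_1 = 0, \qquad a_{10} b_{00} P_0 + a_{11} b_{10} P_1 = 0.
\]

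From here I would split into two cases. First, suppose some coefficient in one of these equations is nonzero; because the blocks $P_0$ and $P_1$ are unitary (hence nonzero), the relation forces a proportionality $P_0 = c\,P_1$ for some $c \in \CC^\times$. Taking determinants shows $|c|=1$, whence $P = (c\ketbra{0}{0} + \ketbra{1}{1})^{(1)} \otimes P_1$. This is exactly the factored form demanded by statement (2), with the diagonal $d = c\ketbra{0}{0} + \ketbra{1}{1}$ on qubit $1$.

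Otherwise, all four products $a_{00}b_{01}$, $a_{01}b_{11}$, $a_{10}b_{00}$, $a_{11}b_{10}$ vanish, and the remaining work is to leverage the unitarity of $a$ and $b$, which forbids any row or column of either from being identically zero. For instance, if both $a_{00}$ and $a_{10}$ (the first column of $a$) were nonzero, the first and third vanishing products would give $b_{01} = b_{00} = 0$, zeroing out the first row of $b$. Eliminating every such mixed pattern shows that each column of $a$ has exactly one nonzero entry, so $a$ is either diagonal or anti-diagonal; the surviving vanishing products then force $b$ to be of the same shape, establishing statement (1). The modest but real technical point is to use the four equations in concert to yoke the diagonal/anti-diagonal type of $b$ to that of $a$ --- this bookkeeping is where I would spend the most care.
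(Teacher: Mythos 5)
Your proof is correct and follows essentially the same route as the paper: extract the off-diagonal blocks of $a^{(1)}Pb^{(1)}$, which must vanish since $Q$ commutes with $\Z^{(1)}$, and split according to whether the scalar coefficients vanish, getting $P_0\propto P_1$ (Case 2) in one branch and the diagonal/anti-diagonal structure of $a,b$ (Case 1) in the other. The only difference is that the paper's two-line proof uses just the $\bra{0}^{(1)}\cdot\ket{1}^{(1)}$ block and leaves the vanishing-coefficient branch implicit, whereas you also use the $\bra{1}^{(1)}\cdot\ket{0}^{(1)}$ block and carry out the unitarity bookkeeping that ties the shape of $b$ to that of $a$ --- a harmless (indeed more complete) elaboration.
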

  \begin{proof}
    \[0 = \bra{0}^{(1)} aPb \ket{1}^{(1)} = \bra{0}
    a\ket{0}\bra{0}b \ket{1} P_0 + \bra{0}a\ket{1}\bra{1}b \ket{1}
    P_1\] As the coefficients do not vanish, $P_0$ and $P_1$ are
    linearly dependent.  It follows that $P = d \otimes P_0$ for
    some one-qubit diagonal $d$.
  \end{proof}

  \begin{corollary} \label{cor:sp1:CZ}
    If $a^{(i)} \CZ^{(i,j)} b^{(i)}$ commutes with $\Z^{(i)}$, then
    $a,b$ are both diagonal or anti-diagonal.
  \end{corollary}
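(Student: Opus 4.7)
The plan is to read the corollary as a direct instance of Lemma \ref{lem:sp1}, with the two-qubit gate $\CZ^{(i,j)}$ playing the role of $P$ and the one-qubit bus consisting of qubit $j$. Set $Q := a^{(i)}\CZ^{(i,j)}b^{(i)}$; by hypothesis $Q$ commutes with $\Z^{(i)}$, so (in the notation of the lemma's diagram) the right-hand side is a legitimate control-on-box on qubit $i$. The lemma therefore applies and yields one of two alternatives: either (1) $a$ and $b$ are both diagonal or both anti-diagonal, which is exactly the conclusion we want, or (2) $\CZ^{(i,j)} = d \otimes P_0$ for some one-qubit diagonal $d$ acting on qubit $i$ and some one-qubit operator $P_0$ acting on qubit $j$.

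The remaining work is to rule out alternative (2). Writing $d = \mathrm{diag}(d_0,d_1)$ and computing the two diagonal blocks of $\CZ^{(i,j)}$ with respect to qubit $i$, the putative factorization would force simultaneously $d_0 P_0 = I$ and $d_1 P_0 = \Z$. But these two equations together demand that $P_0$ be proportional both to $I$ and to $\Z$, which is impossible. Equivalently, alternative (2) asserts that $\CZ^{(i,j)}$ is a tensor product, contradicting the fact that $\CZ$ is entangling.

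Since alternative (2) is impossible, alternative (1) holds, completing the proof. There is essentially no obstacle here: the only thing to double-check is that Lemma \ref{lem:sp1} is stated generally enough to allow the bus to be a single qubit, which it is (the backslashed wire can carry any number of qubits). All other work has been absorbed into the lemma, whose own proof used only that the top-left matrix entry of the left-hand side vanishes.
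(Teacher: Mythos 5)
Your proposal is correct and matches the paper's intent: the corollary is stated there as an immediate consequence of Lemma \ref{lem:sp1}, with case 2 ruled out exactly as you do, since the diagonal blocks $I$ and $\Z$ of $\CZ^{(i,j)}$ are not proportional, so $\CZ^{(i,j)}$ cannot have the form $d\otimes P_0$. Nothing further is needed.
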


  \begin{corollary} \label{cor:sp1:replace}
    In the situation of Lemma \ref{lem:sp1}, there exist
    one-qubit operators
    $a',b'$ which are either diagonal or anti-diagonal, such that
    $a'^{(1)}Pb'^{(1)}=Q$.
  \end{corollary}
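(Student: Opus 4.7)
The plan is to apply Lemma \ref{lem:sp1} and argue case by case. In the first case of the lemma, $a$ and $b$ are already both diagonal or both anti-diagonal, so there is nothing to do: set $a' = a$ and $b' = b$. Only the second case, where $P = d \otimes P_0$ for a one-qubit diagonal $d$ acting on qubit 1, requires real work.

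In that case I would compute directly: since $a^{(1)}$, $b^{(1)}$, and $d^{(1)}$ all act only on qubit 1, one has
\[
a^{(1)} P b^{(1)} = (adb) \otimes P_0 = Q.
\]
Because $Q$ commutes with $\Z^{(1)}$ by hypothesis and $P_0$ is unitary (hence nonzero), the one-qubit factor $adb$ must itself commute with $\Z$, i.e.\ be diagonal. This is the key algebraic observation of the proof: the entire effect of $a$ and $b$ on $P$ reduces to composing with the diagonal $d$ to produce another diagonal.

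Given this, I would construct $a', b'$ explicitly. The simplest choice is $a' := I$ and $b' := d^{-1}(adb)$; both factors of $b'$ are diagonal, so $b'$ is diagonal, and
\[
a'^{(1)} P b'^{(1)} = (d \otimes P_0)(b' \otimes I) = (d b') \otimes P_0 = (adb) \otimes P_0 = Q,
\]
as required. (Unitarity of $b'$ follows from unitarity of $a$, $b$, and $d$.) This completes the corollary.

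I do not anticipate any genuine obstacle: the lemma does all of the structural work, and the only new content is the observation that once $P$ factors as $d \otimes P_0$, the data of $a$ and $b$ collapses to a single diagonal $adb$, which can be repackaged into any pair of diagonal operators with the same product-through-$d$. The one point requiring care is keeping the qubit conventions straight—remembering that $a$, $b$, $d$, $a'$, $b'$ all act on qubit 1 while $P_0$ acts on the remaining qubits—so that the tensor manipulation $a^{(1)}(d\otimes P_0)b^{(1)} = (adb)\otimes P_0$ is justified.
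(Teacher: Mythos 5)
Your proposal is correct and follows essentially the same route as the paper: handle Case 1 of Lemma \ref{lem:sp1} trivially, and in Case 2 use the factorization $P = d \otimes P_0$ to collapse $a$, $d$, $b$ into a single diagonal one-qubit gate. The only (cosmetic) difference is that the paper takes $a' = a d b d^{-1}$, $b' = I$ and deduces diagonality from $a'^{(1)} = QP^\dagger$ commuting with $\Z^{(1)}$, whereas you take $a' = I$, $b' = d^{-1}(adb)$ and deduce it by comparing tensor factors of $Q = (adb)\otimes P_0$.
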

  \begin{proof}
    Apply Lemma \ref{lem:sp1}; we need consider only
    Case 2.  Take $a' = a \delta b \delta^{-1}$ and
    $b' = I$; then $a'^{(1)}Pb'^{(1)} = a^{(1)}Pb^{(1)}$.  As
    $a'^{(1)} =QP^\dagger$
    commutes with $\Z^{(1)}$, it is diagonal.
  \end{proof}

  We turn now to circuits with two \CZ\ gates.

  \begin{lemma} \label{lem:sp2}
    Suppose the following equation holds.
    \[
    \qc{
      \lstick{1} & \gate{b}    & \qw & \gate{a} & \qw
      & \ccteq{2} & & \controlu \qw & \qw \\
      \lstick{2} &  \controlu \qw \qwx & \multigate{1}{P} &
      \controlu \qw \qwx & \qw
      & \ccteqg & & \multigate{1}{Q} \qwx & \qw \\
      \csl & \qw & \ghost{P} & \qw & \qw & \ccteqg & & \ghost{Q} & \qw}
    \]
    Then (I) $a_i b_j$ is diagonal for all $i,j$ or
    (II) one of $P$, $\X^{(2)}P$ commutes with $\Z^{(2)}$.
  \end{lemma}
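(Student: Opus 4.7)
The plan is to extract the off-diagonal (in qubit $1$) part of the LHS and force it to vanish, since the RHS commutes with $\Z^{(1)}$. First I would expand
\[ \CZ^{(1,2)} = \ket{0}\bra{0}^{(1)} \otimes I^{(2)} + \ket{1}\bra{1}^{(1)} \otimes \Z^{(2)} \]
and push $P$ through the pair of $\CZ$'s to obtain
\[ \CZ^{(1,2)}\, P\, \CZ^{(1,2)} = \ket{0}\bra{0}^{(1)} \otimes P + \ket{1}\bra{1}^{(1)} \otimes P', \qquad P' := \Z^{(2)} P \Z^{(2)}. \]
Sandwiching between $a^{(1)}$ on the left and $b^{(1)}$ on the right gives
\[ \mathrm{LHS} = (a\ket{0}\bra{0}b)^{(1)} \otimes P + (a\ket{1}\bra{1}b)^{(1)} \otimes P'. \]

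Next I would decompose $P = P_d + P_{ad}$ into its $\Z^{(2)}$-commuting (block-diagonal in qubit $2$) and $\Z^{(2)}$-anticommuting (block-antidiagonal) parts, so that $P' = P_d - P_{ad}$. The failure of conclusion (II) says exactly that both $P_d$ and $P_{ad}$ are nonzero; since their matrix supports are disjoint, they are then linearly independent.

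Because the RHS commutes with $\Z^{(1)}$, the $\bra{0}^{(1)}\!\cdot\ket{1}^{(1)}$ and $\bra{1}^{(1)}\!\cdot\ket{0}^{(1)}$ components of the LHS must vanish, yielding
\[ a_{00} b_{01}\,P + a_{01} b_{11}\,P' = 0, \qquad a_{10} b_{00}\,P + a_{11} b_{10}\,P' = 0. \]
Re-expressing each in the basis $\{P_d, P_{ad}\}$ and applying linear independence produces the four scalar constraints $a_{00} b_{01} = a_{01} b_{11} = a_{10} b_{00} = a_{11} b_{10} = 0$. Writing $a = a_d + a_{ad}$ and $b = b_d + b_{ad}$ for the diagonal/antidiagonal decomposition on qubit $1$, these four equations are precisely the statement that each of the four products $a_i b_j$ (with $i, j \in \{d, ad\}$) has no off-diagonal entries, which is conclusion (I).

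The argument is a direct generalization of the Lemma \ref{lem:sp1} technique pushed one level deeper by conjugation through the two $\CZ$'s. The only nontrivial checkpoint is verifying the linear independence of $P_d$ and $P_{ad}$ (and recognizing that this is equivalent to failure of (II)); the rest is elementary matrix-entry bookkeeping, so I do not expect any substantive obstacle.
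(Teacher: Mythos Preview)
You have misread the circuit diagram and, consequently, the meaning of conclusion (I). The open boxes on qubit $2$ are \emph{not} $\CZ$ gates; they are the paper's control-on-box notation indicating that the attached operator commutes with $\Z^{(2)}$. Thus $a$ and $b$ are general two-qubit operators on qubits $1,2$ which are block-diagonal in qubit $2$ (quantum multiplexors), and $P$ acts only on qubit $2$ and the bus. The subscripts in conclusion (I) follow the paper's standing convention: $a_i = \bra{i}^{(2)} a \ket{i}^{(2)}$ and $b_j = \bra{j}^{(2)} b \ket{j}^{(2)}$ are the one-qubit blocks acting on qubit $1$, not diagonal/antidiagonal parts of a one-qubit $a$. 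Your proof therefore establishes only a special case (where $a = a'^{(1)}\CZ^{(1,2)}$ and $b = \CZ^{(1,2)} b'^{(1)}$ for one-qubit $a',b'$), and your statement of (I) does not coincide with the Lemma's.

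The underlying idea, however, is exactly right and matches the paper: extract the part of $aPb$ that is off-diagonal on qubit $1$ and force it to vanish. In the correct setup the computation is even shorter. Since $a,b$ commute with $\Z^{(2)}$ and $P$ does not touch qubit $1$,
\[
0 \;=\; \bra{0}^{(1)}\bra{i}^{(2)}\, aPb \,\ket{1}^{(1)}\ket{j}^{(2)}
\;=\; \bigl(\bra{0}\,a_i b_j\,\ket{1}\bigr)\cdot \bra{i}^{(2)} P \ket{j}^{(2)}
\]
for every $i,j\in\{0,1\}$. Either some block $\bra{i}^{(2)}P\ket{j}^{(2)}$ vanishes, which by unitarity of $P$ forces $P$ or $\X^{(2)}P$ to commute with $\Z^{(2)}$ (conclusion (II)), or $\bra{0}a_ib_j\ket{1}=0$ for all $i,j$, which (the $a_ib_j$ being unitary $2\times 2$ matrices) is exactly conclusion (I). No $P_d/P_{ad}$ decomposition is needed.
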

  \begin{proof}
    We compute:
    \[0 = \bra{0}^{(1)} \bra{i}^{(2)} aPb \ket{1}^{(1)}
    \ket{j}^{(2)} = \bra{0}^{(1)}a_i b_j \ket{1}^{(1)} \bra{i}^{(2)} P
    \ket{j}^{(2)}\] Either $\bra{i}^{(2)} P \ket{j}^{(2)} = 0$ for
    some $i,j$, or $\bra{0} a_i b_j \ket{1}$ vanishes for all $i,j$.
  \end{proof}

  \begin{corollary}
    Suppose the following equation holds.
    \[
    \qc{
      \lstick{1}
      & \controlu \qw & \qw & \ccteq{2} & & \gate{t} & \control \qw
      & \gate{s}         & \control \qw      & \gate{r}         & \qw \\
      \lstick{2}
      & \multigate{1}{M} \qwx & \qw & \ccteqg & & \multigate{1}{T} &
      \control \qw \qwx
      & \multigate{1}{S} & \control \qw \qwx  & \multigate{1}{R} & \qw \\
      \csl & \ghost{M} & \qw & \ccteqg & & \ghost{T} & \qw &
      \ghost{S} & \qw & \ghost{R} & \qw
    }
    \]
    Then either
    (I) an even number of $r,s,t$ are anti-diagonal,
    and the remainder diagonal, or (II) $S$ or $S\X^{(2)}$ commutes
    with $\Z^{(2)}$.
  \end{corollary}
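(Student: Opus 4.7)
The plan is to apply Lemma~\ref{lem:sp2} to a rewriting of the circuit equation and then unpack its ``diagonal'' alternative to recover Case~(I).

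First, observe that every one-qubit operator on qubit~$1$ commutes with every operator on qubits $2,\ldots,N$, so we may group the left-hand side as
\[
(r\otimes R)\cdot\CZ\cdot(s\otimes S)\cdot\CZ\cdot(t\otimes T) \;=\; R\cdot a\cdot S\cdot b\cdot T,
\]
where $a = r^{(1)}\cdot\CZ$ and $b = s^{(1)}\cdot\CZ\cdot t^{(1)}$ are operators on qubits~$(1,2)$ that commute with $\Z^{(2)}$ (since $\CZ$ is diagonal and the qubit-$1$ factors do not touch qubit~$2$). Multiplying on the left by $R^{-1}$ and on the right by $T^{-1}$ gives $a\cdot S\cdot b = R^{-1}\cdot M\cdot T^{-1}$, whose right-hand side commutes with $\Z^{(1)}$ since $M$ does and $R, T$ leave qubit~$1$ alone. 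This is exactly the hypothesis of Lemma~\ref{lem:sp2} with $P = S$.

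The lemma then returns one of two alternatives. Its second alternative asserts that $S$ or $\X^{(2)}S$ commutes with $\Z^{(2)}$, which is equivalent to saying that $S$ either commutes or anti-commutes with $\Z^{(2)}$, and thus is exactly Case~(II). Its first alternative asserts that $a_ib_j$ is diagonal on qubit~$1$ for all $i,j\in\{0,1\}$. A direct computation using $\bra{i}^{(2)}\CZ\ket{i}^{(2)} = (\Z^{(1)})^i$ gives $a_i = r\,(\Z^{(1)})^i$ and $b_j = s\,(\Z^{(1)})^j\,t$, so the diagonal alternative becomes: each of $rst$, $rs\Z t$, $r\Z st$, and $r\Z s\Z t$ is diagonal.

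The main technical step is to show that these four diagonality conditions force each of $r$, $s$, $t$ to be diagonal or anti-diagonal. The key observation is that the pairwise difference
\[
rst - r\Z st \;=\; 2\,r\cdot\ket{1}\bra{1}\cdot st
\]
has rank at most one, yet being the difference of two diagonal matrices it must itself be diagonal, so its off-diagonal entries vanish: $r_{01}(st)_{11} = r_{11}(st)_{10} = 0$. The analogous differences $rst - rs\Z t$ and $r\Z s\Z t - rst$ yield $(rs)_{01}t_{11} = (rs)_{11}t_{10} = 0$ together with two further constraints involving the anti-diagonal part $s_o$ of $s$. A short case analysis---using the fact that a one-qubit unitary with any single zero matrix entry must be diagonal or anti-diagonal, combined with unitarity (which forbids an entire row or column of $r$, $s$, or $t$ from vanishing)---then forces each factor individually to be diagonal or anti-diagonal. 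This case analysis is the principal obstacle.

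Finally, the parity statement in Case~(I) is automatic: if $\varepsilon_r,\varepsilon_s,\varepsilon_t\in\{0,1\}$ record anti-diagonality of the three factors, then conjugating the circuit equation by $\Z^{(1)}$ multiplies the left-hand side by $(-1)^{\varepsilon_r+\varepsilon_s+\varepsilon_t}$ (every other factor commutes with $\Z^{(1)}$), while the right-hand side $M$ is fixed by its $\Z^{(1)}$-commutativity; hence $\varepsilon_r+\varepsilon_s+\varepsilon_t$ must be even.
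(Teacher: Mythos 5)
Your reduction is the same as the paper's: you move $R$ and $T$ to the other side, package $r^{(1)}\,\CZ^{(1,2)}$ and $s^{(1)}\,\CZ^{(1,2)}\,t^{(1)}$ as the two operators commuting with $\Z^{(2)}$, apply Lemma~\ref{lem:sp2} with $P=S$, note that its second alternative is precisely Case~(II), and identify the four products $rst,\ rs\Z t,\ r\Z st,\ r\Z s\Z t$ that must all be diagonal in the first alternative. (Your ordering of the factors is reversed relative to the paper's diagram convention, but the statement is symmetric in $r,t$, so nothing is lost.) Your closing parity argument by conjugating the whole equation with $\Z^{(1)}$ is also correct.

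The one place the proposal falls short of a proof is exactly the step you flag as ``the principal obstacle'': from the four diagonality conditions you derive entrywise constraints and then assert, without carrying it out, that a case analysis forces each of $r,s,t$ to be diagonal or anti-diagonal. That deduction is the heart of Case~(I), so leaving it as a promissory note is a genuine gap in the write-up, although the step itself is sound and closes along the lines you indicate: if $r$ had all four entries nonzero, your constraints $r_{01}(st)_{11}=0$ and $r_{11}(st)_{10}=0$ would force an entire row of the unitary $st$ to vanish; hence $r$ is diagonal or anti-diagonal, so $st=r^{-1}(rst)$ is as well, and the analogous constraints on $rs$ and $t$ then settle $t$ and therefore $s$. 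The paper avoids this entry-chasing altogether with a quotient trick worth adopting: since $rst$ and $r\Z st$ are both diagonal, $(r\Z st)(rst)^\dagger=r\Z r^\dagger$ is a diagonal unitary conjugate of $\Z$, hence equals $\pm\Z$, so $r$ commutes or anticommutes with $\Z$ and is diagonal or anti-diagonal; likewise $(rst)^\dagger(rs\Z t)=t^\dagger\Z t$ handles $t$, and then $s=r^\dagger(rst)t^\dagger$ together with the diagonality of $rst$ gives both the claim for $s$ and the evenness of the number of anti-diagonal factors.
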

  \begin{proof}
    In order to apply Lemma \ref{lem:sp2},
    We move $R$ and $T$ to the other side.
    \[
    \qc{ & \qw & \controlu \qw & \qw & \qw & \ccteq{2} & & \gate{t} &
      \control \qw
      & \gate{s}  & \qw       & \control \qw      & \gate{r}         & \qw \\
      \lstick{m} & \multigate{1}{T^\dagger} & \multigate{1}{M} \qwx &
      \multigate{1}{R^\dagger} & \qw & \ccteqg & & \qw & \control \qw
      \qwx
      & \qw & \multigate{1}{S} & \control \qw \qwx          & \qw & \qw \\
      \csl & \ghost{T^\dagger} & \ghost{M} & \ghost{R^\dagger} & \qw &
      \ccteqg & \csl & \qw & \qw & \qw & \ghost{S} & \qw & \qw & \qw }
    \]
    The cases here will correspond to the cases of Lemma \ref{lem:sp2}.
    Case II is preserved verbatim.  For Case I,
    the ``$a_i b_j$'' which must be diagonal are $rst,
    rs\Z t, r\Z st, r \Z s \Z t$.
    Since $(rst)^\dagger rs \Z t = t
    \Z t^\dagger$ is diagonal, we deduce that either $t$ or $t\X$ is diagonal.
    Likewise, $r \Z st (rst)^\dagger = r
    \Z r^\dagger$ is diagonal, so either $r$ or $r\X$ is diagonal.
    Finally, $rst$ is diagonal, so from what we know about $r,t$,
    either $s$ or $s \X$ is diagonal, and the number of $r,s,t$
    which are not diagonal is even.
  \end{proof}

  The following reformulation will be useful later.

  \begin{corollary} \label{cor:twoc}
    Suppose $Q$ commutes with $\Z^{(\ell)}$ and let
    $\mathcal{C}$ be a $\CZ^{(\ell)}$-circuit computing
    $Q$ in which exactly two \CZ s are incident on $\ell$,
    say $\CZ^{(\ell,m)}$ and $\CZ^{(\ell,n)}$.  Then
    all non-diagonal one-qubit gates may be eliminated
    from qubit $\ell$ at the cost of possibly
    (i) replacing $\CZ^{(\ell,n)}$ with $\CZ^{(\ell,m)}$
    and (ii) adding one-qubit gates on qubits $m, n$.
  \end{corollary}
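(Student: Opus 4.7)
The plan is to slide every one-qubit gate on qubit $\ell$ through the rest of the circuit (each commutes with any gate not touching $\ell$) and collect them into three positions $r, s, t$ adjacent to the two \CZ\ gates, so that the action on $\ell$ takes the canonical form $t \cdot \CZ^{(\ell,n)} \cdot s \cdot \CZ^{(\ell,m)} \cdot r$. Let $T, S, R$ denote the remaining multi-qubit operators on the other qubits in the three corresponding time slots. Since $T, R$ commute with $\Z^{(\ell)}$, the operator $M := T^{-1} Q R^{-1} = t \cdot \CZ^{(\ell,n)} \cdot sS \cdot \CZ^{(\ell,m)} \cdot r$ still commutes with $\Z^{(\ell)}$.

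I would then mimic the preceding corollary's proof by applying Lemma~\ref{lem:sp2} with the choices $a = t \cdot \CZ^{(\ell,n)} \cdot s$, $b = \CZ^{(\ell,m)} \cdot r$, and $P = S$, but sandwiching with computational-basis states on both qubits $m$ and $n$. Since $a, b$ commute with $\Z^{(m)}$ and $\Z^{(n)}$ while $P$ does not touch $\ell$, the condition $\bra{0}^{(\ell)} M \ket{1}^{(\ell)} = 0$ factorises as
\[
  \bra{0} t\, \Z^{i'} s\, \Z^{j} r \ket{1} \cdot \bra{i,i'}^{(m,n)} S \ket{j,j'}^{(m,n)} = 0
\]
for every $i, i', j, j' \in \{0,1\}$. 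The one-qubit scalar depends only on the pair $(i', j)$, so for each of the four values of $(i', j)$ either (Case~I) the scalar vanishes, or (Case~II) the $2 \times 2$ block $\bra{\cdot, i'} S \ket{j, \cdot}$ is identically zero.

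In Case~I (all four scalars vanish), a direct inspection of these equations in terms of the entries of $r, s, t$ forces each of them to be purely diagonal or purely anti-diagonal with an even total count anti-diagonal, exactly as in the preceding corollary. Writing each anti-diagonal gate as $d \cdot \X^{(\ell)}$ with $d$ diagonal and repeatedly applying $\CZ^{(\ell,q)} \X^{(\ell)} = \Z^{(q)} \X^{(\ell)} \CZ^{(\ell,q)}$ shuttles every $\X^{(\ell)}$ past both \CZ's, depositing a $\Z^{(m)}$ or $\Z^{(n)}$ at each crossing; the even parity causes the $\X^{(\ell)}$'s to annihilate pairwise on $\ell$. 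The resulting circuit has only diagonal one-qubit gates on $\ell$, leaves both \CZ's intact (so alternative (i) is not invoked), and its only new gates are one-qubit $\Z$'s on $m$ and $n$.

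In Case~II, a zero block in $S$ at some $(i'_0, j_0)$ means $S$ sends the $\ket{j_0}^{(m)}$ input subspace entirely into the $\ket{1 - i'_0}^{(n)}$ output subspace, so on that subspace the composition $\CZ^{(\ell,n)} \cdot S$ reduces to $\Z^{(\ell),\,1 - i'_0} \cdot S$. Combining this with the action of $\CZ^{(\ell,m)}$ on the complementary $\ket{1-j_0}^{(m)}$ subspace, one can replace $\CZ^{(\ell,n)}$ by $\CZ^{(\ell,m)}$ at the cost of one-qubit corrections on $\{m, n\}$ that get absorbed into $T, S, R$; after the substitution both \CZ's on $\ell$ share target $m$, so the $n = m$ specialisation of Case~I finishes the elimination. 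The main obstacle lies in executing this Case~II substitution rigorously: the zero-block constraint is strictly weaker than outright commutation of $S$ with $\Z^{(m)}$ or $\Z^{(n)}$, so each of the four sub-cases of $(i'_0, j_0)$ must be matched with an explicit rewrite, and careful bookkeeping is needed to confirm that the one-qubit corrections deposited on $\{m, n\}$ compose with the surrounding $T, S, R$ to recover the original $Q$ exactly.
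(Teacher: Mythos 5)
Your setup and your Case I track the paper's own proof closely: collecting the one-qubit gates on $\ell$ into $r,s,t$, stripping off the outer blocks $R,T$, and factoring the vanishing off-diagonal $\ell$-block of $M$ into a scalar $\bra{0}t\,\Z^{i'}s\,\Z^{j}r\ket{1}$ times a block of $S$ is exactly the mechanism of Lemma \ref{lem:sp2} and its corollary (the paper merely arranges both \CZ s to share qubit $m$ first, by absorbing a swap into $S$ and $T$), and your parity argument pushing the $\X$'s through via Equation \ref{circ:demorgan} is the paper's Case I essentially verbatim.

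The gap is in Case II, and it is not only the bookkeeping you flag. The substitution itself can be made rigorous: since $S$ is unitary, a single zero block $\bra{i'_0}^{(n)}S\ket{j_0}^{(m)}=0$ forces the complementary block $\bra{1-i'_0}^{(n)}S\ket{1-j_0}^{(m)}$ to vanish as well, whence $\Z^{(n)}S=\pm S\,\Z^{(m)}$ and $\CZ^{(\ell,n)}$ slides through $S$ to become $\CZ^{(\ell,m)}$ at the cost of replacing $t$ by $t\Z$ if needed. What fails is your final claim that ``the $n=m$ specialisation of Case I finishes the elimination.'' After the rewrite nothing is known about $r,s,t$: with both \CZ s on $(\ell,m)$ and only $s^{(\ell)}$ between them, the block-vanishing branch of the dichotomy holds trivially (the off-diagonal $m$-blocks of the identity vanish), so the four scalars need not vanish and $r,s,t$ need not be diagonal or anti-diagonal. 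Concretely, $\HAD^{(\ell)}\,\CZ^{(\ell,m)}\,R_x(\theta)^{(\ell)}\,\CZ^{(\ell,m)}\,\HAD^{(\ell)}=\exp\bigl(i\theta\,\Z^{(\ell)}\Z^{(m)}\bigr)$ commutes with $\Z^{(\ell)}$ although none of its one-qubit gates on $\ell$ is diagonal or anti-diagonal. The missing idea --- and the way the paper closes this case --- is that the residual two-qubit operator $V=t'^{(\ell)}\,\CZ^{(\ell,m)}\,s^{(\ell)}\,\CZ^{(\ell,m)}\,r^{(\ell)}$ commutes with both $\Z^{(\ell)}$ (because $Q$ does, and $R,S,T$ do not touch $\ell$) and $\Z^{(m)}$, hence is a two-qubit diagonal; one then discards the old gates on $\ell$ entirely and re-synthesizes $V$ by the standard circuit for a two-qubit diagonal, which uses two $\CZ^{(\ell,m)}$ gates, only $R_z$ gates on $\ell$, and one-qubit gates on $m$. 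Without this re-synthesis step, Case II does not yield the conclusion.
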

  \begin{proof}
    By hypothesis, $\mathcal{C}$ takes the form
    \[Q = [r \otimes R]\CZ^{(\ell,m)}[s \otimes  S]
    \CZ^{(\ell,n)}[t \otimes T]\]
    where $r,s,t$ are subcircuits of one-qubit operators acting on
    $\ell$, and
    $R,S,T$ are subcircuits containing no gates
    acting on $\ell$.  We immediately replace $r,s,t$ by the one-qubit
    operators they compute.  Moreover, if
    $m \ne n$, then replace $S$
    and $T$ by $S \cdot {\tt SWAP}^{(m,n)}$ and
    ${\tt SWAP}^{(m,n)} \cdot T$, where ${\tt SWAP}$ is
    the gate which exchanges qubits.  The swaps will be
    restored and canceled at the end of the proof.
    We are in the situation of Lemma \ref{lem:sp2}.

    \vspace{1mm} \noindent {\bf Case I.}
    We are done, with the exception that the $r,s,t$ may be
    anti-diagonal rather than diagonal.  In this case,
    Equation \ref{circ:demorgan} allows the extraneous \X s
    to be pushed through and cancelled at the cost of
    introducing $\Z$ gates on qubit $m$.
    The diagonal gates remaining on
    qubit $\ell$ may be commuted through the
    $\CZ$s and conglomerated into one.
    Finally, the possible
    swap introduced between the $S, T$ terms
    may be cancelled.

    \vspace{1mm} \noindent {\bf Case II.}
    Using Equation \ref{circ:demorgan} and replacing $s$
    by $s\Z$ if necessary, we commute
    $S$ past one of the \CZ s.
    We now have:
    \[Q = [r \otimes R]\CZ^{(\ell,m)} s^{(\ell)}
    \CZ^{(\ell,m)}[t \otimes ST]\]
    Rearranging the equation,
    \begin{equation} \label{eq:V}
    [I \otimes R^\dagger] Q  [I \otimes T^{\dagger} S^\dagger] =
      r^{(\ell)} \CZ^{(\ell,m)} s^{(\ell)}
    \CZ^{(\ell,m)} t^{(\ell)}
    \end{equation}
    Let $V$ be the value of either side of the equation above.
    Then from the LHS we see that $V$ commutes with $Z^{(\ell)}$,
    and from the RHS we see that $V$ is a two-qubit operator
    commuting with $Z^{(m)}$.  Thus $V$ is a two-qubit diagonal,
    and admits the following decomposition.
    \[
    \qc{
      \lstick{\ell} & \multigate{1}{V} & \qw & \ccteq{1} & & \gate{R_z(\alpha)}
      & \qw & \control \qw & \qw & \qw & \qw & \control \qw & \qw & \qw \\
      \lstick{m} & \ghost{V} & \qw & \ccteqg & & \gate{R_z(\beta)} & \gate{H}
      & \control \qw \qwx & \gate{H} &
      \gate{R_z(\gamma)} & \gate{H} & \control \qw \qwx & \gate{H} & \qw }
    \]
    Substituting this decomposition for the RHS of Equation \ref{eq:V} and
    restoring the $R, S, T$ gates completes the proof.
  \end{proof}

\section{The \CNOT-cost of the \TOFFOLI\ gate}
\label{sec:toffoli}

  So far we have reduced \CNOT-counting for the \TOFFOLI\ gate
  to \CZ-counting for the \CCZ\ gate, with the latter two
  being diagonal and symmetric.  Having derived the inequality
  $3 |\CCZ|_{\CZ;\ell} / 2 \le  |\CCZ|_{\CZ}$, we seek to determine
  the qubit-local costs $|\CCZ|_{\CZ;\ell}$.

  The idea is to find an equivalence relation $\sim_{\ell}$ such that
  (i) $U \sim_{\ell} V \implies |U|_{\CZ;\ell} = |V|_{\CZ;\ell}$ and
  (ii) the equivalence classes of $\sim_\ell$ are easy to characterize.

  \begin{definition}
  For $P,Q$ commuting with $\Z^{(\ell)}$, we
  write $P \sim_{\ell} Q$ if there exist $a,b,A,B$ satisfying the
  following equation.
  \begin{equation} \label{eq:psimq}
    \qc{ \lstick{\ell} & \gate{b} & \controlu \qw & \gate{a} & \qw
      & \ccteq{1} & & \controlu \qw & \qw \\
      \csl & \gate{B} & \gate{P} \qwx & \gate{A} & \qw & \ccteqg & &
      \gate{Q} \qwx & \qw }
  \end{equation}
  \end{definition}

  The fact that $|\cdot|_{\CZ;\ell}$ is constant on equivalence
  classes is obvious; the ability to characterize the equivalence
  classes comes from a comparison between Equation \ref{eq:psimq}
  and the demultiplexing decomposition of Equation \ref{eq:demux}.
  We construct invariants of the equivalence classes in Theorem
  \ref{thm:loceq}.  The
  reductions of Section \ref{sec:avsc} provide circuit forms
  on which the invariants are easy to compute; as a consequence,
  we arrive at a complete characterization of $U$ such that
  $|U|_{\CZ;\ell} = 0, 1, 2$
  in Theorem \ref{thm:ccount}.
  The \CCZ\ gate falls into none of these classes,
  and thus  $|\CCZ|_{\CZ;\ell} \ge 3$, and hence $|\CCZ|_{\CZ} \ge 5$.
  Unfortunately, qubit-local
  \CZ -counting can take us no further:  one can show by construction
  that in fact $|\CCZ|_{\CZ;\ell} = 3$.

  We now consider a hypothetical
  five-\CZ\ circuit for the \CCZ\ and seek a contradiction,
  using a divide-and-conquer strategy.  There
  are many possible arrangements of the \CZ s, and we do not
  deal with them case by case.  Nonetheless,
  we fix one here for clarity.

  \begin{equation}
    \label{eq:lotsofletters}
    \qc{
      \lstick{1} & \control \qw & \qw & \ccteq{2}& &
      \gate{f} & \control \qw & \qw & \gate{e} &
      \qw & \control \qw & \qw & \gate{d} &
      \qw & \control \qw & \gate{c} &\qw \\
      \lstick{2} & \control \qw \qwx & \qw & \ccteqg& &
      \gate{k} & \control \qw \qwx & \gate{j} &  \control \qw &
      \qw & \qw \qwx & \gate{i} &  \control \qw &
      \gate{h} & \control \qw \qwx & \gate{g} & \qw \\
      \lstick{3} & \control \qw \qwx & \qw & \ccteqg& &
      \gate{o} & \qw & \qw &  \control \qw \qwx&
      \gate{n} & \control \qw \qwx  & \gate{m} & \control \qw \qwx &
      \qw & \qw & \gate{l} & \qw  \\
    }
  \end{equation}
  We define $a,b,P,Q$ as follows.

  \[
  \qc{
    & \gate{a}  & \qw & \ccteq{1} &  & \gate{d} & \control \qw
    & \gate{c} & \qw \\
    & \controlu \qwx \qw & \qw & \ccteqg  & & \qw & \control \qw \qwx
    & \qw & \qw }
  \]
  \[
  \qc{
    & \gate{b} \qw & \qw & \ccteq{1} & & \gate{f} & \control \qw
    & \gate{e} & \qw \\
    & \controlu \qwx \qw & \qw & \ccteqg & & \qw & \control \qw \qwx
    & \qw & \qw
  }
  \]
  \[
  \qc{
    & \controlu \qw & \qw & \ccteq{2} & &
    \qw & \qw & \qw & \control \qw & \qw & \qw & \qw & \qw \\
    & \multigate{1}{P} \qwx & \qw & \ccteqg & &
    \gate{j} & \control \qw & \qw & \qw \qwx & \gate{i} & \control \qw
    & \gate{h} & \qw \\
    & \ghost{P} & \qw & \ccteqg & &
    \qw & \control \qw \qwx & \gate{n} & \control \qw \qwx
    & \gate{m} & \control \qw \qwx & \qw & \qw
  }
  \]
  \[
  \qc{
    & \controlu \qw & \qw & \ccteq{2} & & \qw & \control \qw & \qw & \qw \\
    & \multigate{1}{Q} \qwx & \qw & \ccteqg & & \gate{k^\dagger} &
    \control \qw \qwx
    & \gate{g^\dagger} & \qw \\
    & \ghost{Q} & \qw & \ccteqg & & \gate{o^\dagger} & \control \qw \qwx
    & \gate{l^\dagger} & \qw }
  \]
  Our circuit decomposition now takes the following form.

  \begin{equation}
    \label{eq:apb}
    \qc{
      \lstick{1} & \gate{b}    & \controlu \qw & \gate{a} & \qw
      & \ccteq{2} & & \controlu \qw & \qw \\
      \lstick{2} &  \controlu \qw \qwx & \multigate{1}{P} \qwx &
      \controlu \qw \qwx & \qw
      & \ccteqg & & \multigate{1}{Q} \qwx & \qw \\
      \csl & \qw & \ghost{P} & \qw & \qw & \ccteqg & & \ghost{Q} & \qw}
  \end{equation}

  Up to some two-qubit diagonal fudge factors, this equation says
  that the cosine-sine decomposition
  of $b^\dagger \otimes I$ is $Q^\dagger [a \otimes I] P$.
  In Section \ref{sec:avsc}, we translate the well-known
  canonicity of this Cartan decomposition into constraints on
  the components $a$, $b$, $P$ and $Q$.
  The formulae of Theorem
  \ref{thm:ccount} further strengthen these constraints in the
  $|\cdot|_{\CZ;\ell} = 3$ case.  Specifically, we show
  in Theorem \ref{thm:threec}
  that if $|U|_{\CZ;\ell} = 3$ and $\mathcal{C}$ computes $U$ using
  the minimum required three \CZ\ gates incident on $\ell$,
  then all one-qubit gates on $\ell$ are diagonal or anti-diagonal.  The
  anti-diagonal gates can be made diagonal at
  the cost of introducing \Z\ gates elsewhere in the circuit.

  This is the last result needed to determine the \CZ -cost of the
  \CCZ.  From $|\CCZ|_{\CZ;\ell} \ge 3$, we see that in any five-\CZ\
  circuit for the \CCZ, two of the qubits,
  $m,n$ touch exactly three \CZ\ gates and the remaining one touches four.
  By Theorem \ref{thm:threec},
  we can assume all one-qubit operators on $m,n$ are diagonal.
  Proposition \ref{prop:noninvariance} would then require
  $\det_{m,n} \CCZ = \CZ^{(m,n)}$ to be separable, which it is not.

  \begin{theorem} \label{thm:T6}
    $|\CCZ|_\CZ = 6$.
  \end{theorem}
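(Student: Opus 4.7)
The plan is to establish the matching lower bound $|\CCZ|_{\CZ} \geq 6$; the upper bound $|\CCZ|_{\CZ} \leq 6$ is already exhibited by Equation~\ref{circ:ccz}. First I would combine the half-sum bound of Equation~\ref{eq:threehalfs} with the local lower bound $|\CCZ|_{\CZ;\ell} \geq 3$ from Theorem~\ref{thm:ccount} to get $|\CCZ|_{\CZ} \geq \lceil 9/2 \rceil = 5$. The body of the argument then derives a contradiction from the hypothesis that $|\CCZ|_{\CZ} = 5$.

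So suppose $\mathcal{C}$ is a five-\CZ\ circuit computing \CCZ. Since each \CZ\ is incident on two qubits, $\mathcal{C}$ contains exactly $10$ qubit-\CZ\ incidences; the only way to split $10$ into three integers each $\geq 3$ is $(3,3,4)$. Thus two of the three qubits, call them $m$ and $n$, each touch exactly three \CZ\ gates of $\mathcal{C}$. Next I would apply Theorem~\ref{thm:threec} to qubit $m$ to rewrite $\mathcal{C}$ so that every one-qubit gate on $m$ becomes diagonal, the cleanup absorbing leftover \X\ factors into \Z\ gates on the opposing qubits of the \CZ s incident on $m$ (via Equation~\ref{circ:demorgan}). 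A second application of the theorem to qubit $n$ then diagonalizes the one-qubit gates on $n$, and crucially leaves the diagonal structure on $m$ intact: pushing an \X\ on $n$ through any \CZ\ produces only a \Z\ on the opposing qubit, which is still diagonal.

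Finally I would invoke Proposition~\ref{prop:noninvariance} with $\{\ell_1, \ell_2\} = \{m,n\}$ (note $k = 2 < 3 = N$): the existence of such a \CZ-circuit for \CCZ\ with only diagonal one-qubit gates on $m$ and $n$ forces $\det_{m,n}(\CCZ)$ to be separable. A short calculation gives $\det_{m,n}(\CCZ) = \CZ^{(m,n)}$, since the $(j_1,j_2)$-block on the remaining qubit is $\mathrm{diag}(1,(-1)^{j_1 j_2})$ with determinant $(-1)^{j_1 j_2}$, yielding the partial determinant $\mathrm{diag}(1,1,1,-1)$. But $\CZ^{(m,n)}$ is not separable: any factorization $\mathrm{diag}(a,b) \otimes \mathrm{diag}(c,d)$ with $ac = ad = bc = 1$ forces $bd = 1 \neq -1$. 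This contradicts Proposition~\ref{prop:noninvariance} and completes the proof.

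The main obstacle in this program is verifying that the diagonalizations produced by Theorem~\ref{thm:threec} for $m$ and for $n$ can coexist in a single circuit; the resolution is that the cleanup for one qubit only introduces diagonal residue on other qubits, so the two passes do not interfere. Everything else reduces to an incidence count, a one-line partial-determinant computation, and direct invocation of Proposition~\ref{prop:noninvariance}.
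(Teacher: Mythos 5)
Your proposal is correct and follows essentially the same route as the paper: the $(3,3,4)$ incidence count from $|\CCZ|_{\CZ;\ell}\ge 3$, diagonalization of one-qubit gates on the two three-\CZ\ qubits via Theorem~\ref{thm:threec}, and the contradiction from Proposition~\ref{prop:noninvariance} since $\det_{m,n}(\CCZ)=\CZ^{(m,n)}$ is not separable. Your extra care about the two diagonalization passes not interfering (the \X-cleanup only deposits diagonal \Z\ residue on other qubits) is exactly the point the paper leaves implicit.
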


  We show in
  Section \ref{sec:ancilla} that the use of ancillae
  can not lower the \CZ-cost of the \CCZ.

  \subsection{\CZ\ counting via the demultiplexing decomposition}
  \label{sec:invariants}

  We now turn to the study of qubit-local \CZ-cost.  To apply
  $P \sim_{\ell} Q \implies |P|_{\CZ;\ell} = |Q|_{\CZ;\ell}$, we
  first seek to determine when $P \sim_{\ell} Q$.
  This will be done under the assumption that $P$ and
  $Q$ both commute with $\Z^{(\ell)}$.

  \begin{definition}
    Let $U$ commute with $\Z^{(\ell)}$. Then the
    $\ell$-mux-spectrum $\inv^{(\ell)}(U)$ is the
    multi-set of eigenvalues, taken with
    multiplicity, of $U_1^\dagger U_0$.  Two multi-sets
    $S, T$ are said to be congruent, $S \cong T$, if there
    exists a nonzero scalar $\lambda$ such that either $\lambda
    S = T$ or $\lambda S =
    T^\dagger$. 
  \end{definition}

  We note that before taking the $\ell$-mux-spectrum of $U$,
  it is necessary to fix the number of qubits on which
  $U$ acts :
  $\inv^{(\ell)}(U \otimes I)$ contains $\dim I$ copies of
  $\inv^{(\ell)}(U)$.

  \begin{theorem} \label{thm:loceq}
    Suppose $P, Q$ commute with $\Z^{(\ell)}$.  Then $P \sim_\ell Q
    \iff \inv^{(\ell)}(P) \cong \inv^{(\ell)}(Q)$.
  \end{theorem}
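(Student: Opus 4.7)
The plan is to translate the equation defining $\sim_\ell$ into block form with respect to qubit $\ell$, extract an expression for $Q_1^\dagger Q_0$ in terms of $P_1^\dagger P_0$, and then invoke the fact that unitaries with equal spectrum are unitarily conjugate.

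\textbf{Forward direction.} Suppose $(a \otimes A) P (b \otimes B) = Q$ with $P, Q$ block-diagonal on $\ell$, write $P = \mathrm{diag}(P_0, P_1)$, $Q = \mathrm{diag}(Q_0, Q_1)$, and regard $a = (a_{ij})$, $b = (b_{ij})$ as $2 \times 2$ matrices. The off-diagonal blocks of the LHS must vanish, which (after cancelling the unitaries $A, B$) gives $a_{00} b_{01} P_0 + a_{01} b_{11} P_1 = 0$ and $a_{10} b_{00} P_0 + a_{11} b_{10} P_1 = 0$. Assuming for the moment that $P_0$ and $P_1$ are linearly independent, a short case split on whether $a_{00} = 0$ forces $a, b$ to be either both diagonal or both anti-diagonal. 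In the diagonal subcase $Q_i = a_{ii} b_{ii} A P_i B$, whence
\[ Q_1^\dagger Q_0 = \overline{a_{11} b_{11}}\, a_{00} b_{00} \cdot B^\dagger (P_1^\dagger P_0) B, \]
so $\inv^{(\ell)}(Q) = \lambda \inv^{(\ell)}(P)$ with $|\lambda|=1$. In the anti-diagonal subcase the block calculation yields $Q_0 = a_{01} b_{10} A P_1 B$ and $Q_1 = a_{10} b_{01} A P_0 B$, so $Q_1^\dagger Q_0$ is a unit-modulus scalar times $B^\dagger (P_1^\dagger P_0)^\dagger B$, which gives $\inv^{(\ell)}(Q) = \lambda' \, \inv^{(\ell)}(P)^\dagger$. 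Either way the multisets are congruent.

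\textbf{Reverse direction.} Suppose $\inv^{(\ell)}(P) \cong \inv^{(\ell)}(Q)$. In the case $\lambda \inv^{(\ell)}(P) = \inv^{(\ell)}(Q)$, the unitaries $\lambda (P_1^\dagger P_0)$ and $Q_1^\dagger Q_0$ have the same spectrum and are therefore unitarily conjugate, so pick $B$ with $B^\dagger (\lambda P_1^\dagger P_0) B = Q_1^\dagger Q_0$. Setting $a = \mathrm{diag}(\lambda, 1)$, $b = I$, and $A := \lambda^{-1} Q_0 B^\dagger P_0^\dagger$, one checks $A A^\dagger = I$ using $|\lambda|=1$, and then verifies $A P_1 B = Q_1$ directly from the defining equation for $B$, yielding $(a \otimes A) P (b \otimes B) = Q$. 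In the case $\lambda \inv^{(\ell)}(P) = \inv^{(\ell)}(Q)^\dagger$, use anti-diagonal $a, b$ (for instance $a = \X$ together with a phase carried by $b$); the block product then interchanges the roles of $P_0$ and $P_1$, matching the conjugated spectrum, and the analogous spectral-theorem construction produces $A, B$.

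\textbf{Degenerate case and obstacle.} If $P_0$ and $P_1$ are linearly dependent, then $P_0 = \mu P_1$ with $|\mu|=1$, giving $\inv^{(\ell)}(P) = \{\mu, \ldots, \mu\}$ and $P = \mathrm{diag}(\mu, 1) \otimes P_1$. Then $Q = (a \, \mathrm{diag}(\mu,1)\, b) \otimes (A P_1 B)$, and the constraint that $Q$ commutes with $\Z^{(\ell)}$ forces the first factor to be diagonal, so $\inv^{(\ell)}(Q)$ is likewise a single eigenvalue with multiplicity; congruence is trivial and the reverse construction is immediate by choosing $a$ to carry the required phase. The main technical nuisance is bookkeeping the scalar factors and checking unitarity of the constructed $A$ in the anti-diagonal case; once one recognises that $P_1^\dagger P_0$ is the unique invariant modulo the one-qubit freedom on $\ell$ (which can only rescale it or replace it by its inverse), the theorem reduces to the spectral characterisation of unitary conjugacy classes.
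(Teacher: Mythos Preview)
Your proof is correct. The forward direction is essentially identical to the paper's: both reduce to the dichotomy that $a,b$ are simultaneously diagonal or anti-diagonal (the paper cites its Lemma~\ref{lem:sp1}/Corollary~\ref{cor:sp1:replace} for this, while you re-derive it from the vanishing of the off-diagonal blocks), and then compute $Q_1^\dagger Q_0$ in each case.

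Your reverse direction is a more direct variant of the paper's argument. The paper first normalises via $\X^{(\ell)}$ and $R_z^{(\ell)}$ so that the two $\ell$-mux-spectra are literally equal, then invokes the demultiplexing decomposition $P' = (I\otimes M_P) R_z^{(\ell)}(\delta_P)(I\otimes N_P)$ for both operators and matches the central diagonal parts by a permutation. You instead absorb the scalar into $a=\mathrm{diag}(\lambda,1)$ and apply the spectral theorem in one step: since $\lambda P_1^\dagger P_0$ and $Q_1^\dagger Q_0$ are unitaries with the same spectrum, they are unitarily conjugate, and your explicit formula $A=\lambda^{-1}Q_0 B^\dagger P_0^\dagger$ does the rest. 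The underlying content is the same (the demultiplexing decomposition is precisely a simultaneous diagonalisation of $P_0,P_1$, equivalently of $P_1^\dagger P_0$), but your version avoids naming the decomposition and is slightly more self-contained. The paper's route has the mild advantage that it plugs directly into the circuit-decomposition formalism used elsewhere in the paper; yours has the advantage of being shorter and making the role of the spectral invariant more transparent.
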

  \begin{proof}
    $(\Rightarrow)$.  As $P \sim_{\ell} Q$, there are gates $a, b, A, B$ such
    that
    \[
    \qc{ \lstick{\ell} & \gate{b} & \controlu \qw & \gate{a} & \qw
      & \ccteq{1} & & \controlu \qw & \qw \\
      \csl & \gate{B} & \gate{P} \qwx & \gate{A} & \qw & \ccteqg & &
      \gate{Q} \qwx & \qw }
    \]
    By Corollary \ref{cor:sp1:CZ}, we
    may assume that either $a, b$ or $a \X, b\X$
    are diagonal.  In the first case,
    $Q_0 = a_0 b_0 A P_0 B$ and $Q_1 = a_1
    b_1 A P_1 B$. Thus $Q_1^\dagger Q_0 = (a_1 b_1)^\dagger a_0 b_0
    B^\dagger P_1^\dagger P_0 B$, which has the same eigenvalues as
    $(a_1 b_1)^\dagger a_0 b_0 P_1^\dagger P_0$. Thus
    $\inv^{(\ell)}(P) \cong \inv^{(\ell)}(Q)$.

    Otherwise, $a' = a \X$ and $b' = \X b$ are diagonal.
    Now $Q_0^\dagger Q_1 = (a'_1 b'_1)^\dagger a'_0 b'_0 B^\dagger
    P_0^\dagger P_1 B$, which has the same eigenvalues as $(a'_1
    b'_1)^\dagger a'_0 b'_0 P_0^\dagger P_1$, whose eigenvalues in turn are
    the complex conjugates of those of $a'_1 b'_1 (a'_0 b'_0)^\dagger
    P_1^\dagger P_0$; again $\inv^{(\ell)}(P) \cong \inv^{(\ell)}(Q)$.

    \vspace{2mm} $(\Leftarrow)$.  By supposition, the
    $\inv^{(\ell)}(P) \cong \inv^{(\ell)}(Q)$
    We note $\inv^{(\ell)}( \X^{(\ell)}P
    \X^{(\ell)} ) = \inv(P)^\dagger$ and $\inv( (R_z^{(\ell)}(\lambda)
    P ) = e^{2i\lambda} \inv(P)$.  Therefore we can readily find an
    operator $P' \sim_{\ell} P$ such that the
    $\ell$-mux-spectrum of $P$ is identical,
    rather than merely congruent, to that
    of $Q$.  It remains to show that $P' \sim_{\ell} Q$.

    By the demultiplexing decomposition (Equation \ref{eq:demux})
    there exist unitary operators $M_P, N_P$
    and a real diagonal matrix $\delta_P$, all of which
    operate on the qubits other than $\ell$, such that
    $P' = [I \otimes M_P] R_z^{(\ell)}(\delta_P) [I \otimes N_P]$.
    Likewise we
    decompose $Q = [I \otimes M_Q] R_z^{(\ell)}(\delta_Q) [I \otimes N_Q]$.
    If we let $\Delta_P = \exp(i \delta_P)$ and $\Delta_Q = \exp(i \delta_Q)$,
    then the $\ell$-mux-spectra of $P'$ and $Q$ are respectively the
    entries of $\Delta_P^2$ and $\Delta_Q^2$.
    Since $\inv^{(\ell)}(P)=\inv^{(\ell)}(Q)$, there
    must exist a permutation matrix $\pi$ acting on the qubits other than $\ell$
    such that $\pi \Delta_P^2 \pi^\dagger = \Delta_Q^2$.
    Rearranging, we have $\Delta_Q^\dagger \pi \Delta_P = \Delta_Q \pi
    \Delta_P^\dagger$.  Writing $K$ for this term,
    $[I \otimes M_Q K
    M_P^\dagger] P' [I \otimes N_P^\dagger \pi^\dagger N_Q] = Q$.
    Thus $P' \sim_{\ell} Q$.
  \end{proof}

  We now apply Theorem \ref{thm:loceq} to prove the following result
  relating $\inv^{(\ell)}(P)$ and $|P|_{\CZ;\ell}$.  We emphasize that
  the number of qubits on which $P$ acts is an unspecified
  parameter in both of these functions.

  \begin{theorem} \label{thm:ccount}
    Let $P$ commute with $\Z^{(\ell)}$.
    \begin{itemize}
    \item $|P|_{\CZ;\ell} = 0$ iff  $\inv^{(\ell)}(P) \cong \{1,1,\ldots\}$.
    \item $|P|_{\CZ;\ell} = 1$ iff  $\inv^{(\ell)}(P) \cong \{1,-1,1,-1,\ldots\}$
    \item $|P|_{\CZ;\ell} \le 2$ iff  $\inv^{(\ell)}(P)$ is congruent to some
      multi-set $S$ of unit norm complex numbers which come in
      conjugate pairs.
    \end{itemize}
  \end{theorem}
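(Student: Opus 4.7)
The plan is to use Theorem \ref{thm:loceq} as a dictionary: $|\cdot|_{\CZ;\ell}$ is constant on $\sim_\ell$-equivalence classes, which are themselves detected by $\inv^{(\ell)}$ up to congruence. For each prescribed cost I identify a canonical representative, compute its $\ell$-mux-spectrum by direct block calculation, and then invert the correspondence to obtain both implications.

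The cases $|P|_{\CZ;\ell} = 0$ and $|P|_{\CZ;\ell} = 1$ are short. If $|P|_{\CZ;\ell} = 0$, a minimal circuit writes $P = d^{(\ell)} \cdot (I \otimes P')$ with $d$ diagonal (forced by $[P,\Z^{(\ell)}]=0$), so $P_1^\dagger P_0$ is a scalar multiple of $I$ and $\inv^{(\ell)}(P) \cong \{1,1,\ldots\}$; the converse is immediate from $P_0 = \mu P_1$ yielding the product form. If $|P|_{\CZ;\ell} = 1$, a minimal circuit has shape $[\alpha^{(\ell)} A]\,\CZ^{(\ell,m)}\,[\beta^{(\ell)} B]$, so $P \sim_\ell \CZ^{(\ell,m)}$; a direct computation using $\CZ^{(\ell,m)}_0 = I$ and $\CZ^{(\ell,m)}_1 = \Z^{(m)}$ yields $\inv^{(\ell)}(\CZ^{(\ell,m)}) = \{1,-1,1,-1,\ldots\}$. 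The converses then follow from Theorem \ref{thm:loceq}, with the non-congruence of $\{1,1,\ldots\}$ and $\{1,-1,\ldots\}$ preventing overlap between the two classes when $N \ge 2$.

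The main work is the case $|P|_{\CZ;\ell} \le 2$. For the forward direction with exactly two \CZ{}s touching $\ell$, Corollary \ref{cor:twoc} reduces any minimal circuit to the normal form
\[ P = [r \otimes R]\,\CZ^{(\ell,m)}\,[s \otimes S]\,\CZ^{(\ell,m')}\,[t \otimes T] \]
with $r,s,t$ diagonal one-qubit operators on $\ell$ and $m' \in \{m,n\}$. A block computation then gives $P_1^\dagger P_0 = \lambda \cdot T^\dagger W T$, where $\lambda$ is a unit scalar and $W = \Z^{(m')} S^\dagger \Z^{(m)} S$. The crucial observation is that $\Z^{(m')} W \Z^{(m')} = W^\dagger$, uniformly in both subcases $m = m'$ and $m \ne m'$, since $\Z^{(m')}$ is self-inverse and commutes with $\Z^{(m)}$. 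Hence the spectrum of $W$ is closed under complex conjugation, so $\lambda^{-1}\inv^{(\ell)}(P)$ is a multi-set of unit-modulus conjugate pairs.

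For the converse, I exhibit an explicit two-\CZ realization: let $Q = \CZ^{(\ell,m)} \cdot R_y^{(m)}(\Delta) \cdot \CZ^{(\ell,m)}$ with $\Delta$ a diagonal parameterizing a multiplexed $Y$-rotation on the remaining qubits. Using $\Z R_y(\theta)^\dagger \Z = R_y(\theta)$, the block computation yields $\inv^{(\ell)}(Q) = \{e^{\pm 2i\theta_j}\}$ as $\theta_j$ ranges over the entries of $\Delta$. Any prescribed multi-set of conjugate pairs $\{\mu_j, \overline{\mu_j}\}$ with $|\mu_j|=1$ is achieved by choosing $\theta_j$ with $e^{2i\theta_j} = \mu_j$, and Theorem \ref{thm:loceq} then gives $P \sim_\ell Q$ so that $|P|_{\CZ;\ell} \le |Q|_{\CZ;\ell} = 2$. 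The principal obstacle is establishing the $\Z^{(m')}$-conjugation symmetry of $W$ uniformly across both shapes of the normal form; once that is in hand, the forward direction reduces to an eigenvalue count and the backward direction to a one-line construction.
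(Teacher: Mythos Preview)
Your argument is correct. The first two bullets match the paper essentially verbatim (both reduce to $\inv^{(\ell)}(I)$ and $\inv^{(\ell)}(\CZ^{(\ell,m)})$ via Theorem~\ref{thm:loceq}), and your converse construction for the third bullet is the same operator the paper calls $\Phi(\delta)$.

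The genuine difference is in the forward direction of the third bullet. After invoking Corollary~\ref{cor:twoc}, the paper takes a cosine--sine decomposition of the middle block $B$, commutes the outer factors past the \CZ s, and thereby exhibits $Q \sim_\ell \Phi(\beta)$; the conjugate-pair property then follows from the explicit form of $\inv^{(\ell)}(\Phi(\cdot))$. You instead compute $P_1^\dagger P_0 = \lambda\, T^\dagger W T$ with $W = \Z^{(m')} S^\dagger \Z^{(m)} S$ and observe the one-line identity $\Z^{(m')} W \Z^{(m')} = W^\dagger$, which forces the spectrum of $W$ (hence of $P_1^\dagger P_0$, up to the scalar $\lambda$) to be closed under conjugation. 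Your route is more elementary: it never appeals to the CSD and works uniformly in the two cases $m'=m$ and $m'\ne m$. The paper's route is more structural, since it actually shows that every cost-$\le 2$ operator is $\sim_\ell$-equivalent to some $\Phi(\beta)$; this stronger normal-form statement is not used elsewhere, but it does make the bijection between $\sim_\ell$-classes and admissible spectra more transparent. Either argument suffices for the theorem as stated.
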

  \begin{proof}
    The first and second statements follow immediately from Theorem
    \ref{thm:loceq} and the calculations $\inv^{(\ell)}(I) = \{1,1,\ldots\}$
    and $\inv^{(\ell)}(\CZ^{(\ell,m)}) = \{1,-1,1,-1,\ldots\}$.
    To perform the relevant calculation for the third statement,
    we will use Corollary \ref{cor:twoc}.

    Let $\ell$ be the most significant qubit.
    For $\delta$ a diagonal real operator acting on all
    qubits but $\ell$, define $\Phi(\delta)$ by
    \[
    \qc{
      \lstick{\ell} & \controlu \qw & \qw & \ccteq{2} & & \qw & \control \qw
      & \qw     & \control \qw      & \qw         & \qw \\
      & \multigate{1}{\Phi(\delta)} \qwx & \qw & \ccteqg & & \qw &
      \control \qw \qwx & \gate{R_y(\delta)} & \control \qw \qwx
      & \qw & \qw \\
      \csl & \ghost{\Phi(\delta)} & \qw & \ccteqg & \csl & \qw & \qw &
      \controlu \qw \qwx & \qw & \qw & \qw }
    \]
    By construction, $|\Phi(\delta)|_{\CZ;\ell} \le 2$.  We compute
    $\inv^{(\ell)}(\Phi(\delta)) = \{e^{2i \delta_0}, e^{-2i\delta_0}, e^{2i\delta_1},
    e^{-2i\delta_1}, \ldots, \}$.

    $(\Leftarrow)$ Write the entries of $S$ as $
    e^{i \phi}\cdot\{e^{i \theta_0},
    e^{-i \theta_0}, e^{i \theta_1}, e^{-i \theta_1}, \ldots \}$, and
    let $\theta$ be the real diagonal operator acting on
    all qubits but $\ell$
    whose diagonal entries
    are $\theta_0, \theta_1, \ldots$.  By construction,
    $\inv^{(\ell)}(\Phi(\theta/2)) = S$, and $S \cong \inv^{(\ell)}(Q)$
    by hypothesis.  By Theorem \ref{thm:loceq},
    $\Phi(\theta/2) \sim_{\ell} Q$ are $\ell$-equivalent.  It
    follows that $|Q|_{\CZ;\ell} = |\Phi(\theta/2)|_{\CZ;\ell} \le 2$.

    $(\Rightarrow)$ By hypothesis $|Q|_{\CZ;\ell} \le 2$.  If in
    fact $|Q|_{\CZ;\ell} = 0, 1$, note by the first two statements
    of the Theorem, which have been proven,
    the $\ell$-mux-spectrum of $Q$ has the desired property.
    Thus we assume $|Q|_{\CZ;\ell} = 2$.  Let
    $\mathcal{C}$ be a circuit in which this minimal \CZ\ count is
    achieved.  By Corollary \ref{cor:twoc}, we can find an equivalent
    circuit $\mathcal{C'}$ of the following form.
    \[
    \qc{ & \controlu \qw & \qw & \ccteq{3} & & \gate{R_z(\theta)} &
      \control \qw
      & \qw         & \control \qw      & \qw         & \qw \\
      & \multigate{2}{Q} \qwx & \qw & \ccteqg & & \multigate{2}{A} &
      \control \qw \qwx
      & \multigate{2}{B} & \qw \qwx          & \multigate{2}{C} & \qw \\
      & \ghost{Q} & \qw & \ccteqg & & \ghost{A} & \qw
      & \ghost{B}        & \control \qw \qwx & \ghost{C}        & \qw \\
      \csl & \ghost{Q} & \qw & \ccteqg & \csl & \ghost{A} & \qw &
      \ghost{B} & \qw & \ghost{C} & \qw }
    \]
    We have drawn the \CZ s with different lower contacts, but of
    course they might be the same.  Actually, we prefer the latter
    case, and ensure it by incorporating swaps into $B, C$ if
    necessary.  We take a cosine-sine
    decomposition (see Equation \ref{eq:csd}) of $B$
    \[
    \qc{ & \controlu \qw & \qw & \ccteq{2} & & \gate{R_z(\theta)} &
      \control \qw
      & \qw & \qw & \qw        & \control \qw      & \qw         & \qw \\
      & \multigate{1}{Q} \qwx & \qw & \ccteqg & & \multigate{1}{A} &
      \control \qw \qwx & \controlu \qw & \gate{R_y(\beta)} &
      \controlu \qw & \control \qw \qwx
      & \multigate{1}{C} & \qw \\
      \csl & \ghost{Q} & \qw & \ccteqg & \csl & \ghost{A} & \qw &
      \gate{B_L} \qwx & \controlu \qw \qwx & \gate{B_R} \qwx & \qw &
      \ghost{C} & \qw }
    \]
    Note that the $B_L$ and $B_R$ gates commute with the \CZ s.  Thus
    $Q \sim_{\ell} \Phi(\beta)$.  By Theorem
    \ref{thm:loceq}, the $\inv^{(\ell)}(Q) \cong \inv^{(\ell)}(\Phi(\beta))$.
    But we have already seen that
    $\inv^{(\ell)}(\Phi(\cdot))$ always consists of conjugate
    pairs of unit-norm complex numbers.
  \end{proof}

\subsection{Circuit constraints from the cosine-sine decomposition}
\label{sec:avsc}

  This section is devoted to the study of Equation \ref{eq:apb}.
  We take cosine-sine decompositions of $a,b$.
  Below, $A_l,A_r,B_l,B_r$ are two-qubit diagonal operators, and
  $\alpha, \beta$ are $2\times 2$ real diagonal matrices of angular parameters.

  \begin{equation}\label{eq:csd:b}
    \qc{
      \lstick{1} & \gate{b} & \qw & \ccteq{1} & \multigate{1}{B_L}
      & \gate{R_y(-\beta)}
      & \multigate{1}{B_R} & \qw \\
      \lstick{2} & \controlu \qw \qwx & \qw & \ccteqg & \ghost{B_L}
      & \controlu \qw \qwx
      & \ghost{B_R} & \qw \\
    }
  \end{equation}

  \begin{equation} \label{eq:csd:a}
    \qc{
      \lstick{1} & \gate{a} & \qw & \ccteq{1} & \multigate{1}{A_L}
      & \gate{R_y(\alpha)}
      & \multigate{1}{A_R} & \qw \\
      \lstick{2} & \controlu \qw \qwx & \qw & \ccteqg
      & \ghost{A_L} & \controlu \qw \qwx
      & \ghost{A_R} & \qw \\
    }
  \end{equation}
  Define $\tilde{P} = A_L P B_R$ and $\tilde{Q} = A_R^\dagger Q B_L^\dagger$
  to obtain:

  \begin{equation}
    \qc{
      \lstick{1} & \gate{R_y(-\beta)}    & \controlu \qw
      & \gate{R_y(\alpha)} & \qw
      & \ccteq{2} & & \controlu \qw & \qw \\
      \lstick{2} & \controlu \qw \qwx & \multigate{1}{\tilde{P}} \qwx
      & \controlu \qw \qwx & \qw
      & \ccteqg & & \multigate{1}{\tilde{Q}} \qwx & \qw \\
      \csl & \qw & \ghost{\tilde{P}} & \qw & \qw & \ccteqg &
      & \ghost{\tilde{Q}} & \qw}
  \end{equation}

  We recall the standard argument used to measure the uniqueness of the
  KAK decomposition \cite{Knapp:98}.
  Throughout this discussion, we will write simply $R_y(\alpha)$
  for $R_y^{(1)}(\alpha^{(2)})$, and similarly for $R_y(\beta)$.
  Rearrange the equation to obtain
  $\tilde{Q}^\dagger R_y(\alpha) \tilde{P} = R_y(\beta)$.
  Transforming the equation by $k \mapsto \Z^{(1)} k^\dagger \Z^{(1)}$,
  we get $\tilde{P}^\dagger R_y(\alpha)
  \tilde{Q} = R_y(\beta)$.  Multiplying
  these equations yields
  $\tilde{P}^\dagger R_y(2\alpha) \tilde{P} =
  R_y(2\beta)$.
  Thus $R_y(2\alpha)$ and $R_y(2 \beta)$
  have the same eigenvalues.  One can check that
  in fact they are conjugate under
  an element of the 
  group $W$ generated by $\X^{(2)}$ and $\CZ^{(1,2)}$;
  note that these operators commute with $\Z^{(1)}$.  That is, there exists
  $w \in W$ such that $w R_y(2\alpha) w^\dagger = R_y(2\beta)$.  Now let
  $t = w R_y(\alpha) w^\dagger R_y(-\beta)$.  We have both
  $t = R_y(\xi)$ for some
  $2\times 2$ real diagonal matrix $\xi$ acting on qubit 2, and
  $t^2 = I$; it follows that
  $t \in \{\pm I, \pm \Z^{(2)}\}$.  Defining
  $\bar{P} = \tilde{P}\cdot [t w \otimes I]$ and
  $\bar{Q} = \tilde{Q}\cdot [w \otimes I]$ reduces
  our equation to the following.

  \begin{equation} \label{eq:alphacomm}
    \qc{
      \lstick{1} & \gate{R_y(-\alpha)}    & \controlu \qw
      & \gate{R_y(\alpha)} & \qw
      & \ccteq{2} & & \controlu \qw & \qw \\
      \lstick{2} & \controlu \qw \qwx & \multigate{1}{\bar{P}} \qwx
      & \controlu \qw \qwx & \qw
      & \ccteqg & & \multigate{1}{\bar{Q}} \qwx & \qw \\
      \csl & \qw & \ghost{\bar{P}} & \qw & \qw & \ccteqg &
      & \ghost{\bar{Q}} & \qw}
  \end{equation}

  By an argument similar to that
  given for $\tilde{P}$ and $\tilde{Q}$, the
  operators $\bar{P}$ and $\bar{Q}$ both commute with
  $R_y(2\alpha)$.  Conjugation by $R_y(\alpha)$ is
  an involution on the set of operators commuting with
  $R_y(2\alpha)$; Equation \ref{eq:alphacomm} says that
  $P$ and $Q$ are interchanged by this involution.
  In fact, this involution always has a simpler description:

  \begin{lemma}  \label{lem:conjredux}
    Equation \ref{eq:alphacomm} also holds for
    some $\tilde{\alpha}$ for which
    $\tilde{\alpha}_i$ is an integer or half-integer
    multiple of $\pi$.  Half-integers occur
    if and only if $2 \alpha_i$ is an odd integer multiple of $\pi$.
  \end{lemma}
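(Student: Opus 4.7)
The plan is to extract two independent constraints on $\alpha$ from Equation~\ref{eq:alphacomm}: one from the given commutation $\bar P \cdot R_y(2\alpha) = R_y(2\alpha) \cdot \bar P$, the other from the fact that both sides of the equation are operators that do not act on qubit $1$.

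First I would expand $\bar P = \sum_{i,j} \ket{i}\bra{j}^{(2)} \otimes P_{ij}$ with each $P_{ij}$ acting on qubits $3, \ldots, N$. Using $Z R_y(\theta) Z = R_y(-\theta)$, a direct calculation gives $\CZ\, R_y^{(1)}(\gamma^{(2)})\, \CZ = R_y^{(1)}((\gamma_0, -\gamma_1)^{(2)})$. Conjugating Equation~\ref{eq:alphacomm} by $\CZ$ and simplifying then rewrites it as
\[
\bar Q \;=\; \sum_{i,j} \ket{i}\bra{j}^{(2)} \otimes R_y(\mu_i - \mu_j)^{(1)} \, P_{ij}, \qquad \mu := (-\alpha_0,\, \alpha_1).
\]
Since $\bar Q$ does not act on qubit $1$, we need $R_y(\mu_i - \mu_j) = I$ whenever $P_{ij} \neq 0$. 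Expanding the commutation of $\bar P$ with $R_y(2\alpha)$ in the same way gives $R_y(2\alpha_i - 2\alpha_j) = I$ whenever $P_{ij} \neq 0$.

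I then split into two cases. In Case A, $P_{01} = P_{10} = 0$, so $\bar P$ is block-diagonal on qubit $2$; both constraints are vacuous and the displayed identity collapses to $\bar Q = \bar P$, placing no condition on $\alpha$. I may therefore pick $\tilde\alpha_i = 0$ when $2\alpha_i$ is not an odd multiple of $\pi$, and $\tilde\alpha_i = \pi/2$ otherwise; the equation still holds with the unchanged $\bar P = \bar Q$. In Case B, some $P_{ij}$ with $i \neq j$ is nonzero, so the two constraints read $\alpha_0 - \alpha_1 \in \pi\ZZ$ and $\alpha_0 + \alpha_1 \in 2\pi\ZZ$; adding and subtracting yields $2\alpha_0, 2\alpha_1 \in \pi\ZZ$, i.e., $\alpha_0, \alpha_1 \in \tfrac{\pi}{2}\ZZ$. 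So $\alpha$ already lies in the required form, with $\alpha_i$ a half-integer multiple of $\pi$ iff $2\alpha_i$ is an odd multiple of $\pi$; take $\tilde\alpha = \alpha$.

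The main subtle point I anticipate is correctly tracking the sign twist that $\CZ$-conjugation introduces on the multiplexed $R_y$ gate -- the second qubit-$2$ component flips -- which is precisely why the no-qubit-$1$-content constraint reads $\alpha_0 + \alpha_1 \in 2\pi\ZZ$ rather than another copy of $\alpha_0 \equiv \alpha_1 \pmod \pi$. The interplay between this ``sum'' constraint and the ``difference'' constraint from the $R_y(2\alpha)$-commutation is what pins $\alpha$ into $\tfrac{\pi}{2}\ZZ$ in Case B, after which the biconditional on half-integers is immediate from parity.
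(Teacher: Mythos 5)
There is a genuine gap, and it starts with a misreading of the notation in Equation \ref{eq:alphacomm}. The control-on-box on qubit 1 does \emph{not} mean that $\bar{P}$ and $\bar{Q}$ act only on qubits $2,\ldots,N$; per Section 2.2, it means they act on \emph{all} qubits, including qubit 1, and merely commute with $\Z^{(1)}$ (they are multiplexors with select qubit 1 -- indeed $P,Q$ in Equation \ref{eq:apb} contain a $\CZ$ touching qubit 1, and $\bar{P},\bar{Q}$ inherit this). So in your expansion $\bar{P}=\sum_{i,j}\ket{i}\bra{j}^{(2)}\otimes P_{ij}$ the blocks $P_{ij}$ act on qubits $1,3,\ldots,N$, and your ``second constraint'' evaporates: from the equation you may only conclude that each $R_y(\alpha_i)P_{ij}R_y(-\alpha_j)$ commutes with $\Z^{(1)}$, which is much weaker than $R_y(\cdot)=I$ on nonzero blocks. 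The whole difficulty of the lemma lives precisely in these nontrivial qubit-1 components. Moreover, even granting your premise, the two constraints you extract are not independent: conjugating the whole equation by $\CZ^{(1,2)}$ flips the sign of \emph{both} multiplexed rotations in the same way, so both the commutation with $R_y(2\alpha)$ and the ``no qubit-1 content'' condition reduce to $\alpha_0-\alpha_1\in\pi\ZZ$; the claimed sum constraint $\alpha_0+\alpha_1\in2\pi\ZZ$ (your $\mu=(-\alpha_0,\alpha_1)$ formula) does not follow. A concrete counterexample to your Case B conclusion, inside your own framework: take $\alpha_0=\alpha_1=\theta$ arbitrary and $\bar{P}$ with a nonzero off-diagonal qubit-2 block; the equation holds with $\bar{Q}=\bar{P}$, yet $\alpha\notin\frac{\pi}{2}\ZZ$, so ``take $\tilde\alpha=\alpha$'' fails.

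The deeper point is that the lemma never claims $\alpha$ is \emph{forced} to be a multiple of $\pi/2$; it claims the conjugation can be \emph{re-expressed} with such an angle. The paper's argument uses the fact, established just before the lemma, that $\bar{P}$ lies in the commutant of $R_y(2\alpha)$: write $2\alpha_i=\phi_i+\psi_i \pmod{2\pi}$ with $\phi_i\in(-\pi,\pi)$ and $\psi_i\in\{0,\pi\}$ (so $\psi_i=\pi$ exactly when $2\alpha_i$ is an odd multiple of $\pi$); any operator commuting with $R_y(2\alpha)$ also commutes with $R_y(\phi/2)$, hence on such operators conjugation by $R_y(\alpha)$ coincides with conjugation by $R_y(\tilde\alpha)$, $\tilde\alpha=\alpha-\phi/2$, whose entries are integer multiples of $\pi$ plus $\psi_i/2$. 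That is the mechanism your proposal is missing: you need to use the commutation of $\bar{P}$ with $R_y(2\alpha)$ to \emph{replace} the angle, not to try to pin down $\alpha$ itself, and any block-level case analysis must track the qubit-1 action of the $P_{ij}$ (as in Lemma \ref{lem:comm}) rather than assume it away.
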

  \begin{proof}
    Decompose $2\alpha_i = \phi_i + \psi_i \pmod{2 \pi}$
    where $\phi_i \in (-\pi, \pi)$, where $\psi_i = 0$ unless
    $\phi_i = 0$, and $\psi_i \in \{0, \pi\}$ in any event.  Then
    any operator which commutes with $R_y(2 \alpha)$ also commutes with
    $R_y(\phi / 2)$.  Thus, on operators commuting with $R_y(2 \alpha)$,
    conjugation by $R_y(\alpha)$
    is the same as conjugation by $R_y(\alpha - \phi/2)
    = R_y(\alpha - \phi/2 - \psi/2) R_y(\psi/2)$.
    But $2(\alpha - \phi/2 - \psi/2) = 0 \pmod{2 \pi}$.
  \end{proof}

  We also record the constraints imposed on possible $\bar{P},
  \bar{Q}$ by the value of $\theta = 2 \alpha$.

  \begin{lemma}
    \label{lem:comm}
    Fix distinct qubits $\ell, m$.
    Let $U$ be a unitary operator commuting
    with $\Z^{(\ell)}$,
    and let $\theta$
    be a two-by-two real diagonal matrix of angular
    parameters which is understood
    to operate on $m$.  Then
    $U$ commutes with $R_y^{(\ell)}(\theta)$ if and
    only if one of the following holds:
    \begin{enumerate}
    \item $\cos(\theta)$ is scalar, and either
      \begin{enumerate}
      \item $\sin(\theta) = 0$.
      \item $\sin(\theta)$ is a nonzero scalar and $U_0 = U_1$.
      \item $\Z \sin(\theta)$ is a nonzero scalar and
        $U_0 = \Z^{(m)} U_1 \Z^{(m)}$.
      \end{enumerate}
    \item $\cos(\theta)$ is not scalar, $U$ commutes with $\Z^{(m)}$,
      and either
      \begin{enumerate}
      \item $\sin(\theta_0)= 0$ and $\sin(\theta_1) = 0$.
      \item $\sin(\theta_0)= 0$ and $\sin(\theta_1) \ne 0$
        and $U_{01} = U_{11}$.
      \item $\sin(\theta_0) \ne 0$ and $\sin(\theta_1) = 0$
        and $U_{00} = U_{10}$.
      \item $\sin(\theta_0) \ne 0$ and $\sin(\theta_1) \ne 0$
        and $U_0 = U_1$.
      \end{enumerate}
    \end{enumerate}
  \end{lemma}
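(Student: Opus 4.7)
The plan is to reduce the commutation relation to a small system of equations by exploiting the block structure of $U$ on qubit $\ell$, and then to case-split on the spectrum of $\cos(\theta)$ as an operator on qubit $m$.

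First, since $U$ commutes with $\Z^{(\ell)}$, the Observation in Section~\ref{sec:basic} gives $U = \ket{0}\bra{0}^{(\ell)} \otimes U_0 + \ket{1}\bra{1}^{(\ell)} \otimes U_1$, where $U_0,U_1$ act on the remaining qubits, including $m$. Second, write $C = \cos(\theta)$ and $S = \sin(\theta)$, viewed as diagonal operators on qubit $m$ (with diagonal entries $\cos\theta_j, \sin\theta_j$ respectively), so that $R_y^{(\ell)}(\theta) = C^{(m)} \otimes I^{(\ell)} + S^{(m)} \otimes (iY)^{(\ell)}$, and recall $iY^{(\ell)} = \ket{0}\bra{1}^{(\ell)} - \ket{1}\bra{0}^{(\ell)}$. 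Computing both products $U \cdot R_y^{(\ell)}(\theta)$ and $R_y^{(\ell)}(\theta)\cdot U$ and matching the coefficients of $\ket{i}\bra{j}^{(\ell)}$ for each $(i,j)\in\{0,1\}^2$ turns commutation into the equivalent system
\[
U_0 C = C U_0,\qquad U_1 C = C U_1,\qquad U_0 S = S U_1,
\]
where all three equations live on the qubits other than $\ell$.

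Now case-split on whether $C$ is a scalar operator, i.e.\ whether $\cos\theta_0 = \cos\theta_1$. If $C$ is scalar, the first two equations hold automatically, so commutation is controlled entirely by $U_0 S = S U_1$; the three subcases~(a)–(c) of Case~1 correspond respectively to $S=0$, $S$ a nonzero scalar on $m$, and $S$ a nonzero scalar multiple of $\Z^{(m)}$ (which is the only remaining way for $\cos\theta$ to be scalar while $\sin\theta$ is not, up to sign flips of the entries). Plugging each form of $S$ into $U_0 S = S U_1$ yields the stated relations. If $C$ is not scalar, then $C$ has two distinct eigenvalues at the two computational basis states of $m$, so $U_i C = C U_i$ forces each $U_i$ to be block-diagonal with respect to the standard basis on $m$, i.e.\ $U$ commutes with $\Z^{(m)}$. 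Using the finer decomposition $U_i = \ket{0}\bra{0}^{(m)}\otimes U_{i0} + \ket{1}\bra{1}^{(m)}\otimes U_{i1}$ and writing $S = \operatorname{diag}(s_0,s_1)$, the equation $U_0 S = S U_1$ becomes $s_j U_{0j} = s_j U_{1j}$ for $j=0,1$; this yields precisely subcases~(a)–(d) of Case~2 depending on which $s_j$ vanish.

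The converse in every subcase is immediate: substitute the asserted constraints on $U$ back into the three-equation system and verify. I do not anticipate a genuine obstacle in the argument; the main care required is bookkeeping, namely verifying that Case~1 covers exactly the possibilities for $S$ compatible with $\cos\theta_0 = \cos\theta_1$ (noting in subcase~(c) that $\Z S$ being a nonzero scalar is equivalent to $\sin\theta_0 = -\sin\theta_1 \ne 0$, which together with $\cos\theta_0 = \cos\theta_1$ is a consistent angular condition), and that the four subcases of Case~2 partition the options for $(s_0,s_1)\in\CC^2$.
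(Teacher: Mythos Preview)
Your proof is correct and follows essentially the same approach as the paper's: expand $R_y^{(\ell)}(\theta)$ as $\cos(\theta)^{(m)} + i\Y^{(\ell)}\sin(\theta)^{(m)}$, use the block structure of $U$ on qubit $\ell$ to reduce commutation to the system $[U_i, C] = 0$ and $U_0 S = S U_1$, then case-split on whether $C$ has distinct eigenvalues. The paper's proof is simply terser, summarizing your explicit case analysis with the remark ``repeatedly apply the fact that two-by-two matrices which commute with a two-by-two diagonal matrix with distinct entries are themselves diagonal.''
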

  \begin{proof}
    The $(\Leftarrow)$ direction is trivial.
    For $(\Rightarrow)$, suppose $[R_y^{(\ell)}(\theta^{(m)}), U] = 0$
    and expand using the expression
    $R_y^{(\ell)}(\theta^{(m)}) = \exp (i \Y^{(\ell)} \theta^{(m)}) =
    \cos(\theta)^{(m)}
    + i Y^{(\ell)} \sin(\theta)^{(m)}$ in order to observe
    that $U_0$ and $U_1$ both commute with
    $\cos( \theta)^{(m)}$, and
    $U_0 \sin(\theta)^{(m)} =  \sin(\theta)^{(m)} U_1$.
    Now repeatedly
    apply the fact that two-by-two matrices
    which commute with a two-by-two diagonal matrix with distinct
    entries are themselves diagonal.
  \end{proof}

  Finally, we translate these results back to the original
  operators $P,Q$.

  \begin{lemma} \label{lem:sp}
    In the situation of Equation \ref{eq:apb}, at least
    one of the following must hold.
    \begin{enumerate}
      \item Either $a,b$ are diagonal or $a\X^{(1)},b\X^{(1)}$ are diagonal.
      \item There exists a two-qubit operator $U$
        and two-qubit diagonals $D, D'$ such that
        \[
        \qc{
          & \controlu \qw & \qw & \ccteq{2} &
          & \multigate{1}{D'} & \qw & \multigate{1}{D} & \qw \\
          & \multigate{1}{P} \qwx \qw & \qw & \ccteqg &
          & \ghost{D'} & \multigate{1}{U} & \ghost{D} & \qw \\
          & \ghost{P} & \qw & \ccteqg & & \qw & \ghost{U} & \qw & \qw
        }
        \]

        Similarly, there exists a two-qubit operator $V$
        and two-qubit diagonals $C, C'$ such that
        \[
        \qc{
          & \controlu \qw & \qw & \ccteq{2} &
          & \multigate{1}{C'} & \qw & \multigate{1}{C} & \qw \\
          & \multigate{1}{Q} \qwx \qw & \qw & \ccteqg &
          & \ghost{C'} & \multigate{1}{V} & \ghost{C} & \qw \\
          & \ghost{Q} & \qw & \ccteqg & & \qw & \ghost{V} & \qw & \qw
        }
        \]
      \item Either $P$ or $P\X^{(2)}$ commute with $\Z^{(2)}$.
        There exist replacements $a', b'$ for $a,b$ which
        are in the subgroup generated by
        two-qubit diagonal operators
        on qubits 1 and 2, $\C^{(2)}\X^{(1)}$, and $\X^{(1)}$, such that
        Equation \ref{eq:apb} continues to hold.
    \end{enumerate}
  \end{lemma}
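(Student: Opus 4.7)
The plan is to apply the cosine-sine decompositions (Equations \ref{eq:csd:a}, \ref{eq:csd:b}) to $a$ and $b$, carry the derivation through the Weyl-group uniqueness argument above to reach Equation \ref{eq:alphacomm}, and then classify the three conclusions by applying Lemma \ref{lem:comm} to $\bar{P}$ with $\ell = 1$, $m = 2$, $\theta = 2\alpha$. Throughout, the unwinding formula $P = A_L^\dagger \bar{P} (tw)^\dagger B_R^\dagger$---where $A_L, B_R$ are two-qubit diagonals on qubits $1, 2$, $t \in \{\pm I, \pm \Z^{(2)}\}$, and $w \in W = \langle \X^{(2)}, \CZ^{(1,2)} \rangle$---is the bridge translating properties of $\bar{P}$ back to $P$; an analogous formula $Q = A_R \bar{Q} w^\dagger B_L$ handles $Q$.

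First I would dispatch Case 2 of Lemma \ref{lem:comm}, which forces $\bar{P}$ to commute with $\Z^{(2)}$. All of $A_L^\dagger, B_R^\dagger, t, \CZ^{(1,2)}$ commute with $\Z^{(2)}$, and each $\X^{(2)}$ factor inside $w$ anti-commutes, so $P$ or $P\X^{(2)}$ commutes with $\Z^{(2)}$, giving the first half of Conclusion 3. For the replacements $a', b'$, I would apply Lemma \ref{lem:conjredux} to replace $\alpha$ by $\tilde\alpha$ with $\tilde\alpha_i \in (\pi/2)\ZZ$ (and similarly $\beta$) without altering the conjugation action on $\bar{P}$, so Equation \ref{eq:apb} continues to hold with $\tilde\alpha$ in place of $\alpha$. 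The resulting $R_y^{(1)}(\tilde\alpha)$ is one of $\pm I, \pm\Z^{(2)}, \pm i\Y^{(1)}, \pm i\Y^{(1)}\Z^{(2)}$; using $\Y = i\X\Z$ and direct computations such as $R_y(\tilde\alpha) = \CZ^{(1,2)}\cdot \C^{(2)}\X^{(1)}$ for the asymmetric cases, each of these factors as a two-qubit diagonal times $I$, $\X^{(1)}$, or $\C^{(2)}\X^{(1)}$. Hence $a' = A_L R_y(\tilde\alpha) A_R$ and its analogue $b'$ lie in the claimed subgroup.

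Next I would turn to Case 1 of Lemma \ref{lem:comm} ($\cos(2\alpha)$ scalar), splitting into the three sub-cases. In sub-case 1a ($\sin(2\alpha) = 0$, so $\alpha_i \in (\pi/2)\ZZ$ with $\alpha_0 \equiv \alpha_1 \pmod \pi$), $R_y^{(1)}(\alpha)$ equals (up to sign) either a two-qubit diagonal or $\X^{(1)}$ times a two-qubit diagonal, so $a = A_L R_y(\alpha) A_R$ is two-qubit diagonal or $a\X^{(1)}$ is. The relation $w R_y(2\alpha) w^\dagger = R_y(2\beta)$---with $W$ acting on parameters by permutation and sign-flip---preserves the partition of $(\pi/2)\ZZ$ into integer and half-integer multiples of $\pi$, so $\beta$ has the same type and $b$ (or $b\X^{(1)}$) is diagonal accordingly, yielding Conclusion 1. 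In sub-cases 1b, 1c, the q1-blocks of $\bar{P}$ satisfy $\bar{P}_0 = \bar{P}_1$ or $\bar{P}_0 = \Z^{(2)}\bar{P}_1\Z^{(2)}$, so $\bar{P} = D^*[I^{(1)} \otimes U_0^{(2,3)}] D^*$ for $D^* = I$ in 1b and $D^* = \Z^{(2)}\CZ^{(1,2)}$ in 1c (both two-qubit diagonal on qubits $1, 2$). Unwinding $P$, the diagonal portion of $(tw)^\dagger$ together with $A_L^\dagger, B_R^\dagger, D^*$ collapses into two-qubit diagonals $D, D'$ flanking the middle factor; a possible $\X^{(2)}$ component of $w$ is absorbed into $U_0$ via $[I \otimes U_0]\X^{(2)} = I \otimes (U_0\X^{(2)})$ after commuting it past intervening diagonals using $\X^{(2)} D \X^{(2)} = \tilde D$. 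The result is $P = D \cdot U^{(2,3)} \cdot D'$; the identical analysis for $\bar{Q}$ yields the matching decomposition for $Q$, giving Conclusion 2.

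The main obstacle will be careful bookkeeping of all ``fudge factors''---the CSD outer diagonals $A_L, A_R, B_L, B_R$, the sign/$\Z^{(2)}$ factor $t$, the Weyl element $w$, and the diagonal twist $D^*$ in sub-case 1c---verifying that every non-$U$ piece ends up absorbed into the flanking two-qubit diagonals $D, D'$ without leaving a residual non-diagonal piece on qubit $1$ between them, which would spoil the $U^{(2,3)}$ form of Conclusion 2. A secondary subtlety is confirming that the $W$-action preserves the diagonal-versus-anti-diagonal type in sub-case 1a so that $a$ and $b$ match in Conclusion 1, and that the explicit factorizations like $R_y(\tilde\alpha) = \CZ^{(1,2)}\cdot\C^{(2)}\X^{(1)}$ cover every combination of signs of $\tilde\alpha_0, \tilde\alpha_1$ needed for Conclusion 3.
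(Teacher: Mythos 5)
Your proposal is correct and follows essentially the same route as the paper: take the cosine--sine decompositions of $a$ and $b$, pass through the Weyl-group uniqueness argument to Equation \ref{eq:alphacomm}, and then read off the three conclusions from Lemma \ref{lem:comm} applied with $\theta = 2\alpha$ (Case 1.a giving Conclusion 1, Cases 1.b--1.c giving Conclusion 2, Case 2 together with Lemma \ref{lem:conjredux} giving Conclusion 3), exactly the correspondence the paper's own terse proof invokes. Your explicit unwinding of the fudge factors $A_L, A_R, B_L, B_R, t, w$ simply supplies detail the paper leaves implicit, and the bookkeeping you flag works out as you expect.
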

  \begin{proof}
    This amounts to unwinding the above discussion in light of Lemma
    \ref{lem:comm}.
    Case I comes from Case 1.a of the Lemma;
    the $\X$ appears because of the $2$ in $\theta = 2\alpha$.
    Case II comes from Cases 1.b and 1.c.
    The first claim in Case III
    is just Case 2 of the Lemma; the possible
    $\X$ here comes from the $w$ factor
    in $\bar{P} = \tilde{P}t w$ from the discussion above.
    The second claim follows from Lemma \ref{lem:conjredux}.
  \end{proof}

  While we cannot completely characterize operators with $|\cdot|_{\CZ;\ell}=3$,
  we can characterize $\CZ^{(\ell)}$-minimal
  circuits which compute them.

  \begin{theorem} \label{thm:threec}
    Fix a qubit $\ell$, and suppose
    $M$ commutes with $\Z^{(\ell)}$.  Suppose $|M|_{\CZ;j} = 3$, and
    let $\mathcal{C}$ be a $\CZ^{(j)}$-circuit exhibiting this bound.
    Then all one-qubit gates of $\mathcal{C}$ on $\ell$ are diagonal
    or anti-diagonal.
  \end{theorem}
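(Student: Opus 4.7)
Write the minimal circuit $\mathcal{C}$ as
\[
M = (a_4 \otimes A_4)\, \CZ^{(\ell, m_3)} (a_3 \otimes A_3)\, \CZ^{(\ell, m_2)} (a_2 \otimes A_2)\, \CZ^{(\ell, m_1)} (a_1 \otimes A_1),
\]
where each $a_i$ denotes the collapsed one-qubit gate on $\ell$ between consecutive $\CZ$s on $\ell$ (or at the input/output boundary) and each $A_i$ is the off-$\ell$ subcircuit in the corresponding region. It suffices to show each $a_i$ is diagonal or anti-diagonal.

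The strategy is, for each $a_i$, to isolate a two-$\CZ$-on-$\ell$ subcircuit in which $a_i$ plays the role of an outer one-qubit gate, apply Lemma \ref{lem:sp} (or Lemma \ref{lem:sp2} when appropriate), and use the minimality of $|M|_{\CZ;\ell} = 3$ to rule out the alternative structural cases. For the outermost gates $a_1, a_4$, I would group the intermediate $\CZ^{(\ell, m_2)}$ and the gates $a_2, a_3, A_2, A_3$ into a middle piece $P$, then identify the overall form with Equation \ref{eq:apb}. A permutation on the off-$\ell$ qubits (absorbed into the $A_i$) lets us assume $m_1 = m_3$ without changing the $\CZ$-on-$\ell$ count. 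Case I of Lemma \ref{lem:sp} then immediately gives $a_1, a_4$ diagonal or anti-diagonal. In Cases II or III, the lemma imposes a structural constraint on $P$---either a two-qubit-operator factorization or commutativity (up to $\X^{(m_2)}$) with $\Z^{(m_2)}$---and Equation \ref{circ:demorgan} together with the identity $\CZ^2 = I$ allows the middle $\CZ^{(\ell, m_2)}$ to be absorbed by its neighbors, yielding a $\CZ^{(\ell)}$-circuit for $M$ with at most two $\CZ$s incident on $\ell$---contradicting $|M|_{\CZ;\ell} = 3$. For the inner gates $a_2, a_3$: once $a_1, a_4$ are known to be diagonal or anti-diagonal, Equation \ref{circ:demorgan} lets me push them across the adjacent $\CZ$s at the cost of $\Z$s on the other qubits (absorbed into the $A_i$), so that $a_2$ or $a_3$ becomes an outer gate in a new two-$\CZ$ subcircuit, and I repeat the argument.

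The main obstacle will be the bookkeeping in Cases II and III of Lemma \ref{lem:sp}: verifying that the structural form on $P$ really does allow the middle $\CZ^{(\ell, m_2)}$ to be absorbed so as to reduce the $\CZ$-on-$\ell$ count. The details branch on whether the $m_i$ coincide, and one must check that the swap-conjugation and the pushing of $\X$/$\Z$ gates through the $\CZ$s does not inadvertently introduce new $\CZ$s incident on $\ell$. There is also the subtlety that the middle piece $P = (a_3 \otimes A_3)\CZ^{(\ell, m_2)}(a_2 \otimes A_2)$ inherits the possibly non-diagonal gates $a_2, a_3$ and so does not \emph{a priori} commute with $\Z^{(\ell)}$ as Equation \ref{eq:apb} demands; accommodating this likely requires first extracting partial diagonal structure via Lemma \ref{lem:sp2} (whose hypotheses on $P$ are weaker, though its conclusions are correspondingly weaker) and then bootstrapping to the full conclusion of Lemma \ref{lem:sp} on a simplified circuit.
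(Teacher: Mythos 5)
Your overall skeleton (reduce to Equation \ref{eq:apb}, let Case I of Lemma \ref{lem:sp} deliver the conclusion, and rule out the other cases by minimality) matches the paper's strategy, but two genuine gaps remain. First, the setup: your middle piece $P = (a_3 \otimes A_3)\,\CZ^{(\ell,m_2)}\,(a_2 \otimes A_2)$ does not commute with $\Z^{(\ell)}$, so neither Equation \ref{eq:apb} nor the cosine-sine canonicity analysis of Section \ref{sec:avsc} that underlies Lemma \ref{lem:sp} applies to it; you flag this yourself, but the proposed repair via Lemma \ref{lem:sp2} does not fit either, since that lemma requires the middle operator not to touch $\ell$ at all. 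The paper's resolution is a different grouping that you are missing: each outer $\CZ$ is absorbed together with \emph{both} its outermost one-qubit gate and the adjacent \emph{inner} one-qubit gate into two-qubit operators $a = e^{(\ell)}\CZ^{(\ell,m)}f^{(\ell)}$ and $b = g^{(\ell)}\CZ^{(\ell,m)}h^{(\ell)}$, which commute with $\Z^{(m)}$, leaving $P$ as (off-$\ell$ subcircuit)$\cdot\CZ^{(\ell,m)}\cdot$(off-$\ell$ subcircuit), which does commute with $\Z^{(\ell)}$. With this grouping, Case I plus Corollary \ref{cor:sp1:CZ} handles all four one-qubit gates on $\ell$ in one pass, so no second round for the inner gates is needed (your push-through for $a_2, a_3$ can be salvaged once the outer gates are settled, but it is the easy part).

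The more serious gap is your dispatch of Cases II and III. The claim that Equation \ref{circ:demorgan} and $\CZ^2 = I$ let the middle $\CZ$ be ``absorbed by its neighbors,'' yielding a circuit with at most two $\CZ$s on $\ell$, is unsubstantiated and does not work as a circuit-rewriting argument: in Case II the structural conclusion involves two-qubit diagonals $C, C'$ on qubits $\ell, m$, and these are not free --- implementing them generically costs $\CZ$s incident on $\ell$, so naive recombination increases rather than decreases the count; in Case III, the fact that $P$ (or $P\X^{(2)}$) commutes with $\Z^{(2)}$ gives no direct way to delete the $\CZ$ it contains. The paper derives the contradiction $|M|_{\CZ;\ell} \le 2$ non-constructively, via the mux-spectrum invariant: in Case II it takes a cosine-sine decomposition of the two-qubit factor $V$ and computes $\inv^{(\ell)}(Q)$ as roots of real quadratics (hence conjugate pairs); in Case III it first normalizes so that $P$ itself commutes with $\Z^{(2)}$, demultiplexes $P$, uses the $|D|_{\CZ;\ell}=1$ characterization of Theorem \ref{thm:ccount} to introduce a permutation $\pi$, and shows after a careful computation that $\inv^{(\ell)}$ again comes in conjugate pairs up to a scalar. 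Only then does Theorem \ref{thm:ccount} (resting on Theorem \ref{thm:loceq}) force $|M|_{\CZ;\ell} \le 2$, contradicting the hypothesis. Without this invariant-based mechanism your plan has no way to exclude Cases II and III, and that is where essentially all the work of the theorem lies.
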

  \begin{proof}
    Consider $M,\mathcal{C}$ satisfying the hypothesis.
    Without loss of
    generality, $\ell = 1$ and $\mathcal{C}$ takes the form
    \[
    \qc{ \lstick{1} & \gate{h} & \control \qw & \gate{g} & \control
      \qw
      & \gate{f} & \control \qw & \gate{e} & \qw\\
      \lstick{2} & \multigate{1}{H} & \control \qw \qwx &
      \multigate{1}{G} & \control \qw \qwx
      & \multigate{1}{F} & \control \qw \qwx & \multigate{1}{E} & \qw  \\
      \csl & \ghost{H} & \qw & \ghost{G} & \qw & \ghost{F} & \qw
      & \ghost{E} & \qw &  \\
    }
    \]
    The \CZ s may have originally had different terminals, but we can
    incorporate swaps into $E, F, G, H$ to suppress this behavior.
    This affects neither the hypothesis nor the conclusion.

    (*) Define $P$ by
    \[
    \qc{
      \lstick{1} & \controlu \qw & \qw & \ccteq{2} &
      & \qw & \control \qw & \qw & \qw \\
      \lstick{2} & \multigate{1}{P} \qwx & \qw & \ccteqg &
      & \multigate{1}{G} & \control \qw \qwx
      & \multigate{1}{F} & \qw \\
      \csl & \ghost{P} & \qw & \ccteqg &
      & \ghost{G} & \qw & \ghost{F} & \qw }
    \]
    If $P \X^{(2)}$ commutes with $\Z^{(2)}$, then
    return to (*) and replace
    $G$ by $G \X^{(2)}$, $H$ by $\X^{(2)} H$, and
    $h$ by $Z^{(1)} h$.  This does not
    affect the conclusion, and by Equation \ref{circ:demorgan},
    the resulting circuit still computes $M$.
    We have ensured that if one of $P, P \X^{(2)}$
    commutes with $\Z^{(2)}$, then it is $P$.

    Define $a, b, Q$ by
    \[
    \qc{
      \lstick{1} & \gate{a}  & \qw & \ccteq{1} &  & \gate{f} & \control \qw
      & \gate{e} & \qw \\
      \lstick{2}
      & \controlu \qwx \qw & \qw & \ccteqg  & & \qw & \control \qw \qwx
      & \qw & \qw }
    \]
    \[
    \qc{
      \lstick{1} & \gate{b} \qw & \qw & \ccteq{1} & & \gate{h} & \control \qw
      & \gate{g} & \qw \\
      \lstick{2}
      & \controlu \qwx \qw & \qw & \ccteqg & & \qw & \control \qw \qwx
      & \qw & \qw }
    \]
    \[
    \qc{
      \lstick{1} & \controlu \qw & \qw & \ccteq{2} &
      & \qw & \controlu \qw & \qw & \qw \\
      \lstick{2} & \multigate{1}{Q} \qwx & \qw & \ccteqg &
      & \multigate{1}{H^\dagger} &
      \multigate{1}{M} \qwx & \multigate{1}{E^\dagger} & \qw \\
      \csl & \ghost{Q} & \qw & \ccteqg &
      & \ghost{H^\dagger} & \ghost{M} & \ghost{E^\dagger} & \qw
    }
    \]
    Note $|Q|_{\CZ;1} = |M|_{\CZ;1}$.  We also have
    $Q = [a \otimes I]P[b \otimes I]$, hence
    are in the situation of Equation \ref{eq:apb}.
    Lemma \ref{lem:sp} allows us to reduce to the following cases.

    \noindent {\bf Case I.}
    $a, b$ are diagonal, or $a\X^{(1)}, b\X^{(1)}$ are diagonal.
    In either case, Corollary \ref{cor:sp1:CZ}
    applied to the circuits defining $a,b$
    shows that $e,f,g,h$ are each diagonal or anti-diagonal.

    \noindent {\bf Case II.}
    $Q$ takes the form
    \[
    \qc{
      \lstick{1}& \controlu \qw & \qw & \ccteq{2} &
      & \multigate{1}{C'} & \qw & \multigate{1}{C} & \qw \\
      \lstick{2} & \multigate{1}{Q} \qwx \qw & \qw & \ccteqg &
      & \ghost{C'} & \multigate{1}{V} & \ghost{C} & \qw \\
      \csl & \ghost{Q} & \qw & \ccteqg & & \qw & \ghost{V} & \qw & \qw
    }
    \]
    The cosine-sine decomposition (see Equation \ref{eq:csd})
    of $V$ along qubit $2$ determines
    unitary operators $R,S$ and a real diagonal operator $\delta$
    such that:
    \begin{equation}
      \qc{
        \lstick{2} & \multigate{1}{V} & \qw & \ccteq{1} & \controlu \qw
        & \gate{R_y(\delta)}
        & \controlu \qw & \qw \\
        \csl & \ghost{V} & \qw & \ccteqg & \gate{S} \qwx
        & \controlu \qw \qwx
        & \gate{R} \qwx & \qw \\
      }
    \end{equation}
    We substitute, commute the $S,T$ outwards past
    $C, C'$, and decompose the diagonals $C, C'$.
    \[
    \qc{
      \lstick{1} & \qw & \qw & \control \qw & \qw & \control \qw & \gate{R_z}&
      \control \qw & \qw
      & \control \qw & \qw & \qw & \qw \\
      \lstick{2} & \controlu \qw & \gate{R_z} & \targ \qwx
      & \gate{R_z(\theta)} &
      \targ \qwx & \gate{R_y(\delta)}
      & \targ \qwx & \gate{R_z(\phi)} & \targ \qwx & \gate{R_z}
      & \controlu \qw & \qw \\
      \csl & \gate{S} \qwx & \qw & \qw & \qw & \qw & \controlu \qw \qwx &
      \qw & \qw & \qw & \qw & \gate{R} \qwx & \qw }
    \]
    Evidently $\inv^{(1)}(Q)$ depends only on
    $\theta, \delta, \phi$.  We calculate that, up to
    a global scalar multiple, $\inv^{(1)}(Q)$ consists of the roots of
    the following quadratics in $T$:
    \[
    T^2 - 2 T (\cos(2\theta + 2 \phi) \cos (\delta_i)^2
    + \cos(2\theta - 2 \phi) \sin (\delta_i )^2 ) + 1
    \]
    The equations being real, each has complex conjugate roots.
    By Theorem \ref{thm:ccount}, $ |M|_{\CZ;1} = |Q|_{\CZ;1} = 2$,
    contrary to hypothesis.

    \noindent {\bf Case III.}
    We have already ensured that
    $P$, rather than $P \X^{(2)}$, commutes with
    $\Z^{(2)}$.  We replace $a, b$ by the $a', b'$ of
    Lemma \ref{lem:sp}.
    We demultiplex $P$ (see Equation \ref{eq:demux}) to obtain
    a decomposition of the following form, where $D$ is diagonal.
    \[
    \qc{
      \lstick{1} & \controlu \qw & \qw & \ccteq{2} & & \qw & \multigate{2}{D}
      \qw
      & \qw & \qw  \\
      \lstick{2} & \controlu \qw \qwx & \qw & \ccteqg & & \controlu \qw &
      \ghost{D}
      & \controlu \qw & \qw \\
      \csl & \gate{P} \qwx & \qw & \ccteqg & & \gate{S} \qwx &
      \ghost{D} & \gate{R} \qwx & \qw }
    \]

    The operators $S, R$ commute past $a', b'$ to the
    edges of the circuit, and thus do not affect
    the \CZ -cost of $Q$.  That is,
    $|Q|_{\CZ;\ell} = |[a' \otimes I]D[b'\otimes I]|_{\CZ;\ell}$.

    By construction, $|P|_{\CZ;\ell} = |D|_{\CZ;\ell} = 1$.
    If
    $D = \ketbra{0}{0}^{(\ell)} \otimes D_0 + \ketbra{1}{1}^{(\ell)}\otimes D_1$,
    Theorem \ref{thm:ccount} asserts
    the entries of $D_0^\dagger D_1$ are $e^{i \theta} \{1,-1,1,-1,\ldots\}$.
    Thus $D$ can be written as
    \[
    \qc{
      \lstick{1} & \multigate{2}{D} & \qw & \ccteq{2} & & \gate{R_z(-
        \theta/2)} &
      \control \qw      & \qw & \qw & \qw \\
      \lstick{2} & \ghost{D} & \qw & \ccteqg & & \multigate{1}{\pi} & \control
      \qw \qwx & \multigate{1}{\pi^\dagger} & \multigate{1}{D_0} & \qw
      \\
      \csl & \ghost{D} & \qw & \ccteqg & & \ghost{\pi} & \qw &
      \ghost{\pi^\dagger} & \ghost{D_0} & \qw }
    \]
    for some permutation $\pi$.
    We set $N := \X^{(1)} ([a' \otimes I] D [b' \otimes I])^{\dagger}
    \X^{(1)} [a' \otimes I] D [b' \otimes I]$, so that
    $\inv^{(1)}([a' \otimes I] D [b' \otimes I])$ is given by
    the entries of $\bra{0}^{(1)}N\ket{0}^{(1)}$.
    Evidently $D_0$ commutes past
    $a'$ and cancels with $D_0^\dagger$.
    Applying Equation \ref{circ:demorgan} to eliminate
    $\X$ gates, the following
    circuit computes $N$.
    \[
    \qc{
      & \gate{b'} & \gate{R_z(-\theta/2)} & \control \qw & \qw &
      \gate{a'} & \targ & \gate{(a')^\dagger} & \targ & \control \qw &
      \qw & \gate{R_z(-\theta/2)}
      & \targ & \gate{(b')^\dagger}  & \targ & \qw  \\
      & \controlu \qwx \qw & \multigate{1}{\pi} & \control \qw \qwx &
      \multigate{1}{\pi^\dagger} & \controlu \qw \qwx & \qw
      & \controlu \qwx \qw & \multigate{1}{\pi} & \control \qw \qwx
      & \control \qw &  \multigate{1}{\pi^\dagger} & \qw
      & \controlu \qw \qwx  & \qw & \qw \\
      & \qw & \ghost{\pi} & \qw & \ghost{\pi^\dagger} & \qw & \qw &
      \qw & \ghost{\pi} & \qw & \qw & \ghost{\pi^\dagger} & \qw & \qw
      & \qw & \qw }
    \]
    The condition on $a'$ implies that
    $(a')^\dagger\X^{(1)} a'
    \X^{(1)}$ is diagonal.  It follows that the subcircuit
    sandwiched between the two \CZ s computes a diagonal operator, and
    so the \CZ s cancel.  Then the $\pi$, $\pi^\dagger$ pair on the
    left cancel.  The $\pi^\dagger \Z^{(m)} \pi$ term on the
    right commutes past the $(b')^\dagger$.  What remains is a
    circuit of the form
    \[
    \qc{
      \lstick{1} & \multigate{1}{F} & \qw & \qw & \qw & \qw \\
      \lstick{2} & \ghost{F} & \multigate{1}{\pi} & \control \qw
      & \multigate{1}{\pi^\dagger} & \qw \\
      \csl & \qw & \ghost{\pi} & \qw & \ghost{\pi^\dagger} & \qw }
    \]
    By construction, $N$ commutes with both $\Z^{(1)}$ and $\Z^{(2)}$.
    It follows that $F$ is diagonal.  Then $f = \bra{0}^{(1)} F
    \ket{0}^{(1)}$ is some one-qubit diagonal acting on $m$.  We
    have $\bra{0}^{(1)} N \ket{0}^{(1)} = \pi^\dagger
    \Z^{(2)} \pi f^{(2)}$.  Denote by $f_0, f_1$ the entries of
    $f$.  Then the entries of $\bra{0}^{(1)} N \ket{0}^{(1)}$
    are $f_0, f_1, -f_0,
    -f_1$, and moreover $f_0$ will occur with the same multiplicity as
    $-f_1$; likewise $-f_0$ will occur with the same multiplicity as
    $f_1$.
    We see that $\sqrt{-f_0/f_1} \inv^{(1)}([a' \otimes I] D [b' \otimes I])$
    come in
    conjugate pairs.  By Theorem \ref{thm:ccount},
    $|[a' \otimes I] D [b' \otimes I]|_{CZ;1}\le 2$.  But now
    $|M|_{CZ;1} = |Q|_{CZ;1} = |[a' \otimes I] D [b' \otimes I]|_{CZ;1}$,
    contrary to hypothesis.
  \end{proof}

  \subsection{Corollaries}
  \label{sec:corollaries}

  The ${\PERES}$ gate implements a three-qubit transformation
  from classical reversible logic ${\PERES}^{(\ell;m;n)} = \C^{(\ell)} \X^{(m)} \cdot \C\C^{(\ell,m)}X^{(n)}$.
  As shown in \cite{Maslov:03}, it  can be a useful alternative to the
  \TOFFOLI\ gate in reversible circuits.

  \begin{corollary} \label{cor:peres}
    $|\PERES|_{\CZ} = 5$.
  \end{corollary}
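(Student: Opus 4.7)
The plan is to establish both bounds essentially as easy consequences of Theorem~\ref{thm:T6}, together with the trivial identity $\CNOT^2 = I$. The key observation is that multiplying $\PERES$ by the initial \CNOT\ in its defining expression recovers the \TOFFOLI: explicitly,
\[
  \C^{(\ell)}\X^{(m)} \cdot \PERES^{(\ell;m;n)}
  = \C^{(\ell)}\X^{(m)} \cdot \C^{(\ell)}\X^{(m)} \cdot \CC^{(\ell,m)}\X^{(n)}
  = \CCX^{(\ell,m,n)}.
\]
This turns the lower and upper bounds into near-mirror arguments.

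For the lower bound, I would first recall $|U|_{\CZ} = |U|_{\CNOT}$ from Section~\ref{sec:CZcounting} so that I may work with \CNOT-cost throughout. Then, from the displayed identity and the triangle-type inequality $|UV|_{\CNOT} \le |U|_{\CNOT} + |V|_{\CNOT}$, I obtain
\[
  |\CCX|_{\CNOT} \;\le\; 1 + |\PERES|_{\CNOT}.
\]
Combining this with $|\CCX|_{\CNOT} = |\CCZ|_{\CZ} = 6$ from Theorem~\ref{thm:T6} yields $|\PERES|_{\CZ} \ge 5$.

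For the upper bound, I would exhibit an explicit five-\CNOT\ circuit obtained by appending the additional $\C^{(\ell)}\X^{(m)}$ to the six-\CNOT\ \TOFFOLI\ circuit of Figure~\ref{fig:6CNOTs} and simplifying. Concretely, one of the terminal $\CNOT^{(1,2)}$ gates in Figure~\ref{fig:6CNOTs} is adjacent (after commuting one-qubit $T, T^\dagger$ factors) to the appended \CNOT\ on the same pair of qubits; the two cancel, leaving five \CNOT\ gates and some one-qubit gates. I would just draw this circuit in the same style as Figure~\ref{fig:6CNOTs} and leave verification to inspection of its action on the computational basis.

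Neither direction presents a real obstacle: the bulk of the work has already been done in proving Theorem~\ref{thm:T6}, and the rest is the observation that $\PERES$ and $\TOFFOLI$ differ by a single \CNOT. The only mildly delicate point is ensuring that the cancellation in the upper-bound construction is genuine rather than merely apparent — i.e., that the one-qubit gates between the two candidate $\CNOT^{(1,2)}$ gates commute past one of them — which follows from the diagonality of the $T$ and $T^\dagger$ factors together with the fact that diagonal gates commute through the control wire of a \CNOT.
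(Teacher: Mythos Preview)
Your proposal is correct and follows essentially the same approach as the paper: both directions rest on the observation that $\PERES$ and $\TOFFOLI$ differ by a single $\C^{(\ell)}\X^{(m)}$, with Theorem~\ref{thm:T6} supplying the hard bound. One small simplification: the rightmost gate in Figure~\ref{fig:6CNOTs} is already a $\C^{(1)}\X^{(2)}$ with nothing to its right, so the appended \CNOT\ cancels it immediately --- no commutation of $T, T^\dagger$ factors is needed, and the paper simply says the \PERES\ is computed by Figure~\ref{fig:6CNOTs} ``save the rightmost \CNOT.''
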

  \begin{proof}
    As is clear from its definition, the $\PERES$ gate can
    be implemented by the circuit of Figure \ref{fig:6CNOTs}, save
    the rightmost $\CNOT$.  Thus,
    $|\PERES|_{\CZ} \le 5$.
    On the other hand, it also follows from
    the definition that any circuit for the \PERES\ can, with the addition
    of a single \CNOT , become a circuit for the \TOFFOLI.  Thus
    $|\PERES|_{\CZ} \ge |\TOFFOLI|_{\CZ} - 1 = 5$, and all inequalities
    are equalities.
  \end{proof}

  In a different direction,
  we consider below multiply-controlled $\Z$ gates:

  \begin{corollary} \label{cor:multicz}
    $|(n-1)-\mathrm{controlled}-\Z|_{\CZ} \ge 2n$ for any $n \ge 3$.
  \end{corollary}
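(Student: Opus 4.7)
The plan is to imitate the proof of Theorem \ref{thm:T6} for $n=3$ and generalize it via a two-qubit partial determinant. Write $U_n$ for the $(n-1)$-controlled-$\Z$ gate; it is diagonal, symmetric in its $n$ qubits, and flips the sign of $\ket{1}^{\otimes n}$ only.

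First I would show that $|U_n|_{\CZ;\ell} \ge 3$ for every qubit $\ell$. Along any qubit $\ell$ the multiplexor blocks are $(U_n)_0 = I$ and $(U_n)_1 = U_{n-1}$, so $\inv^{(\ell)}(U_n)$ is the spectrum of $U_{n-1}$, namely $+1$ with multiplicity $2^{n-1}-1$ and $-1$ with multiplicity $1$. For $n \ge 3$ both multiplicities are odd, and a short case analysis on the rescaling scalar $\lambda$ rules out congruence to $\{1,1,\ldots\}$, to $\{1,-1,1,-1,\ldots\}$, and to any multi-set of unit-modulus complex numbers coming in conjugate pairs. Theorem \ref{thm:ccount} then yields the local bound.

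Next, suppose for contradiction that $|U_n|_{\CZ} \le 2n-1$ and fix a minimal \CZ-circuit $\mathcal{C}$ for $U_n$. Since each \CZ\ is incident on exactly two qubits, the local counts in $\mathcal{C}$ sum to at most $4n-2$; combined with the pointwise lower bound of $3$, at most $n-2$ qubits can support strictly more than three incident \CZ s, so at least two qubits $p,q$ have exactly three. I would then invoke Theorem \ref{thm:threec} on $p$ to force all one-qubit gates on $p$ to be diagonal or anti-diagonal, and use Equation \ref{circ:demorgan} to push the residual \X\ factors off $p$ through the three incident \CZ s at the cost only of diagonal \Z\ gates on other qubits. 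Because this rewrite preserves the \CZ-count and adds only diagonal one-qubit gates elsewhere, Theorem \ref{thm:threec} applies next to $q$ without disturbing the normalization on $p$. The result is a \CZ-circuit for $U_n$ whose one-qubit gates on both $p$ and $q$ are diagonal, and Proposition \ref{prop:noninvariance} then demands that $\det_{p,q}(U_n)$ be separable.

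Finally, direct block computation gives $(U_n)_{00} = (U_n)_{01} = (U_n)_{10} = I_{2^{n-2}}$ and $(U_n)_{11} = U_{n-2}$, with determinants $1,1,1,-1$ respectively; hence $\det_{p,q}(U_n) = \mathrm{diag}(1,1,1,-1) = \CZ^{(p,q)}$, which is an entangling gate and therefore not separable. This contradicts the previous paragraph and establishes $|U_n|_{\CZ} \ge 2n$. The delicate point is the two successive applications of Theorem \ref{thm:threec}; what makes it legitimate is that the sanitizing step for $p$ neither alters the \CZ-count nor introduces any non-diagonal one-qubit gate away from $p$, so the hypotheses required to apply the theorem on $q$ remain intact.
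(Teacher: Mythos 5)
Your argument is correct, but it is not the proof the paper gives: you generalize the paper's proof of Theorem \ref{thm:T6} directly to $n$ qubits, whereas the paper proves Corollary \ref{cor:multicz} by induction on $n$. In your version, the counting $3n + k \le 4n-2$ gives two qubits $p,q$ with exactly three incident \CZ s, Theorem \ref{thm:threec} together with Equation \ref{circ:demorgan} makes all one-qubit gates on both $p$ and $q$ diagonal (your worry about the two successive applications is unfounded for exactly the reason you state: the cleanup only adds diagonal $\Z$ gates elsewhere and leaves the \CZ\ pattern intact, and in fact the paper itself applies Theorem \ref{thm:threec} to two qubits simultaneously in the $n=3$ case), and then Proposition \ref{prop:noninvariance} fails because $\det_{p,q}(U_n) = \mathrm{diag}(1,1,1,-1)$ is not separable; your verification that $\inv^{(\ell)}(U_n)$, having the value $-1$ with multiplicity $1$ and $+1$ with odd multiplicity $2^{n-1}-1$, escapes all three classes of Theorem \ref{thm:ccount} is also sound. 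The paper instead takes a single qubit $\ell$ touching exactly three \CZ s, cleans it to diagonal one-qubit gates, and forms $\mathcal{C}' = \bra{1}^{(\ell)}\mathcal{C}\ket{1}^{(\ell)}$: this turns $\CZ^{(\ell,s)}$ into $\Z^{(s)}$ and the diagonal one-qubit gates on $\ell$ into scalars, producing an $(n-1)$-qubit \CZ-circuit for the $(n-2)$-controlled-$\Z$ with at least three fewer \CZ s, so induction (base case Theorem \ref{thm:T6}) yields at least $2(n-1)+3 = 2n+1$ gates, a contradiction. The trade-off: your route is non-inductive and self-contained (it subsumes the $n=3$ case rather than citing it), at the price of needing two exactly-three qubits and the partial-determinant obstruction at general $N$; the paper's route reuses Theorem \ref{thm:T6} as a black box, needs only one such qubit, and replaces the determinant computation with the projection trick. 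Both arguments, like the corollary itself, live in the ancilla-free setting, where the mux-spectrum and partial-determinant invariants behave as used.
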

  \begin{proof}
    We proceed by induction on $n$.  Suppose the Corollary is false;
    choose minimal falsifying $n$, and a falsifying circuit
    $\mathcal{C}$.  By Theorem \ref{thm:T6}, $n > 2$.  As before,
    at least three \CZ\ gates are incident to each qubit, and
    counting shows that at least one, say $\ell$ touches exactly three.
    As before, we can assume that all one-qubit
    operators which appear on $\ell$ are diagonal.
    Form the circuit
    $\mathcal{C}' = \bra{1}^{(\ell)} \mathcal{C} \ket{1}^{(\ell)}$
    by replacing every gate $g$ of $\mathcal{C}$ with
    $g' = \bra{1}^{(\ell)} g \ket{1}^{(\ell)}$.  This has no effect on
    gates which do not touch $\ell$; it turns one-qubit gates on $\ell$
    into scalars, and replaces $\CZ^{(\ell,s)}$ with $\Z^{(s)}$.
    At any rate, $\mathcal{C}'$ is a \CZ-circuit on $(n-1)$ qubits
    which computes the $(n-2)$-controlled-$\Z$.  We
    deduce by induction that it
    contains at least $2(n-1)$ \CZ\ gates.  Adding the (at least) three \CZ s
    incident to $\ell$, there are at least $2n + 1$ total \CZ s
    in $\mathcal{C}$.
  \end{proof}

  \section{Three-qubit diagonal operators}
  \label{sec:3diag}

  We give here a complete classification of three-qubit diagonal
  operators by their \CZ -cost.  Throughout this section, we
  assume no ancillae are available and
  label our qubits 1, 2, 3, from most significant to least
  significant.
  We abbreviate
  $\bra{i}^{(1)}\bra{j}^{(2)}\bra{k}^{(3)} D \ket{i}^{(1)}\ket{j}^{(2)}\ket{k}^{(3)}$
  by $D_{ijk}$.  We also write
  $\Delta(\eta)$ for the one-qubit gate given by
  $\ketbra{0}{0} + \ketbra{1}{1} \eta$.  Define
  \[\lambda_1(D) = \frac{D_{011}D_{000}}{D_{001}D_{010}},~~~~~~~
  \lambda_2(D) =  \frac{D_{101}D_{000}}{D_{100}D_{001}} ,~~~~~~~
  \lambda_3(D) = \frac{D_{110}D_{000}}{D_{100}D_{010}}  ,~~~~~~~
  \xi(D) = \frac{D_{111}D_{000}^2}{D_{100}D_{010}D_{001}}\]
  Then any three-qubit diagonal $D$
  admits the expansion
  \[D = D_{000} \cdot
  \Delta\left(\frac{D_{100}}{D_{000}}\right)^{(1)}
   \cdot
  \Delta\left(\frac{D_{010}}{D_{000}}\right)^{(2)}
  \cdot
  \Delta\left(\frac{D_{001}}{D_{000}}\right)^{(3)}
  \cdot
  \mbox{diag}(1,1,1,\lambda_1(D),1,\lambda_2(D), \lambda_3(D), \xi(D))
  \]
  The $\lambda_i(D)$ are multiplicative, $\lambda_i(D D')
  = \lambda_i(D) \lambda_i(D')$, and likewise for $\xi$.
  We denote by $S(D)$ the ordered quadruple
  $(\lambda_1(D),\lambda_2(D),\lambda_3(D), \xi(D))$.

  \begin{observation} \label{obs:diag:loceq}
  For $D, D'$ three-qubit diagonal operators,
  $S(D) = S(D')$ iff
  $S(D^\dagger D') = (1,1,1,1)$ iff $D^\dagger D'$ is a tensor
  product of one-qubit diagonal operators. It follows that
  $S(D) = S(D') \implies |D|_{\CZ;i} = |D'|_{\CZ;i}$.
  \end{observation}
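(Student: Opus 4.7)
The plan is to verify the three parts of the observation in order. \textbf{For the first equivalence} $S(D) = S(D') \iff S(D^\dagger D') = (1,1,1,1)$, I would invoke the multiplicativity of $\lambda_1, \lambda_2, \lambda_3, \xi$ noted just above the observation: each is a homomorphism to $\CC^*$, hence $\lambda_i(D^\dagger D') = \lambda_i(D')/\lambda_i(D)$ and similarly for $\xi$. The condition that all four ratios equal $1$ is exactly $S(D) = S(D')$ componentwise.

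\textbf{For the second equivalence} $S(E) = (1,1,1,1) \iff E$ is a tensor product of one-qubit diagonal operators, I would use the explicit expansion formula for three-qubit diagonals displayed immediately before the observation. For $(\Leftarrow)$, write $E = \phi \cdot \Delta(\alpha)^{(1)} \Delta(\beta)^{(2)} \Delta(\gamma)^{(3)}$ so that $E_{ijk} = \phi \cdot \alpha^{i} \beta^{j} \gamma^{k}$, and observe by direct substitution that every numerator cancels its denominator in each of $\lambda_1, \lambda_2, \lambda_3, \xi$. Conversely, if $S(E) = (1,1,1,1)$, then the trailing diagonal $\mathrm{diag}(1,1,1,\lambda_1,1,\lambda_2,\lambda_3,\xi)$ in the expansion reduces to the identity, and $E$ is manifestly the product of the scalar $E_{000}$ and the three $\Delta$ factors, i.e.\ a tensor product of one-qubit diagonals.

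\textbf{For the final implication}, chain the two equivalences to write $D' = D \cdot U$ with $U$ a tensor product of one-qubit diagonal operators. Any $\CZ^{(i)}$-circuit for $D$ then extends to a $\CZ^{(i)}$-circuit for $D'$ by appending the three one-qubit factors of $U$ to their respective wires; this introduces no new \CZ\ gates at all, in particular none incident on qubit $i$. Hence $|D'|_{\CZ;i} \le |D|_{\CZ;i}$, and the symmetric argument with the roles of $D$ and $D'$ reversed gives equality.

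None of these steps presents a serious obstacle: the first is immediate from multiplicativity, the third exploits the fact that one-qubit gates cost nothing in the \CZ-metric, and the second is a routine algebraic cancellation in the eight coordinates $E_{ijk}$. The only minor care needed is in tracking the global phase factor $\phi = E_{000}$ in the expansion formula, which is harmless since diagonal operators form an abelian group with central scalars.
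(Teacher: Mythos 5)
Your proof is correct and fills in exactly the routine verification the paper intends: the Observation is stated without proof, and the intended justification is precisely your combination of the multiplicativity of $\lambda_1,\lambda_2,\lambda_3,\xi$, the displayed expansion of a three-qubit diagonal, and the fact that appending the one-qubit factors of $D^\dagger D'$ to a minimal $\CZ^{(i)}$-circuit adds no \CZ\ gates. No gaps; the only implicit assumption (that the diagonal operators are unitary, so $\lambda_i(D^\dagger)=\lambda_i(D)^{-1}$) is the standing convention of the paper.
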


  \begin{observation} \label{obs:inv:s}
    $\inv^{(i)}(D) = \{1,\lambda_j(D)^\dagger, \lambda_k(D)^\dagger,
    \xi(D)^\dagger \lambda_i(D)\}$ where $\{i,j,k\} = \{1,2,3\}$.
  \end{observation}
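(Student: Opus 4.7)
The plan is a direct computation; there is no substantive obstacle, only index bookkeeping. Since $D$ is diagonal, each block $D_0 = \bra{0}^{(i)} D \ket{0}^{(i)}$ and $D_1 = \bra{1}^{(i)} D \ket{1}^{(i)}$ is a two-qubit diagonal operator on the remaining qubits, so $D_1^\dagger D_0$ is itself diagonal. Its entries are therefore just ratios of entries of $D$ (where $\dagger$ agrees with inverse because unitarity of $D$ forces $|D_{abc}| = 1$). Thus $\inv^{(i)}(D)$ is simply the multiset of those four ratios, and no eigenvalue computation is needed.

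By the symmetry of the statement under permuting the qubit labels $1,2,3$, it suffices to check the case $i = 1$ and then relabel. For $i = 1$, the multiset of entries of $D_1^\dagger D_0$ is $\{D_{0jk}/D_{1jk} : j, k \in \{0,1\}\}$. I will factor out the common scalar $D_{000}/D_{100}$, leaving $1$ together with three ratios
\[
\frac{D_{100}}{D_{000}} \cdot \frac{D_{001}}{D_{101}}, \qquad \frac{D_{100}}{D_{000}} \cdot \frac{D_{010}}{D_{110}}, \qquad \frac{D_{100}}{D_{000}} \cdot \frac{D_{011}}{D_{111}}.
\]
Substituting the definitions of $\lambda_1(D), \lambda_2(D), \lambda_3(D), \xi(D)$ and using $z^{-1} = z^\dagger$ for unit-norm $z$, these three quotients become, in order, $\lambda_2(D)^\dagger$, $\lambda_3(D)^\dagger$, and $\xi(D)^\dagger\lambda_1(D)$. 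The overall scalar $D_{000}/D_{100}$ that was pulled out does not affect the congruence class of $\inv^{(1)}(D)$, which is what was used in Theorem \ref{thm:ccount}, so the stated equality is meant up to this overall factor.

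The only place care is required is to match, for each bitstring $(j,k)$, the correct one of $\lambda_1, \lambda_2, \lambda_3, \xi$; the definitions involve four different denominators, and one has to verify termwise that the pairing above is the one forced by the definitions of the $\lambda$'s and $\xi$. Once this is done for $i = 1$, the cases $i = 2$ and $i = 3$ follow by cyclically relabeling qubits in both the blocks of $D$ and in the definitions of $\lambda_1,\lambda_2,\lambda_3$.
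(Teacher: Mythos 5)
Your computation is correct and is exactly the direct entrywise calculation the paper leaves implicit (the Observation is stated without proof): since $D_1^\dagger D_0$ is itself diagonal, $\inv^{(i)}(D)$ is just the multiset of ratios $D_{0jk}/D_{1jk}$ (for $i=1$), and these match $\{1,\lambda_j(D)^\dagger,\lambda_k(D)^\dagger,\xi(D)^\dagger\lambda_i(D)\}$ after the harmless overall scalar you factor out, the equality being literal when $D$ is put in the normal form $\mbox{diag}(1,1,1,\lambda_1,1,\lambda_2,\lambda_3,\xi)$ used in the proof of Theorem \ref{thm:3qd}. Your remark that only the congruence class matters for the application via Theorem \ref{thm:ccount} is the right reading, so there is no gap.
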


  \begin{lemma} \label{lem:3qd}
    A three-qubit diagonal $D$ can be implemented
    in a three-qubit \CZ -circuit with:
    \begin{itemize}
      \item 0 \CZ s on touching qubit 1 iff $S(D) = (\xi,1,1;\xi)$
      \item 1 \CZ\ touching qubit 1 iff $S(D) = (\xi,-1,-1,;\xi),
        (-\xi,1,-1;\xi), (-xi,-1,1\;\xi)$.
      \item 2 \CZ s touching qubit 1 iff
        $S(D) = (a,b,c;abc), (a,b,c;ab/c), (a,b,c;ac/b)$.
    \end{itemize}
  \end{lemma}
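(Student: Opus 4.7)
The plan is to deduce this lemma directly from Theorem~\ref{thm:ccount} applied to $\inv^{(1)}(D)$, translating each of its three iff characterizations of $|\cdot|_{\CZ;1}$ into a condition on the quadruple $S(D)$. Two preliminary facts set the stage: Observation~\ref{obs:diag:loceq} ensures that $|D|_{\CZ;i}$ depends only on $S(D)$, and the unitarity of a diagonal $D$ forces each of $\lambda_1(D),\lambda_2(D),\lambda_3(D),\xi(D)$ to have unit modulus. By Observation~\ref{obs:inv:s},
\[
\inv^{(1)}(D) \;=\; \{1,\;\bar\lambda_2,\;\bar\lambda_3,\;\bar\xi\,\lambda_1\}
\]
is therefore a multiset of four unit-norm complex numbers, and the three cases of the lemma become case-checks against the three parts of Theorem~\ref{thm:ccount}.

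For the 0-\CZ\ case, Theorem~\ref{thm:ccount} requires $\inv^{(1)}(D)\cong\{1,1,1,1\}$. Since $1$ already appears in $\inv^{(1)}(D)$, the scaling factor in the congruence must itself equal $1$, forcing each of $\bar\lambda_2,\bar\lambda_3,\bar\xi\lambda_1$ to equal $1$, equivalently $S(D)=(\xi,1,1;\xi)$. For the 1-\CZ\ case, the requirement $\inv^{(1)}(D)\cong\{1,-1,1,-1\}$ (with target invariant both under $\mu\leftrightarrow-\mu$ and under dagger) collapses to the multiset equality $\inv^{(1)}(D)=\{1,-1,1,-1\}$; since $1$ is already an entry, exactly one of $\bar\lambda_2,\bar\lambda_3,\bar\xi\lambda_1$ equals $1$ and the other two equal $-1$, giving the three listed forms.

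The main case is $|D|_{\CZ;1}\le 2$, where Theorem~\ref{thm:ccount} asks for a unit-norm scalar $\mu$ such that $\mu\cdot\inv^{(1)}(D)$ decomposes into unordered conjugate pairs. For any such pair $(x,y)$, the relation $\mu y=\overline{\mu x}$ is equivalent to $\mu^2=\overline{xy}$, so consistency across the two pairs amounts to the within-pair products being equal, and once they agree $\mu$ is determined up to sign. There are exactly three matchings of the four elements $\{1,\bar\lambda_2,\bar\lambda_3,\bar\xi\lambda_1\}$ into two pairs: the matching $\{1,\bar\lambda_2\}\cup\{\bar\lambda_3,\bar\xi\lambda_1\}$ yields $\bar\lambda_2=\bar\lambda_3\bar\xi\lambda_1$ and hence $\xi=\lambda_1\lambda_2/\lambda_3$; the matching $\{1,\bar\lambda_3\}\cup\{\bar\lambda_2,\bar\xi\lambda_1\}$ gives $\xi=\lambda_1\lambda_3/\lambda_2$; and the matching $\{1,\bar\xi\lambda_1\}\cup\{\bar\lambda_2,\bar\lambda_3\}$ gives $\xi=\lambda_1\lambda_2\lambda_3$. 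These are exactly the listed forms $(a,b,c;ab/c)$, $(a,b,c;ac/b)$, $(a,b,c;abc)$, and each direction is automatic from the iff in Theorem~\ref{thm:ccount}.

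The main obstacle is conceptually modest: reformulating the conjugate-pair condition of Theorem~\ref{thm:ccount} as the equality of within-pair products, and then verifying that the three possible matchings of a four-element multiset exhaust exactly the three product relations on $S(D)$ that appear in the statement. All remaining manipulations are straightforward bookkeeping with the multiplicative identities built into the definitions of the $\lambda_i$ and $\xi$.
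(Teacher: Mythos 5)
Your proposal is correct and follows the same route the paper intends: the paper's proof of this lemma is literally "a translation of Theorem \ref{thm:ccount} using Observation \ref{obs:inv:s}," with the calculation omitted, and you have simply carried out that omitted translation (the pair-product criterion $\mu^2=\overline{xy}$ and the three matchings of the four-element mux-spectrum check out, as do the resulting forms of $S(D)$ in all three cases).
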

  \begin{proof}
    This is just a translation of Theorem \ref{thm:ccount}
    using Observation \ref{obs:inv:s}, involving
    a straightforward  but tedious calculation which we omit.
  \end{proof}

  The two possibilities $S(D) = (a,b,c;abc), (a,b,c;ab/c)$ are
  quite different, and the following result helps distinguish
  between them.

  \begin{lemma} \label{lem:aud}
    Let $D$ be a three-qubit diagonal operator and $u$ be
    a one-qubit gate.  Suppose
    $|D  u^{(3)} \CZ^{(1,3)}|_{\CZ;1} = 1$ or
    $|\CZ^{(1,3)}  u^{(3)} D|_{\CZ;1} = 1$.
    Then
    $\lambda_1(D) \lambda_2(D) = \lambda_3(D) \xi(D)$.
  \end{lemma}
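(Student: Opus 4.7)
The plan is to translate the hypothesis via Theorem \ref{thm:ccount} into an algebraic identity on the diagonal entries of $D$. Namely, if $E := D\,u^{(3)}\CZ^{(1,3)}$ has $|E|_{\CZ;1} = 1$, then $\inv^{(1)}(E)$ is congruent to $\{1,-1,1,-1\}$, which constrains the spectrum of the $2\times 2$ blocks obtained by mux-decomposing $E$ along qubit $1$. Extracting the pair-product structure of that spectrum gives exactly $\lambda_1\lambda_2 = \lambda_3\xi$.

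First I would observe that $D$, $u^{(3)}$, and $\CZ^{(1,3)}$ all commute with $\Z^{(1)}$, so $E$ does too, with mux blocks (viewed as operators on qubits $2,3$)
\[
E_0 = D_0\, u^{(3)}, \qquad E_1 = D_1\, u^{(3)}\Z^{(3)}.
\]
Therefore $E_1^\dagger E_0 = (\Z u^\dagger)^{(3)}\, D_1^\dagger D_0\, u^{(3)}$, and by cyclic invariance of the spectrum $\inv^{(1)}(E)$ is the eigenvalue multiset of $M := \tilde\Z^{(3)}\, D_1^\dagger D_0$, where $\tilde\Z := u\Z u^\dagger$ is Hermitian with trace $0$ and determinant $-1$.

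The key structural step is that $D_1^\dagger D_0$ is diagonal with entries $d_{jk} := D_{0jk}/D_{1jk}$, so $M$ is block-diagonal along qubit $2$ with two $2\times 2$ blocks $\tilde\Z\cdot\mathrm{diag}(d_{00},d_{01})$ and $\tilde\Z\cdot\mathrm{diag}(d_{10},d_{11})$, whose determinants are $-d_{00}d_{01}$ and $-d_{10}d_{11}$ respectively. By hypothesis and Theorem \ref{thm:ccount} the four eigenvalues of $M$ form a multiset $\{\mu,\mu,-\mu,-\mu\}$. Any partition of this multiset into two unordered pairs has matching pair-products (both equal to $\mu^2$ or both equal to $-\mu^2$), so the two block determinants must agree: $d_{00}d_{01} = d_{10}d_{11}$.

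A direct expansion from the definitions gives
\[
\frac{\lambda_3(D)\,\xi(D)}{\lambda_1(D)\,\lambda_2(D)} \;=\; \frac{d_{00}\,d_{01}}{d_{10}\,d_{11}},
\]
so $d_{00}d_{01} = d_{10}d_{11}$ is precisely $\lambda_1\lambda_2 = \lambda_3\xi$. For the dual case $E' := \CZ^{(1,3)} u^{(3)} D$, the same computation gives $(E'_1)^\dagger E'_0 = D_1^\dagger (u^\dagger \Z u)^{(3)} D_0$, whose spectrum equals that of $(u^\dagger \Z u)^{(3)}(D_0 D_1^\dagger)$ by cyclic invariance; since $D_0 D_1^\dagger = D_1^\dagger D_0$ for diagonal $D$, the block-determinant argument goes through verbatim with $\tilde\Z$ replaced by $u^\dagger \Z u$ (still Hermitian of trace zero). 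No serious obstacle arises; the only mild bookkeeping is verifying that in both orderings the nontrivial conjugation falls on a factor whose eigenvalue contribution reduces, under cyclic invariance, to multiplication by a traceless Hermitian involution on qubit $3$.
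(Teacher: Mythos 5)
Your proof is correct and follows essentially the same route as the paper: both reduce the hypothesis via Theorem \ref{thm:ccount} to the statement that the $4$-element mux-spectrum of $E_1^\dagger E_0$ has the form $\{\mu,\mu,-\mu,-\mu\}$, split that matrix into its two qubit-$2$ blocks, and force the two blocks' eigenvalue products (the constant terms of their characteristic polynomials) to coincide, which is exactly $\lambda_1\lambda_2=\lambda_3\xi$. The only difference is cosmetic: the paper parametrizes $u^\dagger$ by Euler angles and writes out the two quadratics explicitly, whereas you observe that only $\det(u\Z u^\dagger)=-1$ matters, which streamlines the same computation.
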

  \begin{proof}
    The conclusion being stable under $D \to D^\dagger$, we
    assume  $|D  u^{(3)} \CZ^{(1,3)}|_{\CZ;1} = 1$.
    Decompose $u^\dagger = e^{i \theta} R_z(\alpha) R_y(\beta) R_z(\gamma)$.
    Then
    $\inv^{(\ell)}(A)$ is given by the roots of the polynomials
    \[x^2 - \cos(2\beta) (1 - \lambda_2(D))x - \lambda_2(D) \]
    \[x^2 - \cos(2\beta) (\lambda_3(D) - \xi /(D) \lambda_1(D))x -
    \lambda_3(D) \xi(D) /
    \lambda_1(D) \]
    For these to have roots either $\{p,p,-p,-p\}$
    or $\{p,p,p,p\}$, the two equations  must
    have the same constant terms -- either both $p^2$ or both $-p^2$.
  \end{proof}

  We turn to computing \CZ -costs.  These being invariant
  under relabelling of qubits,
  we write $s(D)$ for $(\lambda_1(D), \lambda_2(D), \lambda_3(D); \xi(D))$,
  where we ignore the order of the $\lambda_i$.

  \begin{observation}
    Given two three-qubit diagonals $D, D'$,
    $s(D) = s(D')$ if and only if there exist one-qubit diagonals $d,d',d''$
    and a wire permutation $\omega$ such that
    $D = (d \otimes d' \otimes d'') \cdot \omega D \omega^\dagger$.
    Thus $s(D) = s(D') \implies |D|_{\CZ} = |D'|_{\CZ}$.
  \end{observation}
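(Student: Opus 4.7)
The plan is to reduce both implications to Observation~\ref{obs:diag:loceq}, which already classifies equality of the \emph{ordered} tuple $S$. The only new ingredient needed is an analysis of how $S$ behaves under the two allowed operations: left-multiplication by a tensor of one-qubit diagonals (which should leave $S$ fixed) and conjugation by a wire permutation (which should permute the $\lambda_i$ while fixing $\xi$).

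For the $(\Leftarrow)$ direction, I would first note that a factor of the form $d \otimes d' \otimes d''$ has $S = (1,1,1;1)$ by direct calculation, hence by Observation~\ref{obs:diag:loceq} and the multiplicativity of $\lambda_i$ and $\xi$, left multiplication by it preserves $S$. Next I would verify that conjugation by a transposition of two wires, say swapping qubits $i$ and $j$, sends $D_{\cdots}$ to the same entry with the two indices exchanged; this interchanges $\lambda_i$ with $\lambda_j$ and leaves both $\lambda_k$ (for $k \notin \{i,j\}$) and $\xi$ fixed, the latter because the defining formula $\xi(D) = D_{111} D_{000}^2 / (D_{100} D_{010} D_{001})$ is symmetric under all permutations of qubit labels. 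Since $S_3$ is generated by transpositions, any wire permutation acts on $S$ by permuting the $\lambda$'s and fixing $\xi$, so the unordered tuple $s$ is preserved.

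For the $(\Rightarrow)$ direction, suppose $s(D) = s(D')$. Choose $\sigma \in S_3$ so that applying the corresponding wire permutation $\omega$ realigns the $\lambda_i$, that is $\lambda_i(\omega D' \omega^\dagger) = \lambda_i(D)$ for each $i$; this is possible precisely because $s(D) = s(D')$, and the $\xi$-equality is automatic. Then $S(\omega D' \omega^\dagger) = S(D)$, and Observation~\ref{obs:diag:loceq} furnishes one-qubit diagonals $d, d', d''$ with $D^\dagger \omega D' \omega^\dagger = d \otimes d' \otimes d''$. Rearranging (and using that diagonals commute) yields $D = (d \otimes d' \otimes d'')^\dagger \cdot \omega D' \omega^\dagger$, and relabeling the daggered diagonals gives the stated form. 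The \CZ-cost consequence is then immediate: the one-qubit diagonal factor contributes zero \CZ\ gates, and a wire permutation can be absorbed by simply relabeling the qubit lines in any $\CZ$-circuit for $D'$, so $|\omega D' \omega^\dagger|_{\CZ} = |D'|_{\CZ}$ and hence $|D|_{\CZ} = |D'|_{\CZ}$.

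The main point requiring care is the wire-permutation bookkeeping: one must verify that the \emph{conjugation} $\omega D' \omega^\dagger$ (rather than a one-sided multiplication) is what permutes the $\lambda_i$ correctly, and that $\xi$ is genuinely fixed rather than merely permuted into some other factor. Everything else is routine manipulation of the defining formulas and a direct appeal to Observation~\ref{obs:diag:loceq}.
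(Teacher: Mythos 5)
Your proof is correct and is essentially the argument the paper leaves implicit for this unproved Observation: reduce to Observation~\ref{obs:diag:loceq} via multiplicativity of $S$, note that conjugation by a wire transposition swaps the corresponding $\lambda_i$'s while fixing $\xi$, and absorb the permutation and one-qubit diagonals at zero \CZ-cost (you also correctly read the statement's $\omega D \omega^\dagger$ as the intended $\omega D' \omega^\dagger$).
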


  \begin{theorem}\label{thm:3qd}
    Let $D$ be a three-qubit diagonal operator. Then there exists
    a \CZ -circuit for $D$ containing
    \begin{itemize}
    \item 0 \CZ s iff $s(D) = (1,1,1;1)$.
    \item 1 \CZ\ iff $s(D) = (1,1,-1;-1)$.
    \item 2 \CZ s iff $s(D) = (1,1,\xi;\xi), (1,-1,-1;1)$.
    \item 3 \CZ s iff  $s(D) = (1,1,\xi;\xi),
      (\xi, -1, -1; \xi), (-\xi, 1, -1; \xi)$.
    \item 4 \CZ s iff $s(D) = (a,b,c;ab/c)$.
    \item 5 \CZ s iff $s(D) = (a,b,c;ab/c), (a,b,c;abc)$
    \item 6 \CZ s always
    \end{itemize}
  \end{theorem}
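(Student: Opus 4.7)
The plan is to prove each case of the theorem by establishing matching upper and lower bounds on $|D|_\CZ$ in terms of $s(D)$. By Observation~\ref{obs:diag:loceq}, the cost $|D|_\CZ$ is itself an invariant of $s(D)$, and by Observation~\ref{obs:inv:s}, so is each local mux-spectrum $\inv^{(i)}(D)$. Composing these with Lemma~\ref{lem:3qd} lets me read off $|D|_{\CZ;i}$ directly from $s(D)$, which I then feed into the halved-sum inequality $|D|_\CZ \ge \tfrac{1}{2}\sum_i |D|_{\CZ;i}$.

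For the upper bounds, I would exhibit an explicit \CZ-circuit in each case and compute its signature using the multiplicativity of $\lambda_i$ and $\xi$ under composition of diagonals. The $0$-, $1$-, and $2$-\CZ\ constructions are $I$, a single $\CZ^{(i,j)}$, and a two-qubit controlled-phase realizing $(1,1,\xi;\xi)$ (or $\CZ^{(i,j)}\CZ^{(i,k)}$ realizing $(1,-1,-1;1)$). The $3$-, $4$-, and $5$-\CZ\ cases arise from truncations and phase-modifications of Figure~\ref{fig:6CNOTs}: dropping the rightmost \CNOT\ as in Corollary~\ref{cor:peres} gives a 5-\CZ\ construction covering the $(a,b,c;abc)$ family, while retuning the interior $T$ and $T^\dagger$ gates produces all the $(a,b,c;ab/c)$ signatures in 4 \CZs. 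The 6-\CZ\ case is Figure~\ref{fig:6CNOTs} itself.

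For the lower bounds in cases $0,1,2,3$, the halved-sum inequality is already sharp: Lemma~\ref{lem:3qd} characterizes precisely when $|D|_{\CZ;i} \le k$ for $k=0,1,2$ in terms of $s(D)$, and a finite case analysis on $s(D)$ shows that whenever $s(D)$ lies outside the stated class, the sum $\tfrac{1}{2}\sum_i |D|_{\CZ;i}$ already exceeds the claimed \CZ\ count.

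Cases 4, 5, 6 demand finer analysis, since once every $|D|_{\CZ;i}$ equals 2 or 3 the halved-sum bound stalls. Here the key refinement is Lemma~\ref{lem:aud}: if a short circuit for $D$ with $s(D)$ outside the $(a,b,c;ab/c)$ family exists, one outermost \CZ\ can be peeled off to expose a sub-circuit of the form $D' u^{(3)}\CZ^{(1,3)}$ with $|\cdot|_{\CZ;1}=1$, forcing the relation $\lambda_1\lambda_2 = \lambda_3\xi$ and hence the $(a,b,c;ab/c)$ form after relabelling, a contradiction. Applied to 3-\CZ\ circuits this rules out $(a,b,c;abc)$, giving the 4-\CZ\ lower bound; one level deeper it gives the 5-\CZ\ lower bound for signatures outside the $(a,b,c;ab/c)$ and $(a,b,c;abc)$ classes. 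The 6-\CZ\ lower bound adapts the proof of Theorem~\ref{thm:T6}: any hypothetical 5-\CZ\ circuit forces two qubits $m,n$ to touch exactly three \CZs, with the third touching four; Theorem~\ref{thm:threec} then lets me assume all one-qubit gates on $m,n$ are diagonal, after which Proposition~\ref{prop:noninvariance} demands that $\det_{m,n}D$ be separable --- which pins $\xi$ to one of the $\lambda_i\lambda_j/\lambda_k$ values or to $\lambda_1\lambda_2\lambda_3$, contradicting the assumed genericity of $s(D)$. The principal obstacle throughout is uniformity: each invocation of Lemma~\ref{lem:aud}, Theorem~\ref{thm:threec}, or Proposition~\ref{prop:noninvariance} requires selecting the right qubit for local analysis and reshaping the circuit to match the hypothesis, while keeping careful track of which qubit plays which role under the $S_3$ symmetry of $s(D)$.
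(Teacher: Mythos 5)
Your upper-bound direction and your handling of cases 4--6 are essentially the paper's route (explicit circuits for each signature class; then Lemma \ref{lem:aud}, Theorem \ref{thm:threec} and Proposition \ref{prop:noninvariance} after analyzing how the \CZ s distribute over the three qubits). The gap is in your lower bounds for cases 2 and 3: the claim that ``the halved-sum inequality is already sharp'' there, so that Lemma \ref{lem:3qd} plus $|D|_\CZ \ge \tfrac{1}{2}\sum_i |D|_{\CZ;i}$ finishes these cases, is false. Take $D = \CZ^{(1,2)}\CZ^{(1,3)}\CZ^{(2,3)}$, so $s(D)=(-1,-1,-1;-1)$, which lies outside the stated 2-\CZ\ class. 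Each local cost is $|D|_{\CZ;i}=1$ (its $i$-mux-spectrum is $\{1,-1,1,-1\}$ up to congruence), so $\tfrac{1}{2}\sum_i|D|_{\CZ;i}=3/2$ and the halved-sum bound gives only $|D|_\CZ\ge 2$: it cannot exclude a two-\CZ\ circuit. Worse, for the entire family $s(D)=(a,b,c;abc)$ with generic $a,b,c$ --- which the theorem asserts requires five \CZ s --- every local cost equals $2$ by Lemma \ref{lem:3qd}, so the halved-sum gives only $|D|_\CZ\ge 3$ and cannot rule out a three-\CZ\ circuit; your case-3 lower bound therefore has no proof at all. These examples show the multiset of local costs simply does not determine $|D|_\CZ$, so no finite case analysis on $s(D)$ can rescue the argument as stated.

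The paper closes exactly these cases by arguments about the circuit rather than about local costs. For two \CZ s: either both join the same pair of qubits, so $D$ is locally equivalent to a two-qubit diagonal and $s(D)=(1,1,\xi;\xi)$, or they are $\CZ^{(i,j)},\CZ^{(j,k)}$ and Corollaries \ref{cor:twoc} and \ref{cor:sp1:replace} let one assume all one-qubit gates are diagonal, whence $D$ is locally equivalent to $\CZ^{(i,j)}\CZ^{(j,k)}$ and $s(D)=(1,-1,-1;1)$. For three \CZ s: unless some qubit already touches at most one \CZ, each qubit touches exactly two; clearing the qubit under the middle \CZ\ via Corollary \ref{cor:twoc} shows $D\sim_j \CZ^{(i,j)}\CZ^{(j,k)}$, hence $|D|_{\CZ;j}\le 1$, and only then does Lemma \ref{lem:3qd} pin down $s(D)$. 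You need some argument of this structural kind in cases 2 and 3. Two smaller points: your case-4 sketch relies only on peeling an outermost \CZ\ and Lemma \ref{lem:aud}, but the $(2,2,4)$ configuration has no \CZ\ joining the two light qubits and is handled in the paper by Proposition \ref{prop:noninvariance} instead; and your ``6-\CZ\ lower bound'' via Theorem \ref{thm:threec} is legitimate only because, for $\xi$ outside all product combinations, every local cost actually equals $3$, which is what that theorem's hypothesis requires.
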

  \begin{proof}
    We assume without loss of generality that $D$ takes the form
    $\mbox{diag}(1,1,1,\lambda_1,1,\lambda_2,\lambda_3,\xi)$.
    We number the qubits 1,2,3 from most to least significant.

    $(\Leftarrow)$.  We can assume that
    in fact $S(D)$ takes the form given. Our constructions
    will use the $\CX$, which may be replaced
    by the $\CZ$ at the cost of inserting $\mathtt{HADAMARD}$
    gates.

    {\bf Case 0.} $S(D) = (1,1,1;1) \implies D = I$.

    {\bf Case 1.} $S(D) = (1,1,-1;-1) \implies D = \CZ^{(1,2)}$.

    {\bf Case 2a.} $S(D) = (\xi,1,1;\xi)$.  Fix $\eta = \sqrt{\xi}$;
    \[
    \qc{
      \lstick{1} & \multigate{2}{D} & \qw & \ccteq{2} & &
      \qw
      & \qw & \qw & \qw &  \qw \\
      \lstick{2} & \ghost{D} & \qw & \ccteqg & &
      \gate{\Delta(\eta)}
      & \control \qw & \qw & \control \qw &  \qw \\
      \lstick{3} & \ghost{D} & \qw & \ccteqg & &
      \gate{\Delta(\eta)} & \targ \qwx &
      \gate{\Delta(1/\eta)}
      & \targ \qwx &  \qw }
    \]

    {\bf Case 2b.} $S(D) = (1,-1,-1;1) \implies D = \CZ^{(1,3)}\CZ^{(1,2)}$.

    {\bf Case 3a.} $S(D) = (\xi,1,1;\xi)$.
    By Case 2a, the \CZ\ can be implemented in a circuit
    containing 2 \CZ s.  It follows that any
    operator that can be implemented with $n > 0$ \CZ s can be implemented
    with $n+1$.  Thus since $D$ can be implemented with 2 \CZ s,
    it can be implemented with 3.

    {\bf Case 3b.} $S(D) = (\xi,-1,-1;\xi)$.  Fix $\eta = \sqrt{\xi}$;
    \[
    \qc{
      \lstick{1} & \multigate{2}{D} & \qw & \ccteq{2} & &
      \qw
      & \qw & \qw & \control \qw & \qw & \qw  \\
      \lstick{2} & \ghost{D} & \qw & \ccteqg & &
      \gate{\Delta(\eta)}
      & \control \qw & \qw & \qw \qwx & \control \qw &  \qw \\
      \lstick{3} & \ghost{D} & \qw & \ccteqg & &
      \gate{\Delta(\eta)} & \targ \qwx &
      \gate{\Delta(1/\eta)}
      & \control \qw \qwx & \targ \qwx  & \qw }
    \]

    {\bf Case 3c.} $S(D) = (-\xi,1,-1;\xi)$.  Fix $\eta = \sqrt{-\xi}$.
    \[
    \qc{
      \lstick{1} & \multigate{2}{D} & \qw & \ccteq{2} & &
      \qw
      & \qw & \qw & \qw & \control \qw & \qw  \\
      \lstick{2} & \ghost{D} & \qw & \ccteqg & &
      \gate{\Delta(\eta)}
      & \control \qw & \qw &  \control \qw  & \control \qw \qwx &  \qw \\
      \lstick{3} & \ghost{D} & \qw & \ccteqg & &
      \gate{\Delta(\eta)} & \targ \qwx &
      \gate{\Delta(1/\eta)}
      &\targ \qwx  & \qw & \qw }
    \]

    {\bf Case 4.} $S(D) = (a,b,c;ab/c)$.  Fix square roots
    $\alpha, \beta, \gamma$ for $a,b,c$;
    \[
    \qc {
      \lstick{1}
      & \multigate{2}{D} & \qw & \ccteq{2} & & \gate{\Delta(\beta)}
      & \qw & \qw
      & \control \qw & \qw &
      \qw & \qw & \control \qw & \qw \\
      \lstick{2}
      & \ghost{D} & \qw & \ccteqg & & \gate{\Delta(\alpha)}
      & \control \qw & \qw &
      \qw \qwx & \qw & \control \qw & \qw & \qw \qwx & \qw \\
      \lstick{3}
      & \ghost{D} & \qw & \ccteqg & & \gate{\Delta(\alpha \beta /\gamma)}
      & \targ \qwx &
      \gate{\Delta(\gamma / \alpha)} & \targ \qwx & \gate{\Delta(1/\gamma)}
      & \targ \qwx &
      \gate{\Delta(\gamma/\beta)} & \targ \qwx & \qw
    }
    \]

    {\bf Case 5a.} $S(D) = (a,b,c;ab/c)$.  As $D$ can be implemented with
    4 \CZ s, it can be implemented with 5.

    {\bf Case 5b.} $S(D) = (a,b,c;abc)$. Fix square roots
    $\alpha, \beta, \gamma$ for $a,b,c$;
    \[
    \qc {
      \lstick{1}
      & \multigate{2}{D} & \qw & \ccteq{2} & & \gate{\Delta(\beta\gamma)}
      & \qw & \control \qw &
      \qw & \qw & \control \qw & \qw & \control \qw & \qw \\
      \lstick{2}
      & \ghost{D} & \qw & \ccteqg & & \gate{\Delta(\alpha\gamma)}
      & \control \qw  & \targ \qwx & \control \qw &
      \gate{\Delta(1/\gamma)} & \targ \qwx & \qw & \qw \qwx & \qw \\
      \lstick{3}
      & \ghost{D} & \qw & \ccteqg & & \gate{\Delta(\alpha \beta)}
      & \targ \qwx & \gate{\Delta(1 / \alpha)} &
      \targ \qwx & \gate{\Delta(1/\beta)} & \qw &
      \qw & \targ \qwx & \qw
    }
    \]

    {\bf Case 6.} More generally,
    any n-qubit diagonal operator has \CZ-cost bounded
    by $2^n - 2$.  See \cite{Bullock:diag:04} or Section
    \ref{sec:cartan}.

    $(\Rightarrow)$.

    {\bf Case 0.} $D$ must be locally equivalent to $I$, hence
    $s(D) = (1,1,1;1)$.

    {\bf Case 1.} $D$ must be locally equivalent to some \CZ , hence
    $s(D) = (1,1,-1;-1)$

    {\bf Case 2} Suppose there exists a minimal implementation of
    $D$ in which both $\CZ$ gates connect the same two qubits.  Then
    $D$ is locally equivalent to a two-qubit diagonal; in which case
    one can compute
    $s(D) = (\xi,1,1;\xi)$

    Otherwise, there is a minimal implementation of
    $D$ in which the two \CZ\ gates are $\CZ^{(i,j)}$, $\CZ^{(j,k)}$.
    By Corollary \ref{cor:twoc}, we may pass to an implementation
    with only diagonal one-qubit gates along $j$; by Corollary
    \ref{cor:sp1:replace}, we may pass to an implementation
    with only diagonal one-qubit gates along $i,k$ as well.  But then
    $D$ is locally equivalent to $\CZ^{(i,j)}\CZ^{(j,k)}$ and we may compute
    $s(D) = (1,-1,-1;1)$.

    {\bf Case 3.} It suffices to show that
    $|D|_{\CZ;j} \le 1$ for some $j$.  For, if
    $|D|_{\CZ;j} = 0$, then $D$ is a two-qubit diagonal, with
    $s(D) = (\xi, 1, 1; \xi)$, and if $|D|_{\CZ;j} = 1$, then
    by Lemma \ref{lem:3qd},
    $s(D) = (-\xi, 1, -1; \xi)$ or $(\xi, -1, -1; \xi)$.

    Consider an implementation of $D$ containing three \CZ s.
    We have $|D|_{\CZ;\ell} \le 1$
    for some $\ell$ unless the \CZ s are
    distributed so that each qubit touches exactly
    two.  Let $j$ be a qubit touching the middle \CZ.  By Corollary
    \ref{cor:twoc}, we can assume the circuit contains only diagonal
    gates on qubit $j$; it follows by inspection that
    $D \sim_j \CZ^{(i,j)} \CZ^{(j,k)}$.  But we have already
    determined that $|\CZ^{(i,j)} \CZ^{(j,k)}|_{\CZ;j} = 1$.

    {\bf Case 4.} Consider an implementation of $D$ containing
    four \CZ s.  If any qubit touches fewer than two \CZ s,
    we reduce to the previous case and observe that the desired
    condition on $s$ holds.  Thus suppose each qubit touches at
    least two \CZ s.  Then there are only two possibilities for the
    number of \CZ s touched by each qubit:  $(2,2,4)$ and
    $(2,3,3)$.

    For the configuration $(2,2,4)$, say qubits $\ell,m$ touch two \CZ s
    and qubit $n$ touches four.  Note that no \CZ s connect $\ell, m$.
    Thus we may assume by Corollary
    \ref{cor:twoc} all one-qubit gates on $\ell, m$ are diagonal.
    By Proposition \ref{prop:noninvariance}, $\det_{\ell,m} D$ is
    separable; this says precisely that $\lambda_{\ell}(D) \lambda_{m}(D) = \lambda_n(D) \xi(D)$.

    For the configuration $(2,3,3)$, say qubit 1 touches two \CZ s
    and qubits 2,3 touch three.  Then there are two \CZ s connecting
    qubits 2 and 3,
    one connecting qubits 1 and 3 and one connecting qubits 1 and 2.
    By Corollary \ref{cor:twoc}, we ensure that all one-qubit gates on
    qubit 1 are diagonal.
    If the \CZ s connecting qubits 2 and 3 are outermost,
    $D \sim_\ell \CZ^{(1,2)} \CZ^{(1,3)}$,
    hence can be implemented with three $\CZ$ s by Case 3.
    Otherwise, one of the \CZ s incident on qubit 1 is outermost;
    without loss of generality let it be $\CZ^{(1,3)}$.  Then
    we have an equation of the form $D = u^{(3)} \CZ^{(1,3)} A$ where
    by construction $A$ commutes with $\Z^{(1)}$ and $|A|_{\CZ;1} = 1$.
    Lemma \ref{lem:aud} yields the desired result.

    {\bf Case 5.} It suffices by Lemma \ref{lem:3qd}
    to show that $|D|_{\CZ;\ell} \le 2$ for
    some $\ell$.  Suppose not; then in any five-\CZ\ implementation
    for $D$, each qubit must touch three \CZ s.
    It follows that two of the qubits, say $\ell, m$
    touch exactly three \CZ s, and the remaining qubit
    touches four.  By Theorem \ref{thm:threec},
    all one-qubit gates on $\ell, m$ are diagonal or anti-diagonal.
    Enough applications
    of Equation \ref{circ:demorgan} will ensure that all one-qubit
    gates on $\ell, m$ are in fact diagonal.
    Move the \CZ\ which connects $\ell, m$
    to the edge of the circuit.  This yields
    $D = \CZ^{(\ell,m)}A$, where
    $|A|_{\CZ;\ell} \le 2$.  By Lemma
    \ref{lem:3qd}, it follows that $|D|_{\CZ;\ell} \le 2$ as well.
  \end{proof}

  \section{Circuits with ancillae} \label{sec:ancilla}

  The proofs of Theorems \ref{thm:T6} and \ref{thm:3qd}
  assume that only three qubits were present, and use this assumption
  when enumerating possible circuit configurations with
  a given total number of \CZ\ gates.  This dependency
  can be eliminated. Indeed, these cases involved so few \CZ s
  that one could eliminate configurations with ancillae
  by performing explicit checks.

  More significant is the use of
  Proposition \ref{prop:noninvariance}
  and the characterization by Theorem \ref{thm:ccount}
  of $|D|_{\CZ;\ell} \le 2$.  Both of these statements
  are true for any fixed $N$, but suffer when $N$ is allowed
  to vary.  For example
  if only $N = 3$ qubits are available, then
  $\det_{1,2}\CCZ^{(1,2,3)} = \CZ^{(1,2)}$, so
  by Proposition \ref{prop:noninvariance},
  the \CCZ\ cannot be implemented
  in any three-qubit \CZ -circuit in which all gates
  commute with $\Z^{(1)}, \Z^{(2)}$.  But if $N = 4$ qubits are
  present,
  $\det_{1,2}(\CCZ^{(1,2,3)}) = I^{(1,2)}$,
  so $\CCZ^{(1,2,3)}\otimes I^{(4)}$ {\em can} be implemented in
  a four-qubit \CZ -circuit in which all one-qubit
  gates commute with $\Z^{(1)}$ and $\Z^{(2)}$.

  Similarly, for $N = 3$ qubits, we have
  $\inv^{(\ell)}(\CCZ) = \{1,1,1,-1\}$
  and thus by Theorem \ref{thm:ccount}
  $|\CCZ|_{\CZ;\ell} \ge 3$.  However,
  for $N = 4$ qubits,
  $\inv^{\ell}(\CCZ^{(\ell,m,n)}) = \{1,1,1,-1,1,1,1,-1\}$,
  so now Theorem \ref{thm:ccount} implies that
  $|\CCZ^{(1,2,3)}\otimes I^{(4)}|_{\CZ;1} = 2$.  Indeed:
  \[
  \qc{
    & \control \qw & \qw & \ccteq{3} & & \qw & \control \qw & \qw & \qw
    & \qw & \control \qw & \qw & \qw & \qw \\
    & \qw \qwx & \qw & \ccteqg & & \gate{H} & \control \qw \qwx
    & \gate{H} & \control \qw & \gate{H} & \control \qw \qwx &
    \gate{H} & \control \qw & \qw \\
    & \control \qw \qwx & \qw & \ccteqg & & \qw & \qw & \qw &
    \control \qw \qwx &  \qw & \qw & \qw &
    \control \qw \qwx & \qw \\
    & \control \qw \qwx & \qw & \ccteqg & & \qw & \qw & \qw &
    \control \qw \qwx &  \qw & \qw & \qw &
    \control \qw \qwx & \qw
  }
  \]

  On the other hand,
  the properties $\inv^{(\ell)}(U) \cong \{1,1,\ldots\}$ and
  $\inv^{(\ell)}(U) \cong \{1,-1,1,-1\ldots\}$ are stable
  under adding ancillae.
  By Theorem \ref{thm:ccount},
  so are the properties $|U|_{\CZ;\ell} = 0$ and
  $|U|_{\CZ;\ell} = 1$.  Since only these properties
  are used in the proof of Lemma \ref{lem:aud}, it
  too holds even in the presence of ancillae.
  This leads to an extension of the \CZ -cost classification
  of three-qubit diagonals
  to the case where ancilla qubits are permitted.

  \begin{lemma} \label{lem:czanc}
    Let $A$ be a unitary operator;
    let $\mathcal{C}$ be qubit minimal among \CZ-circuits
    computing $A$, possibly with the use of ancillae,
    using only $|A|^a_{\CZ}$ \CZ\ gates.
    Then every ancilla in
    $\mathcal{C}$ touches at least three \CZ\ gates.
  \end{lemma}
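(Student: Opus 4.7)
The strategy is to argue by contradiction: if some ancilla $q$ of $\mathcal{C}$ touches $k \le 2$ \CZ\ gates, I will modify $\mathcal{C}$ into a circuit $\tilde{\mathcal{C}}$ computing $A$ with one fewer qubit and no more \CZ\ gates, contradicting either the qubit-minimality of $\mathcal{C}$ or the identity $|\mathcal{C}|_{\CZ} = |A|^a_{\CZ}$. The crucial starting point is that $\mathcal{C}$ implements the operator $A \otimes I^{\otimes t}$, where $t$ is the number of ancillae; in particular this operator commutes with $\Z^{(q)}$ for every ancilla $q$, which is what lets me apply the machinery of Section \ref{sec:snk}. I split into cases on the number $k$ of \CZ\ gates incident to $q$.

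For $k = 0$, every gate on $q$ is one-qubit, and such gates commute with the rest of the circuit, so $\mathcal{C} = u^{(q)} \otimes \mathcal{C}'$. Commutation of the whole with $\Z^{(q)}$ forces $u$ to be diagonal, and restricting to the $\ket{0}^{(q)}\bra{0}^{(q)}$ block of $\mathcal{C}$ identifies $\mathcal{C}'$ with $A$ up to global phase. Dropping qubit $q$ yields $\tilde{\mathcal{C}}$ with the same \CZ\ count and one fewer qubit, contradicting qubit-minimality.

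For $k = 1$, I push all the one-qubit gates on $q$ to the two sides of the unique $\CZ^{(q,r)}$, writing $\mathcal{C} = X \cdot b^{(q)}\CZ^{(q,r)}a^{(q)} \cdot Y$ with $X, Y$ not touching $q$. Since $X, Y$ trivially commute with $\Z^{(q)}$ and $\mathcal{C}$ does too, the middle block $b^{(q)}\CZ^{(q,r)}a^{(q)}$ commutes with $\Z^{(q)}$ as well, so Corollary \ref{cor:sp1:CZ} forces $a, b$ to be both diagonal or both anti-diagonal. Computing $\mathcal{C}\ket{0}^{(q)}\ket{\psi}$ directly in each subcase shows that $\tilde{\mathcal{C}} := XY$ (diagonal) or $\tilde{\mathcal{C}} := X\Z^{(r)}Y$ (anti-diagonal) computes $A$ up to phase, saving both a qubit and a \CZ\ gate.

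For $k = 2$, Corollary \ref{cor:twoc} replaces all one-qubit gates on $q$ by diagonal ones without changing the total \CZ\ count. Diagonal gates on $q$ then commute past the two $\CZ^{(q,\cdot)}$ gates, so the portion of the circuit acting on $q$ collapses to a single diagonal prefix $d^{(q)}$ times a product $\mathcal{D}\cdot\CZ^{(q,r_2)}\cdot\mathcal{E}\cdot\CZ^{(q,r_1)}\cdot\mathcal{F}$ whose factors $\mathcal{D}, \mathcal{E}, \mathcal{F}$ do not touch $q$; on $\ket{0}^{(q)}$ both \CZ\ gates act trivially, so $\tilde{\mathcal{C}} := \mathcal{D}\mathcal{E}\mathcal{F}$ computes $A$ up to phase with one fewer qubit and two fewer \CZ\ gates. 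The main obstacle is justifying this last case: I need $\mathcal{C}$ to be a genuine $\CZ^{(q)}$-circuit for an operator commuting with $\Z^{(q)}$, and both properties fall out of the fact that $\mathcal{C}$ is a \CZ-circuit computing $A \otimes I^{\otimes t}$.
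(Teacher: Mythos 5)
Your proof is correct and follows essentially the same route as the paper: case analysis on whether the ancilla touches $0$, $1$, or $2$ \CZ\ gates, using the Section \ref{sec:snk} machinery (Corollary \ref{cor:twoc} for two \CZ s, a Corollary \ref{lem:sp1}-type result for one) to make the gates on the ancilla diagonal, then restricting the ancilla to $\ket{0}$ to obtain a circuit with fewer qubits or fewer \CZ s, contradicting minimality. The only cosmetic difference is that in the one-\CZ\ case you invoke Corollary \ref{cor:sp1:CZ} and compute the restriction by hand, where the paper cites Corollary \ref{cor:sp1:replace}; the substance is identical.
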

  \begin{proof}
    Fix an ancilla qubit $\ell$.  If no \CZ\ gates touch $\ell$,
    then it may be removed.  If one (respectively two)
    \CZ\ touches $\ell$, then by Corollary \ref{cor:sp1:replace}
    (respectively Corollary \ref{cor:twoc}),
    then there is a circuit with no more
    \CZ s in which the only one-qubit gates on $a$ are
    diagonal.

    Now form the circuit $\bra{0}^{(\ell)} \mathcal{C} \ket{0}^{(\ell)}$ as
    in the proof of Corollary \ref{cor:multicz}.  This circuit
    computes the operator $A$ using one fewer ancilla,
    fewer \CZ s than $\mathcal{C}$.
  \end{proof}

  \begin{corollary}
    For any two-qubit operator $V$,
    $|V|^a_{\CZ} = |V|_{\CZ}$.
  \end{corollary}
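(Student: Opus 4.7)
The plan is to argue by contradiction, combining Lemma \ref{lem:czanc} with the known bound $|V|_{\CZ} \le 3$ for arbitrary two-qubit operators (a consequence of the two-qubit magic decomposition, cited as \cite{Shende:2q:04}). The inequality $|V|^a_{\CZ} \le |V|_{\CZ}$ is automatic, so I would assume for contradiction that $|V|^a_{\CZ} < |V|_{\CZ}$ and derive an impossibility.

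Concretely, let $\mathcal{C}$ be a circuit computing $V$ with exactly $c := |V|^a_{\CZ}$ CZ gates and which is qubit-minimal among such. If $\mathcal{C}$ uses no ancillae, then $\mathcal{C}$ itself witnesses $|V|_{\CZ} \le c = |V|^a_{\CZ}$, immediately contradicting the assumption. Otherwise, $\mathcal{C}$ contains at least one ancilla qubit, and Lemma \ref{lem:czanc} guarantees that every ancilla in $\mathcal{C}$ is incident on at least $3$ of the CZ gates.

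Next I would bound from above the number of CZs a single qubit can touch: since each CZ gate has two distinct endpoints, any given qubit is incident on at most $c$ of the $c$ total CZs. Combining this with the ancilla lower bound forces $c \ge 3$, i.e.\ $|V|^a_{\CZ} \ge 3$. On the other hand, the two-qubit bound $|V|_{\CZ} \le 3$ together with our hypothesis $|V|^a_{\CZ} < |V|_{\CZ}$ yields $|V|^a_{\CZ} \le 2$, the desired contradiction.

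There is no real obstacle beyond invoking the two prerequisite facts in the right order; the argument is essentially a pigeonhole on CZ endpoints once Lemma \ref{lem:czanc} has eliminated ancillae with fewer than three incident CZs. The only point that requires care is verifying that Lemma \ref{lem:czanc}, proved for unitary operators in general, applies here with $A = V$ acting on two non-ancilla wires (together with the free ancilla wires), so that the "every ancilla touches $\ge 3$ CZs" clause is genuinely available after the qubit-minimization step.
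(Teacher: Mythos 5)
Your proposal is correct and matches the paper's argument: both invoke Lemma \ref{lem:czanc} to force at least three \CZ\ gates in any qubit-minimal ancilla-using implementation, and then combine this with the known bound $|V|_{\CZ} \le 3$ for two-qubit operators (the paper states it as a chain of inequalities collapsing to equalities rather than an explicit contradiction, but the content is identical). No gaps.
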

  \begin{proof}
    If no ancillae are needed to minimize \CZ-count, then the result
    holds.  Otherwise, each ancilla used in a qubit-minimal
    \CZ-minimal implementation must touch at least three \CZ
    gates. Thus $|\cdot|_{\CZ} \ge |\cdot|^a_{\CZ} \ge 3$.
    However it is known \cite{Vidal:2q:04, Vatan:2q:04, Shende:2q:04} that
    two-qubit operators have $|\cdot|_{\CZ} \le 3$.  Thus all the
    inequalities are equalities.
  \end{proof}

  \begin{proposition}
    For any three-qubit diagonal operator, $D$,
    $|D|^a_{\CZ} = |D|_\CZ$.
  \end{proposition}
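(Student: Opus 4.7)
The plan is to establish the nontrivial inequality $|D|^a_{\CZ}\ge |D|_{\CZ}$ by adapting the forward direction of Theorem \ref{thm:3qd} to admit ancillae. It suffices to show that if $|D|^a_{\CZ}\le k$, then $s(D)$ lies in the $k$-th class of Theorem \ref{thm:3qd}; the constructive direction of that theorem will then give $|D|_{\CZ}\le k$, since all the explicit circuits in its proof use no ancillae.

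I would begin by fixing a qubit-minimal \CZ-minimal implementation $\mathcal{C}$ of $D$. By Lemma \ref{lem:czanc}, every ancilla in $\mathcal{C}$ touches at least three \CZ\ gates, so for $|D|^a_{\CZ}\le 2$ no ancillae can appear and Theorem \ref{thm:3qd} applies verbatim. For the remaining values $|D|^a_{\CZ}\in\{3,4,5\}$, I re-run the case analysis of Theorem \ref{thm:3qd}'s $(\Rightarrow)$ direction on $\mathcal{C}$. The tools that analysis relies on --- Corollaries \ref{cor:sp1:replace} and \ref{cor:twoc}, Theorem \ref{thm:threec}, Lemma \ref{lem:aud}, and the characterizations $|U|_{\CZ;\ell}\in\{0,1\}$ of Theorem \ref{thm:ccount} --- are either stated for arbitrary qubit counts or, as the paper has just documented, ancilla-stable; they therefore apply equally well with ancillae present. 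This handles Case 3 and the $(2,3,3)$ sub-case of Case 4 essentially without modification.

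The one step that genuinely breaks is the invocation of Proposition \ref{prop:noninvariance} in the $(2,2,4)$ sub-case of Case 4 (and the analogous partial-determinant step implicit in Case 5), because tensoring with $I$ replaces $\det_{\ell,m}(D)$ by $\det_{\ell,m}(D)^{2^t}$ and thereby weakens separability. The plan is to replace this by a further application of Lemma \ref{lem:aud}: after applying Corollary \ref{cor:twoc} to each of the two computational qubits $\ell,m$ touching exactly two \CZ\ gates (making one-qubit gates on them diagonal), one of the outer \CZ\ gates on $\ell$ can be pulled to the edge of the circuit, presenting $D$ in the form $u\cdot \CZ^{(\ell,n)}\cdot A$ with $|A|_{\CZ;\ell}=1$; Lemma \ref{lem:aud}, whose hypothesis is ancilla-stable, then yields the needed identity $\lambda_\ell(D)\lambda_m(D)=\lambda_n(D)\xi(D)$ and places $s(D)$ in the Case 4 class. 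For $|D|^a_{\CZ}=6$ the claim is immediate.

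The main obstacle I expect is combinatorial: admitting ancillae multiplies the number of \CZ-distribution patterns that must be checked (each ancilla being forced only to touch at least three \CZ\ gates, not to connect specified qubits), and in the $(2,2,4)$ replacement the outer \CZ\ extracted from $\ell$ might connect to an ancilla rather than to the third computational qubit $n$. Handling that sub-case requires a four-qubit analogue of Lemma \ref{lem:aud}, which I expect to follow from the same matrix-element computation used in the original lemma, but verifying it uniformly across all residual configurations is the delicate part of the argument.
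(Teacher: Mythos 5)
Your high-level strategy is the right one and, in its essential ingredients, coincides with the paper's: use Lemma \ref{lem:czanc} to force at least three \CZ s on every ancilla, exploit the fact that only the $|{\cdot}|_{\CZ;\ell}=0,1$ characterizations of Theorem \ref{thm:ccount} survive tensoring with ancillae, and close the remaining case with Corollary \ref{cor:twoc} plus the (ancilla-stable) Lemma \ref{lem:aud}. However, your diagnosis of where Theorem \ref{thm:3qd} breaks is off, and the step you defer is in fact the heart of the proof. For any circuit with at most four \CZ\ gates, an ancilla (touching $\ge 3$ \CZ s) cannot coexist with all three computational qubits touching $\ge 2$, since $3+2+2+2=9>8$ incidences; so some computational qubit touches at most one \CZ, the stable $0/1$ characterization applies, and by Lemma \ref{lem:3qd} those classes already lie inside the Case-4 class. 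Hence the $(2,2,4)$ and $(2,3,3)$ subcases, and with them Proposition \ref{prop:noninvariance}, only ever arise in ancilla-free circuits and never need an ancilla-proof replacement. The genuinely new situation is $|D|^a_{\CZ}=5$ with exactly one ancilla (forced by the count $(3k+6)/2\le 5$), where the \CZ-incidences are $(3;3,2,2)$ or $(4;2,2,2)$. There the three-qubit Case-5 argument cannot be re-run: Theorem \ref{thm:threec}'s hypothesis $|M|_{\CZ;\ell}=3$ can fail once an ancilla is present (local cost can drop to $2$, exactly as for the \CCZ), and the $|{\cdot}|_{\CZ;\ell}\le 2$ characterization of Theorem \ref{thm:ccount} is not ancilla-stable; there is no ``partial-determinant step'' in Case 5 to substitute for.

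Your proposed repair --- clean the two-\CZ\ computational qubits with Corollary \ref{cor:twoc}, pull a \CZ\ to the edge to write $D$ (up to one-qubit gates) as $\CZ^{(\ell,n)}A$ with $|A|_{\CZ;\ell}=1$, and invoke Lemma \ref{lem:aud} to force $s(D)=(a,b,c;ab/c)$ --- is indeed how the paper concludes, but its viability hinges on two claims you have not established: that a \CZ\ \emph{not touching the ancilla} can be made outermost, and that one of its endpoints, a qubit incident on only two \CZ s, can be kept free of non-diagonal one-qubit gates (the paper's conditions (*) and (**)). This is nontrivial precisely because Corollary \ref{cor:twoc}'s cleanup of one qubit may deposit non-diagonal gates on the others, so one must check the possible connection patterns of the five \CZ s case by case (the paper's configurations I--III; your enumeration must also cover the $(4;2,2,2)$ pattern, where the only computational--computational \CZ\ must be shown movable). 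You explicitly defer ``verifying it uniformly across all residual configurations,'' and your fallback --- a four-qubit analogue of Lemma \ref{lem:aud} for the case where the extracted \CZ\ meets the ancilla --- is neither stated nor proved; in the paper that case is excluded by the configuration analysis rather than handled by a new lemma. As it stands, the proposal is a correct plan whose decisive combinatorial step is missing.
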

  \begin{proof}
    Suppose $|D|^a_{\CZ} < |D|_\CZ$.  By Lemma \ref{lem:czanc},
    a qubit-minimal circuit for $D$ achieving the
    bound for $|D|^a_{\CZ}$ contains at least three
    \CZ\ gates incident on each ancilla.  By assumption
    at least one ancilla is used, so
    $|D|_\CZ > |D|^a_{\CZ} \ge 3$.  It follows
    from Theorem \ref{thm:3qd} and Lemma \ref{lem:3qd} that
    $|D|_{\CZ;\ell} > 1$ for the three qubits $\ell = 1,2,3$.
    By Theorem \ref{thm:ccount}, this property is
    stable under addition of ancilla.  Thus
    a qubit-minimal circuit for $D$ achieving the
    bound for $|D|^a_{\CZ}$ contains at least 3 \CZ s
    incident to each ancilla, and at least 2 \CZ s incident
    to each non-ancilla qubit.  If $k$ ancillae
    are used, then we have
    $|D|^a_{\CZ} \ge (3k + 6)/2$.  From Theorem
    \ref{thm:3qd} and the supposition we have
    $|D|^a_{\CZ} < |D|_\CZ = 6$; it follows that $k=1$,
    that $|D|^a_{\CZ} = 5$, and that $|D|_\CZ = 6$.

    In any four-qubit, five-\CZ\ circuit for $D$, we must have
    two of the non-ancilla, say $x_1, x_2$ touching two \CZ s, and
    both the remaining non-ancilla $z$ and the ancilla $a$ touching
    three.  By Corollary \ref{cor:twoc}, we can assume that the
    only one-qubit operators appearing on $x_1$, $x_2$ are diagonal.
    We may also assume that the graph where vertices
    are qubits and edges are \CZ\ gates is connected; otherwise
    $D$ could be split into the tensor product of a two-qubit
    and a one-qubit diagonal, and hence would have $|D|\le 2$.
    Then there are only three
    possibilities regarding which wires are connected by \CZ s.

    \[
    \begin{array}{l|ccccc}
      I & (x_1, x_2) & (x_1, z)   & (x_2, a) & (z,a) & (z,a) \\
      \hline
      II & (x_1, z)   & (x_1, z)   & (x_2, a) & (x_2, a) & (z,a) \\
      \hline
      III & (x_1, z)   & (x_1, a)   & (x_2, z) & (x_2, a) & (z,a)
    \end{array}
    \]

    We will show that any circuit with those \CZ\ gates
    can be transformed so that (*) a \CZ\
    which does not touch the ancilla is outermost among the \CZ s,
    and (**) one of the $x$-qubits on which this \CZ\ gate
    acts has the property that all one-qubit gates acting on it are diagonal.
    As this $x$-qubit only touched 2 \CZ\ gates to begin with, it
    follows from Lemma \ref{lem:aud} that $s(D)$ takes the form
    $(a,b,c;ab/c)$. By Theorem \ref{thm:3qd},
    $|D|_\CZ = 4$, which is a contradiction.

    We return to checking (*) and (**).
    Eliminate non-diagonal one-qubit gates on $x_i$
    using Corollary \ref{cor:twoc}.  In Case (I), the
    $(x_1,x_2)$ \CZ\ can therefore only be prevented from moving
    by the $(x_1,a)$.  This can be on only one side, so the $(x_1, x_2)$
    can be moved outwards to the other.  Similarly, in Case (II),
    an $(x,z)$ can only be blocked by $(z,a)$ and the other
    $(x,z)$.  In this case, the second $(x,z)$ is blocked on
    only one side and can be moved to the edge.  In Case (III),
    we use Corollary \ref{cor:twoc} to clear both the $x_1$ and $x_2$
    qubits of non-diagonal gates; the possible additional
    one-qubit gates will only fall on the $z$ and $a$ qubits.
    Now the $(x_1,z)$
    can only be blocked by the $(x_2,z)$ and the $(z,a)$, and
    also the $(x_2,z)$ can only be blocked by $(z,a)$ and $(x_1,z)$.
    Thus one of $(x_1,z)$ and $(x_2,z)$ can be made outermost.
  \end{proof}

  \begin{corollary}
    $|\CCZ|^a_{\CZ} =  |\TOFFOLI|^a_{\CZ} = 6$ and $|PERES|^a_{\CZ} = 5$.
  \end{corollary}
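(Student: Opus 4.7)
The proof is essentially a bookkeeping exercise assembling three ingredients already in hand. My plan is to handle the three equalities in order, reusing the ancilla-free identities together with the preceding Proposition (which asserts $|D|^a_{\CZ} = |D|_{\CZ}$ for every three-qubit diagonal $D$).

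First I would dispose of $\CCZ$. Since $\CCZ$ is a three-qubit diagonal operator, the preceding Proposition gives $|\CCZ|^a_{\CZ} = |\CCZ|_{\CZ}$, and Theorem \ref{thm:T6} supplies $|\CCZ|_{\CZ} = 6$. Next, for $\TOFFOLI$, I would invoke the identity $\HAD^{(i)} \C\C^{(j,k)}\X^{(i)} \HAD^{(i)} = \CCZ^{(i,j,k)}$ from Section \ref{sec:basic}: conjugation by a one-qubit $\HAD$ on the target converts any \CZ-circuit for $\CCZ$ (possibly with ancillae) into a \CZ-circuit for $\TOFFOLI$ with the same number of \CZ\ gates and the same set of ancillae, and vice versa. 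Hence $|\TOFFOLI|^a_{\CZ} = |\CCZ|^a_{\CZ} = 6$.

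For the $\PERES$ gate the upper bound $|\PERES|^a_{\CZ} \le 5$ is witnessed by the same circuit used in Corollary \ref{cor:peres}: one deletes the rightmost $\CNOT$ from the six-$\CNOT$ $\TOFFOLI$ circuit of Figure \ref{fig:6CNOTs}. For the lower bound, I would mimic the no-ancilla argument of Corollary \ref{cor:peres}: any \CZ-circuit (with or without ancillae) computing $\PERES$, when followed by a single $\CZ^{(\ell,m)}$ conjugated by $\HAD$ on $m$ (i.e.\ a single $\CNOT$), yields a circuit computing $\TOFFOLI$ with at most one additional \CZ\ gate and the same ancillae. Therefore
\[
|\PERES|^a_{\CZ} \;\ge\; |\TOFFOLI|^a_{\CZ} - 1 \;=\; 5,
\]
and equality holds.

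The only conceptually nontrivial content has already been proved in the preceding Proposition; the main thing to check is that the reductions between $\TOFFOLI$/$\CCZ$ and between $\PERES$/$\TOFFOLI$ preserve the set of ancillae as well as the \CZ-count. Both reductions act by pre- or post-composing with a one-qubit gate (respectively a single two-qubit \CZ), so no extra workspace qubits are introduced, and this is immediate from the definitions rather than an obstacle.
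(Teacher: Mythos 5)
Your proof is correct and follows exactly the route the paper intends for this corollary: apply the preceding Proposition ($|D|^a_{\CZ}=|D|_{\CZ}$ for three-qubit diagonals) to $\CCZ$ together with Theorem \ref{thm:T6}, transfer to $\TOFFOLI$ by the ancilla-preserving $\HAD$-conjugation on the target, and reuse the one-extra-\CNOT\ reduction of Corollary \ref{cor:peres} for the \PERES\ bounds. Nothing further is needed.
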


  \section{Conclusion}
  \label{sec:conclusion}

  While our work is primarily focused on quantum circuit implementations,
  the \TOFFOLI\ gate originally arose as a universal gate for classical
  reversible logic \cite{Toffoli:80}.
  In contrast, the \NOT\ and \CNOT\ gates are {\em not} universal for reversible
  logic:  their action on bit-strings is affine-linear over
  over $\FF_2$, and thus the same is true for any operator
  computed by any circuit containing only these gates.

  Augmenting \CNOT\ gates with single-qubit rotations
  to express the \TOFFOLI\ gate provides the lacking non-linearity.
  Thus the number of one-qubit gates (excluding inverters) needed
  to express the \TOFFOLI , or more generally any reversible computation,
  can be thought of as a measure of its non-linearity.
  In this inverted cost model (also relevant to some quantum implementation
  technologies)  the following question remains open: {\em how many one-qubit gates
  are needed to implement the \TOFFOLI ?}
  Furthermore,
  {\em are there circuits that simultaneously minimize
  the number of \CNOT\ and one-qubit gates ?}

  In a different direction, recall our results showing
  that diagonality and block-diagonality of an operator
  impose strong constraints on small circuits
  that compute this operator. We believe other
  conditions may act in a similar way.
  In particular, we ask {\em what can be said about
  minimal quantum circuits for operators computable
  by classical reversible circuits, i.e., operators expressed by 0-1 matrices?}
  Very little is known even for three-qubit operators.
  In particular, the \CNOT -cost of the controlled-swap
  (Fredkin gate) remains unresolved.

  Closest to our present work, the exact \CNOT -cost of the $n$-qubit
  analogue of the \TOFFOLI\ gate remains unknown.  We have
  shown that $2n$ \CNOT s are necessary if ancillae are not
  permitted, but already for $n=4$ we only know that
  $8 \le |{\tt CCCZ}|_{\CZ} \le 14$, where the upper bound is provided by
  a generic decomposition of diagonal operators \cite{Bullock:diag:04}.
  Existing constructions of the $n$-qubit \TOFFOLI\ gate require
  a quadratic number of \CNOT\ gates  without the use of ancillae.
  With one ancilla, such constructions require linearly many \CNOT s,
  but the leading coefficient is in double-digits
  \cite{Barenco:elementary:95,Maslov:03}.

  Finally, we hope that our proof can be simplified
  and our techniques generalized. In particular, we have relied
  on repeated comparisons of various Cartan  decompositions
  to each other.
  A careful study of the proof will reveal the simultaneous use
  of six Cartan decompositions --- those corresponding to conjugation
  by $\X$ and $\Z$ on each of three wires. Keeping track of these
  decompositions in a more systematic manner may simplify the proof,
  while using additional decompositions may lead to new results.
  A related challenge is gauging the power of the qubit-by-qubit
  gate counting we have used.  It follows from the results of \cite{Shende:qsd:05}
  that $|U|_{\CZ;\ell} < 6(n-1)$ for $U$ an $n$-qubit operator, and hence
  no technique relying solely on this process can achieve better than
  a quadratic lower bound. On the other hand, we have only been able
  to characterize cases when $|U|_{\CZ;\ell} > 2$, and thus
  have achieved only linear lower bounds.

  \vspace{2mm} \noindent {\bf Acknowledgements.}
  We thank Mikko Mottonen, Jun Zhang, K. Birgitta Whaley, and
  Yaoyun Shi for helpful discussions.
  This work was sponsored in part by
  the Air Force Research Laboratory
  under Agreement No. FA8750-05-1-0282.

    \section*{Appendix: Proof of Proposition
  \ref{prop:noninvariance}}

   Below we restate Proposition \ref{prop:noninvariance} and complete its proof.
  \setcounter{theorem}{5}
  \addtocounter{theorem}{-1}
  \begin{proposition}
    Fix qubits $\ell_1 \ldots \ell_k$ among $N > k$ qubits.
    A unitary $U$ commuting with $\Z^{(\ell_1)}, \ldots,
    \Z^{(\ell_k)}$
    can be implemented by a \CZ-circuit in which only diagonal
    gates operate on qubits $\ell_i$ if and only if
    $\det_{\ell_1 \ldots \ell_k}(U)$
    is separable (can be implemented by one-qubit gates).
  \end{proposition}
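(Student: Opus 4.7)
The plan is to prove the $(\Leftarrow)$ direction by induction on $k$, using the demultiplexing toolkit of Section~\ref{sec:cartan}. The base case $k=0$ is vacuous, since any unitary admits a $\CZ$-circuit (e.g., by iterated cosine-sine decomposition \cite{Shende:qsd:05}) and no qubits are constrained. For the inductive step, I first absorb the separable partial determinant: write $\det_{\ell_1 \ldots \ell_k}(U) = d_1 \otimes \cdots \otimes d_k$ with each $d_i$ a one-qubit diagonal on $\ell_i$, choose one-qubit diagonals $d'_i$ on $\ell_i$ with $(d'_i)^{2^{N-k}} = d_i$, and set $U'' := U \cdot (d'_1 \otimes \cdots \otimes d'_k)^{-1}$. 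Then $\det U''_{j_1 \ldots j_k} = 1$ for every bitstring; since the $d'_i$ are themselves diagonal gates on the $\ell_i$, it suffices to implement $U''$ suitably and then append the $d'_i$ at the end.

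Next, apply the demultiplexing decomposition (Equation~\ref{eq:demux}) along $\ell_1$ to write $U'' = (I^{(\ell_1)} \otimes V_L) \cdot D \cdot (I^{(\ell_1)} \otimes V_R)$, where $D$ is a multiplexed $R_z$ on $\ell_1$ (hence fully diagonal) and $V_L, V_R$ act on the remaining $N-1$ qubits. Because $U''$ commutes with $\Z^{(\ell_i)}$ for each $i \ge 2$, the diagonalization of $(U''_0)^\dagger U''_1$ underlying the demultiplexing may be carried out block-by-block along $\ell_2, \ldots, \ell_k$, ensuring that $V_L$ and $V_R$ also commute with every such $\Z^{(\ell_i)}$. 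I then invoke the gauge freedom $V_L \mapsto V_L G$, $D \mapsto G^{-1} D h$, $V_R \mapsto h^{-1} V_R$, with $G, h$ arbitrary diagonal unitaries on the non-$\ell_1$ qubits, to force $\det(V_L)_{j_2 \ldots j_k} = \det(V_R)_{j_2 \ldots j_k} = 1$ for every $(k-1)$-bitstring. The relation $\det U''_{j_1 \ldots j_k} = 1$ then makes $\det D_{j_1, j_2 \ldots j_k} = 1$ automatically, which is consistent because any multiplexed $R_z$ satisfies $\det D_0 \cdot \det D_1 = 1$ blockwise.

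With $V_L$ and $V_R$ now having trivially-separable partial determinants (identically $1$) along $\ell_2, \ldots, \ell_k$, the inductive hypothesis on the $N-1$ non-$\ell_1$ qubits (which still satisfies $N-1 > k-1$) produces $\CZ$-circuits for $V_L$ and $V_R$ in which only diagonal gates appear on $\ell_2, \ldots, \ell_k$; these circuits touch $\ell_1$ not at all. The factor $D$ is fully diagonal and so admits a $\CZ$-circuit consisting only of $\CZ$s and one-qubit diagonals on every wire (via the constructions around Equations~\ref{eq:demux:diag}--\ref{eq:demux:rz}, or directly \cite{Bullock:diag:04}), placing only diagonal gates on each $\ell_i$. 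Concatenating these pieces with the absorbed $d'_i$ yields a $\CZ$-circuit for $U$ in which only diagonal gates operate on every $\ell_i$, as required.

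The main obstacle I anticipate is the gauge-fixing step: I must verify that the diagonal gauges $G, h$ have enough free parameters --- essentially one adjustable phase per block along $\ell_2, \ldots, \ell_k$, supplied by the $2^{N-k}$ diagonal entries of $G$ in each such block --- to simultaneously trivialize both $\det V_L$ and $\det V_R$ blockwise, and that the induced constraint on $\det D_{j_1, j_2 \ldots j_k}$ is compatible with $D$ being a multiplexed $R_z$. These counts rely essentially on the hypothesis $N > k$; without it, the blocks of $V_L, V_R$ would degenerate into scalars whose determinants cannot be independently tuned. The algebra is routine but somewhat delicate, and it echoes the partial-determinant computations already used in the forward direction.
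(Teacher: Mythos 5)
Your reduction is mostly sound up to its last step --- absorbing the separable partial determinant by $2^{N-k}$-th roots, performing the demultiplexing of Equation \ref{eq:demux} blockwise so that $V_L,V_R$ commute with $\Z^{(\ell_2)},\ldots,\Z^{(\ell_k)}$, and gauge-fixing the block determinants all work, and the parameter-counting worry you flag is not the real difficulty. The genuine gap is the claim that the diagonal factor ``admits a \CZ-circuit consisting only of \CZ s and one-qubit diagonals on every wire.'' That is false: the group generated by \CZ\ gates and one-qubit diagonal gates contains only diagonal operators whose phase function is linear plus $\pi$-integer multiples of quadratic monomials, so for instance \CCZ, or the multiplexed-$R_z$ factor $D$ your construction produces when $U=\CCZ^{(1,2,3)}\otimes I^{(4)}$ with $\ell_1,\ell_2,\ell_3=1,2,3$ (a degree-three phase, a doubly-controlled $S^{\pm}$), cannot be so implemented. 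The constructions you cite (Equations \ref{eq:demux:diag}--\ref{eq:demux:rz}, \cite{Bullock:diag:04}) realize diagonals by \CNOT{}$+R_z$ circuits; in \CZ\ form this puts Hadamards on the rotation-target wire, and for $D=R_z^{(\ell_1)}(\delta)$ that target is $\ell_1$ itself, a constrained qubit. If you instead re-target all rotations onto the free qubits, you are left with a residual diagonal supported entirely on $\ell_1\ldots\ell_k$; the gauge-fixed determinant condition only forces its entries to be $2^{N-k}$-th roots of unity (e.g.\ it can be \CCZ\ on three constrained qubits), and implementing such a residual with only diagonal gates on the $\ell_i$ --- which genuinely requires non-diagonal activity on the free qubits --- is precisely an instance of the proposition you are proving. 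So your induction reduces the general case to the diagonal case but leaves the diagonal case, which is where the content lies.

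By contrast, the paper's proof of $(\Leftarrow)$ avoids this by a single multiplexing step: after normalizing so that every block $\tilde U_{j_1\ldots j_k}\in\SUnitary(2^{N-k})$, it takes one fixed-topology universal circuit on the $N-k$ free qubits (\CZ s plus $R_x,R_y,R_z$ with adjustable angles) and replaces each rotation by the corresponding multiplexed rotation over $\ell_1\ldots\ell_k$; each multiplexed rotation is targeted on a free qubit and hence has a \CZ-circuit with no one-qubit gates on any $\ell_i$ at all. Your scheme could likely be repaired by treating the leftover diagonal factor with exactly this multiplexed-universal-circuit idea (a universal diagonal circuit on the free qubits suffices there), but as written that step is unsupported and in fact false as stated. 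Finally, note that you never argue the $(\Rightarrow)$ direction; it is a routine computation on a generating set, as in the paper, but the statement is an ``if and only if,'' so it should at least be recorded.
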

  \begin{proof}
    ($\Rightarrow$).
    It suffices to show the separability of
    $\det_{\ell_1 \ldots \ell_k}(U)$ for a small generating set
    of operators.
    Direct calculation confirms this for (i) \CZ\ gates,
    (ii) diagonal one-qubit gates on the $\ell_i$, and
    (iii) any gate not affecting qubits $\ell_i$.

    ($\Leftarrow$).
    By hypothesis, $\det_{\ell_1 \ldots \ell_k}(U)$, and hence
    ${\mathcal D}=\det_{\ell_1 \ldots \ell_k}(U)^{-2^{k-N}}$,
    can be implemented
    using only one-qubit diagonal gates.  It remains
    to implement
    $\tilde{U} = U / {\mathcal D}$, which satisfies
    the normalization
    $\tilde{U}_{j_1 \ldots j_k} \in \SUnitary(2^{N-k})$.
    We will construct a circuit for $\tilde{U}$ by multiplexing
    circuits for $\tilde{U}_{j_1 \ldots j_k}$.
    Let $\mathcal{C}$ be a $(N-k)$-qubit circuit
    containing only \CZ s and one-qubit $R_x, R_y, R_z$ gates
    such that any operator in $\SUnitary(2^{N-k})$
    can be implemented by making the appropriate choice of
    parameter for the $R_x, R_y, R_z$ gates.
    Such universal circuits
    exist \cite{Barenco:elementary:95};
    see Section \ref{sec:cartan}
    for modern constructions.
    Choose specifications $\mathcal{C}_{j_1 \ldots j_k}$ implementing the
    $\tilde{U}_{j_i \ldots j_k}$; let the $s$-th rotation
    gate in $\mathcal{C}_{j_1 \ldots j_k}$ be given by
    $R_{d(s)}(\theta_{j_1 \ldots j_k}(s))^{(q(s))}$, where
    $q(s)$ is a qubit,
    $\theta_{j_1 \ldots j_k}(s)$ is an angle, and $d(s) = x, y, z$.
    Define $\Theta(s)$
    to be the real diagonal operator on qubits $\ell_i \ldots \ell_k$
    such that $\Theta(s)_{j_i \ldots j_k} = \theta_{j_1 \ldots j_k}(s)$.
    Form the $N$-qubit circuit $\tilde{\mathcal{C}}$ by replacing
    the $s$-th rotation gate of
    $\mathcal{C}$ by
    the multiplexed rotation $R_{d(s)}(\Theta(s))^{(q(s))}$;
    then $\tilde{\mathcal{C}}$ implements $\tilde{U}$.
    Implement $R_{d(s)}(\Theta(s))^{(q(s))}$ by
    a \CZ-circuit containing
    no one-qubit operator on any qubit save
    $q(s)$, which is not one of the $\ell_i$ (see
    \cite{Mottonen:csd:04} or Section \ref{sec:cartan}).
  \end{proof}

  \setcounter{theorem}{34}
  \begin{corollary}
    $N$-qubit operators
    which commute with $\Z$ on $k$ qubits can be implemented using on
    the order of $2^k 4^{N-k}$ one-qubit and \CZ\ gates.\footnote{
    Dimension-counting following \cite{Knill:95} shows that roughly
    this many are necessary for almost all such operators.}
  \end{corollary}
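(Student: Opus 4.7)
The plan is to adapt the construction given in the $(\Leftarrow)$ direction of Proposition~\ref{prop:noninvariance}, tracking gate counts at each step. Fix the qubits $\ell_1, \ldots, \ell_k$ on which $U$ commutes with $\Z$, and decompose $U = \tilde{U} \cdot \mathcal{D}$ where $\mathcal{D} = \det_{\ell_1 \ldots \ell_k}(U)^{-2^{k-N}}$ is a $k$-qubit diagonal operator acting on qubits $\ell_1 \ldots \ell_k$, and $\tilde{U}$ has diagonal blocks $\tilde{U}_{j_1 \ldots j_k} \in \SUnitary(2^{N-k})$.

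First I would implement $\mathcal{D}$ as a $k$-qubit diagonal. By the result of \cite{Bullock:diag:04} recalled in Section~\ref{sec:cartan}, any $k$-qubit diagonal admits a \CZ -circuit with at most $2^k - 2$ \CZ\ gates and $O(2^k)$ one-qubit diagonals. This term is absorbed by the final bound.

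Next I would implement $\tilde{U}$ by the multiplexing construction used in the proof of Proposition~\ref{prop:noninvariance}. Pick a universal $(N-k)$-qubit parameterized circuit $\mathcal{C}$ built from \CZ\ gates and one-qubit rotations $R_x, R_y, R_z$, chosen so that every operator in $\SUnitary(2^{N-k})$ arises for some choice of angles; standard constructions (for instance the iterated cosine-sine decomposition, see Section~\ref{sec:cartan} and \cite{Shende:qsd:05,Mottonen:csd:04}) produce such a $\mathcal{C}$ with $O(4^{N-k})$ \CZ\ and rotation gates. For each rotation slot $s$ in $\mathcal{C}$, let $\Theta(s)$ be the real diagonal operator on the $\ell_i$ whose $(j_1,\ldots,j_k)$-entry is the angle used in the circuit for $\tilde{U}_{j_1\ldots j_k}$, and replace the gate $R_{d(s)}(\theta)^{(q(s))}$ by the multiplexed rotation $R_{d(s)}(\Theta(s))^{(q(s))}$. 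This yields a circuit $\tilde{\mathcal{C}}$ computing $\tilde{U}$.

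Finally I would count. The \CZ\ gates of $\mathcal{C}$ that do not touch any $\ell_i$ lift unchanged, contributing $O(4^{N-k})$ gates total. Each multiplexed rotation $R_{d(s)}(\Theta(s))^{(q(s))}$ is a $(k{+}1)$-qubit operator which, by the recursive demultiplexing construction of Section~\ref{sec:cartan} (cf.\ Equation \ref{eq:demux:rz}), admits a \CZ -circuit with $O(2^k)$ gates. Since there are $O(4^{N-k})$ such slots, the total cost of the multiplexed rotations is $O(2^k \cdot 4^{N-k})$. Adding the $O(2^k)$ gates for $\mathcal{D}$ and the $O(4^{N-k})$ gates for the lifted plain \CZ s, the dominant term is $O(2^k 4^{N-k})$, as claimed. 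The only mild subtlety is verifying that the universal base circuit $\mathcal{C}$ can be taken to have $O(4^{N-k})$ rotation slots (rather than, say, $O(4^{N-k})$ \CZ s and asymptotically more rotations); this is explicit in the constructions cited, so no additional work is required.
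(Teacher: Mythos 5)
Your proof is correct and follows essentially the same route as the paper: the paper's own (one-line) argument simply invokes the multiplexing construction from Proposition \ref{prop:noninvariance} together with the known gate-count estimates for the universal $(N-k)$-qubit circuit, the $k$-select multiplexed rotations \cite{Mottonen:csd:04}, and the $k$-qubit diagonal \cite{Bullock:diag:04}, which is exactly the bookkeeping you carry out explicitly (including the correct observation that $\mathcal{D}$ need not be separable here and is instead implemented with $O(2^k)$ gates).
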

  \begin{proof}
    This follows from the construction in the proof of Proposition
    \ref{prop:noninvariance} and the known estimates
    in the cases $k = 0, N-1$ \cite{Mottonen:csd:04}
    and $k=N$ \cite{Bullock:diag:04}.
  \end{proof}


\begin{thebibliography}{99}

  \bibitem{Barenco:elementary:95} A. Barenco, C. Bennett, R. Cleve,
    D.~P. DiVincenzo, N. Margolus, P. Shor, T. Sleator, J.~A. Smolin,
    and H. Weinfurter, \newblock Elementary gates for quantum
    computation, \newblock {\em PRA} {\bf 52}, 3457 (1995).

  \bibitem{Bullock:kgd} S. S. Bullock, A note on the Khaneja-Glaser
   decomposition, {\em QIC} {\bf 4}.5, 396-400 (2004).

 \bibitem{Bullock:diag:04}
   S. S. Bullock, I. L. Markov, Asymptotically Optimal Circuits
   for Arbitrary n-qubit Diagonal Computations, {\em QIC} {\bf 4}.1,
   027-047 (2004).

  \bibitem{DiVincenzo:94} D.~P. DiVincenzo and J.~A. Smolin,
    Results on two-bit gate design for quantum computers,
    {\em Proc. of the Workshop on Physics and Computation} (1994).

  \bibitem{DiVincenzo:98} D.P. DiVincenzo, Quantum gates and circuits.
         {\em Proc. R. Soc. Lond. A}, 454:261–276, (1998).

  \bibitem{Margolus:94} N. Margolus, Simple quantum gates,
    Unpublished manuscript (circa 1994).

  \bibitem{Khaneja:01} N. Khaneja and S.~J. Glaser, Cartan
    decomposition of SU(n) and control of spin systems, {\em
    Chem. Physics} {\bf 267}, 11-23 (2001).

  \bibitem{Knapp:98}
       A. W. Knapp, {\em Lie Groups Beyond an Introduction},
       Progress in Mathematics, vol. 140, {\em Birkh\"auser}, 1996.

  \bibitem{Knill:95} E. Knill, \newblock Approximation by quantum
    circuits, LANL report LAUR-95-2225.

  \bibitem{Lewenstein:00} M. Lewenstein, B. Kraus, J. I. Cirac, and
    P. Horodecki, Optimization of entanglement witnesses, {\em PRA}
    {\bf 62}, 052310 (2000).

  \bibitem{Makhlin:00} Yu. G. Makhlin, Nonlocal properties of
   two-qubit gates and mixed states and optimizations of quantum
   computations, {\em QIP} {\bf 1}, 243-252 (2002).

  \bibitem{Maslov:03} D. Maslov and G.~W. Dueck.
    {\em IEE Electronics Letters} {\bf 39}.25, 1790-1791 (2003).

  \bibitem{Mottonen:csd:04} M.~M\"ott\"onen, J.~J. Vartiainen,
    V.~Bergholm, and M.~M. Salomaa, \newblock Quantum circuits for
    general multiqubit gates, \newblock {\em PRL} {\bf 93}, 130502 (2004).

  \bibitem{Nielsen:00} M.~A. Nielsen and I.~L. Chuang, \newblock {\em
      Quantum Computation and Quantum Information}, \newblock
    Cambridge University Press (2000).

  \bibitem{Paige:94} C.~C. Paige and M. Wei, \newblock History and
    generality of the CS decomposition, \newblock {\em Linear Algebra
      and Applications} {\bf 208}, 303 (1994).

  \bibitem{Shende:2q:04} V.~V. Shende, I.~L. Markov, and
    S.~S. Bullock, \newblock CNOT-optimal circuits for generic
    two-qubit operators, {\em PRA} {\bf 69}, 062321 (2004).

  \bibitem{Shende:cnotcount:04} V.~V. Shende, S.~S. Bullock, and
    I.~L. Markov, \newblock Recognizing small-circuit structure in
    two-qubit operators.  \newblock {\em PRA} {\bf 70}, 012310 (2004).

  \bibitem{Shende:qsd:05} V.~V. Shende, S.~S. Bullock, and
    I.~L. Markov, \newblock Synthesis of quantum logic circuits, {\em
      IEEE.  Trans. on CAD} {\bf 25}, 1000 (2006).

  \bibitem{Song:2q:02} G. Song and A. Klappenecker, Optimal
    realizations of controlled unitary gates, {\em QIC} {\bf 3},
    139-155 (2003).

  \bibitem{Song:margolus:03} G. Song and A. Klappenecker, The
    simplified Toffoli gate implementation by Margolus is optimal,
    {\em QIC} {\bf 4}, 361-372 (2004).

  \bibitem{Toffoli:80}
    T. Toffoli, Reversible Computing,
    {\em MIT Technical Report} MIT/LCS/TM-151 (1980).

  \bibitem{Vatan:2q:04} F. Vatan and C. Williams, Optimal quantum
    circuits for general two-qubit gates, {\em PRA} {\bf 69}, 032315
    (2004).

  \bibitem{Vidal:2q:04} G. Vidal and C.~M. Dawson, \newblock A
    universal quantum circuit for two-qubit transformations with three
    CNOT gates, \newblock {\em PRA} {\bf 69}, 010301 (2004).

  \bibitem{Zhang:2qgeom:02} J. Zhang, J. Vala, K. B. Whaley, and
    S. Sastry, A geometric theory of non-local two-qubit operations,
    {\em PRA} {\bf 67}, 042313 (2003).

  \end{thebibliography}
\end{document}